\newcommand{\commentout}[1]{}
\newcommand{\nwc}{\newcommand}
\nwc{\ba}{\begin{array}}
\nwc{\bal}{\begin{align}}
\nwc{\bea}{\begin{eqnarray}}
\nwc{\beq}{\begin{eqnarray}}
\nwc{\bean}{\begin{eqnarray*}}
\nwc{\beqn}{\begin{eqnarray*}}
\nwc{\beqast}{\begin{eqnarray*}}
\nwc{\ea}{\end{array}}
\nwc{\eal}{\end{align}}
\nwc{\eea}{\end{eqnarray}}
\nwc{\eeq}{\end{eqnarray}}
\nwc{\eean}{\end{eqnarray*}}
\nwc{\eeqn}{\end{eqnarray*}}
\nwc{\eeqast}{\end{eqnarray*}}
\nwc{\ep}{\varepsilon}
\nwc{\ept}{\epsilon}
\newtheorem{proposition}{Proposition}
\newtheorem{corollary}{Corollary}
\newtheorem{theorem}{Theorem}
\newtheorem{lemma}{Lemma}
\newtheorem{remark}{Remark}
\nwc{\nn}{\nonumber}
\newcommand{\NN}{\mathbb{N}}
\newcommand{\CC}{\mathbb{C}}
\newcommand{\RR}{\mathbb{R}}
\newcommand{\ZZ}{\mathbb{Z}}
\newcommand{\TT}{\mathbb{T}}
\newcommand{\om}{\omega}
\newcommand{\rank}{\text{Rank\,}}
\newcommand{\supp}{\mathcal{S}}
\nwc{\calP}{\mathcal{P}}
\nwc{\calL}{\mathcal{L}}
\nwc{\xmax}{x_{\text{max}}}
\nwc{\xmin}{x_{\text{min}}}
\nwc{\yep}{y^\ep}
\nwc{\range}{{\rm Range}}
\nwc{\calE}{\mathcal{E}}
\nwc{\calN}{\mathcal{N}}
\nwc{\smax}{\sigma_{\rm max}}
\nwc{\smin}{\sigma_{\rm min}}
\nwc{\re}{{\rm Re}}
\nwc{\im}{{\rm Im}}
\nwc{\bv}{{\rm v}}
\nwc{\bw}{{\rm w}}
\nwc{\PO}{\calP_1}
\nwc{\PT}{\calP_2}
\nwc{\PEO}{\calP^\ep_1}
\nwc{\PET}{\calP^\ep_2}
\nwc{\JE}{J^\ep}
\nwc{\RE}{R^\ep}
\nwc{\bc}{{\rm c}}
\nwc{\si}{\sigma}
\nwc{\se}{\sigma^\ep}
\nwc{\UE}{U^\ep}
\nwc{\VE}{V^\ep}
\nwc{\SE}{\Sigma^\ep}
\nwc{\Phit}{ \Phi^{-\frac L 2\rightarrow \frac L 2}}
\nwc{\lan}{\langle}
\nwc{\ran}{\rangle}
\nwc{\HH}{\mathcal{H}}
\nwc{\HHE}{\mathcal{H}^\ep}
\nwc{\EE}{\mathcal{E}}
\nwc{\SI}{\Sigma}
\nwc{\QE}{Q^\ep}
\nwc{\XP}{X(\Phi^{M-L})^T}
\nwc{\plo}{\phi^L(\om)}
\title{MUSIC for Single-Snapshot Spectral Estimation:  Stability and Super-resolution}
\author{Wenjing Liao
\thanks{Statistical and Applied Mathematical Sciences Institute (SAMSI) and Department of Mathematics, Duke University, Durham, NC. Wenjing Liao is grateful to support from NSF DMS 0847388 and SAMSI under grant NSF DMS-1127914. Email: wjliao@math.duke.edu. }
\and
Albert Fannjiang
\thanks{Department of Mathematics, University of California, Davis, CA. Email:  fannjiang@math.ucdavis.edu. }
}
\begin{document}

\maketitle

\begin{abstract}
This paper studies  the  problem of  line spectral estimation in the continuum of a bounded interval with {\em one snapshot} of array measurement. The single-snapshot measurement data is turned into a Hankel data matrix which admits the Vandermonde decomposition and is suitable for the MUSIC algorithm. The MUSIC algorithm amounts to finding the null space (the noise space) of the Hankel matrix, forming the noise-space correlation function and identifying the $s$ smallest local minima  of the noise-space correlation as the frequency set.  

In the noise-free case {\em exact} reconstruction is guaranteed for any arbitrary set of  frequencies as long as the number of measurement data is at least twice the number of distinct frequencies to be recovered. In the presence of noise the stability analysis shows that  the perturbation of the noise-space correlation is proportional to the spectral norm of the noise matrix as long as the latter is smaller than the smallest (nonzero) singular value of the {\em noiseless} Hankel data matrix.  Under the assumption that the true frequencies are separated by at least twice the Rayleigh Length (RL), the stability of the noise-space correlation is proved by means of novel discrete Ingham inequalities which provide bounds on  the largest and smallest nonzero singular values of the { noiseless} Hankel data matrix. 

The numerical performance of MUSIC is tested in comparison with
other algorithms such as BLO-OMP and SDP (TV-min). While BLO-OMP is the stablest algorithm for frequencies separated above 4 RL, MUSIC becomes the best performing one for frequencies separated between 2 RL and 3 RL. Also, MUSIC is more efficient than  other methods. MUSIC truly shines when the frequency separation drops to 1 RL or below when all other methods fail. Indeed, the resolution length of MUSIC  decreases to zero as noise decreases to zero as a power 
law with an exponent much smaller than an upper bound established  by Donoho. 

\end{abstract}

{\bf Keywords:} MUSIC algorithm, single-snapshot spectral estimation, stability, super-resolution,  discrete Ingham inequalities.

\section{Introduction}

The field of Compressive Sensing (CS) \cite{DonCS} has provided us with a new technology of reconstructing a signal from a small number of linear measurements. With a new exceptions, signals considered in the compressive sensing community are assumed to be sparse under a discrete, finite-dimensional dictionary. 

However, signals arising in applications such as radar \cite{CheneyRadar}, sonar and remote sensing \cite{FSY} are represented by few parameters on a continuous domain. These signals are usually not sparse under any discrete dictionary but can be approximately sparsely represented by indicator functions on a discrete domain. An approximation error, called gridding error \cite{FL,DB} or basis mismatch \cite{FL1,stro3,Chi} exists, manifesting the gap between the continuous world and the discrete world. This issue is well illustrated by the spectral estimation problem \cite{SAS} as follows.
%which considers the problem of determining the frequency contents of a signal from a finite set of samples.

Suppose a signal $y(t)$ consists of linear combinations of $s$ time-harmonic components from the set 
$$\{e^{-2\pi i \om_j t} : \om_j \in \RR, \ j = 1,\ldots,s\}.$$

Consider the noisy signal model
\beq
\label{model}
y^\ep(t) = y(t) + \ep(t), \quad y(t) = \sum_{j=1}^s x_j e^{-2\pi i \om_j t}
\eeq
where $\ep(t)$ is the external noise. 

The task of spectral estimation is to find out the frequency support set $\supp =	\{\om_1, . . . , \om_s\}$ 	and	the	corresponding	amplitudes	$x	=	[x_1, . . . , x_s]^T$	from	a	finite data sampled at, say, $t=0,1,2, \cdots, M\in \NN$. Because the signal $y(t)$ depends nonlinearly
on $\supp$, the main difficulty  of spectral estimation lies in identifying $\supp$. The amplitudes $x$ can be recovered by solving least squares once $\supp$ is found.

More explicitly, denote (with a slight abuse of notation)  $y = [y_k]_{k=0}^{M},$ $\ep = [\ep_k]_{k=0}^{M}$
and $y^\ep = y + \ep \in \CC^{M+1}$,  with $y_k = y(k)$, $\yep_k = \yep(k)$ and $\ep_k = \ep(k)$.
Let 
\beq
\label{imagingvector}
\phi^{M}(\om) = [1  \ e^{-2\pi i \om} \ e^{-2\pi i 2\om} \ \ldots \ \ e^{-2\pi i M\om}]^T \in \CC^{M+1}
\eeq 
be  the imaging vector of size $M+1$ at the frequency $\om$ and
define 
$$\Phi^{M}= [\phi^{M}(\om_1)\  \phi^{M}(\om_2) \ \ldots \ \phi^{M}(\om_s)] \in \CC^{(M+1)\times s }.$$ 
The single-snapshot  formulation of spectral estimation takes the form 
\beq
y^\ep = \Phi^{M}  x + \ep. 
\label{linearsystem}
\eeq
Again the main difficulty is in the (nonlinear) dependence of $\Phi^M$ on the unknown frequencies
in $\supp$.  
In addition, with the sampling times $t=0,1,2, \cdots, M\in \NN$, one can only hope to determine frequencies
on  the torus $\TT=[0,1)$ with the natural metric 
\[
d(\om_j,\om_l) = \min_{n\in \ZZ} |\om_j+n-\om_l|.
\]

One can attempt to linearize (\ref{linearsystem}) by expanding the matrix $\Phi^M$
via  setting up a grid 
\beq
\label{eq4}
\mathcal{G} = \left\{\frac 0  N , \frac 1 N, \ldots, \frac{N-1}{N}\right\} 
\subset [0,1),
\eeq
 where $N$ is some large integer, and 
writing the spectral estimation problem  in the form  a linear inversion problem
\beq
y^\ep = A  x +\ep
\label{linearsystem1}
\eeq
where
$$A := \left[\phi^{M}\left(\frac 0 N\right)\ \  \phi^{M}\left(\frac 1 N\right) \ \ \ldots \ \ \phi^{M}\left(\frac{N-1}{N}\right)\right] \in \CC^{(M+1)\times N}$$

Discretizing $[0,1)$ as in (\ref{eq4}) amounts to  rounding frequencies on the continuum  to the nearest grid points in $\mathcal{G}$, giving rise to a gridding error which is roughly proportional to the grid spacing. On the other hand, as $N$ increases, correlation among adjacent columns of $A$ also increases dramatically \cite{FL}.

%This gridding problem can be intuitively understood in terms of 
A key unit of frequency separation  is the Rayleigh Length, roughly the minimum resolvable separation of two objects with equal intensities in classical resolution theory \cite{resolutionsurvey,Donoho92}.  
%Roughly speaking RL is  the minimum resolvable separation of two objects with equal intensity.  
Mathematically, the Rayleigh Length (RL) is the distance between the center and the first zero of the Dirichlet kernel
$$D(\om) = \int_{ -M/2}^{M/2} e^{2\pi i t\om}dt = \frac{\sin{\pi \om M}}{\pi \om}.$$
Hence 1 RL $= 1/M$. 

The ratio $F=N/M$ between RL and the grid spacing  is called  the refinement factor in \cite{FL} and super-resolution factor in \cite{Csr1}. The higher $F$ is, the more coherent the measurement matrix $A$ becomes. 

\subsection{Single-snapshot MUSIC}
In this paper, to circumvent the gridding problem, we  reformulate the spectral estimation problem (\ref{linearsystem})  in the form of {\em multiple measurement vectors} that is suitable for the application of
the MUltiple Signal Classification (MUSIC) algorithm \cite{Sch,SchD}, widely used in signal processing\cite{T82,KJR81,KV96} and array imaging \cite{Cheney,Devaney, Kirsch}. 
\commentout{The MUSIC algorithm was introduced by Schmidt \cite{Sch,SchD} and many extensions  exist including S-MUSIC\cite{SMUSIC}, IES-MUSIC\cite{IESMUSIC}, R-MUSIC\cite{RMUSIC} and RAP-MUSIC\cite{RAPMUSIC}. According to Wikipedia, in a detailed evaluation based on thousands of simulations, M.I.T.'s Lincoln Laboratory concluded that, among currently accepted high-resolution algorithms, MUSIC was the most promising and a leading candidate for further study and actual hardware implementation.} 

Most state-of-the-art spectral estimation methods (\cite{SAS} and references therein)
 assume many snapshots of array measurement as well as 
 statistical assumptions on  measurement noise. In contrast, we pursue below a deterministic approach to spectral estimation with a single snapshot of array measurement in common with \cite{DNN}. 

Fixing a positive integer $1\leq L < M$, we form the Hankel matrix 
\beq
\label{hankel}
H = {\rm Hankel}(y) =
\begin{bmatrix}
y_0 & y_1 & \ldots & y_{M-L}\\
y_1 & y_2 & \ldots & y_{M-L+1}\\
\vdots & \vdots & \vdots & \vdots\\
y_{L} & y_{L+1} & \ldots & y_{M}\\
\end{bmatrix}.
\eeq
Since its first appearance in Prony's method \cite{Prony} the Hankel data matrix (\ref{hankel}) plays an important role in modern methods such as  the state space method \cite{Rao1,Rao2} and the matrix pencil method \cite{MatrixPencil2}. 
 
It is straightforward to verify that  ${\rm Hankel}(y)$ with $y=\Phi^M x$  admits  the Vandermonde decomposition  
\beq
\label{van}
H = \Phi^L X (\Phi^{M-L})^T, \quad X = {\rm diag}(x_1,\ldots,x_s)
\eeq
with the Vandermonde matrix
$$\Phi^L = \begin{bmatrix}
1 & 1 & \ldots & 1 \\
e^{-2\pi i\om_1} & e^{-2\pi i\om_2}& \ldots & e^{-2\pi i\om_s}\\
(e^{-2\pi i\om_1})^2 & (e^{-2\pi i\om_2})^2& \ldots & (e^{-2\pi i\om_s})^2\\
\vdots &\vdots& \vdots & \vdots\\
(e^{-2\pi i\om_1})^{L} & (e^{-2\pi i\om_2})^{L}& \ldots & (e^{-2\pi i\om_s})^{L}\\
\end{bmatrix}.$$
Here we use a  special property of Fourier measurements: a time translation corresponds to a frequency phase modulation. 

Let $H^{\ep} = {\rm Hankel}(\yep)$ and $E = {\rm Hankel}(\ep)$.
The multiple measurement vector formulation of spectral estimation takes the form
\beq
H^{\ep} = H + E = \Phi^L X (\Phi^{M-L})^T + E.
\label{eq5}
\eeq

The crux of MUSIC is this:  In the noiseless case with $L\geq s$ and $M-L+1\geq s$ 
the ranges of $H$ and $\Phi^L$
 coincide and are a proper subspace (the signal space) of $\CC^{L+1}$. 
Let the noise space be the orthogonal complement of the signal space in $\CC^{L+1}$. 
Then $\supp$ can be identified as the zero set of the orthogonal projection of the imaging
vector $\phi^L(\om)$ of size $L+1$ onto the noise space. 

More specifically, let  the Singular Value Decomposition (SVD) of $H$ be written as
$$ H = [\underbrace{U_1}_{(L+1) \times s} \ \underbrace{U_2}_{(L+1) \times (L+1-s)}] \ \underbrace{\text{diag}(\sigma_1,\sigma_2,\ldots,\sigma_s,0,\ldots,0)}_{(L+1) \times (M-L+1)} \ [\underbrace{V_1}_{(M-L+1) \times s} \ \underbrace{V_2}_{(M-L+1) \times (M-L+1-s)}]^\star$$
with the singular values $\sigma_1\geq \sigma_2\geq \sigma_3 \geq \cdots\sigma_s>0.$ %where the singular values $\sigma_j$ are listed in the descending  order of magnitude. 
The signal and noise spaces are exactly the column spaces of  $U_1$ and $U_2$ respectively.  

The orthogonal projection $\calP_2$ onto the noise space is given by $\calP_2 \bw = U_2 (U_2^\star \bw), \,\,\forall \bw \in \CC^{L+1}$. 
Under mild assumptions one can prove that $\om \in \supp$ if and only if $\calP_2\phi^L(\om) = \mathbf{0}$. Hence  $\supp$  can be identified as the zeros of the noise-space correlation function
$$
R(\om) = \frac{\|\PT\phi^L(\om)\|_2}{   \|\phi^L(\om)\|_2} =  \frac{\|U_2^\star\phi^L(\om)\|_2}{\|\phi^L(\om)\|_2},
$$
or the peaks of the imaging function
$$
J(\om) = \frac{\|\phi^L(\om)\|_2}{\|\calP_2\phi^L(\om)\|_2} =  \frac{\|\phi^L(\om)\|_2}{\|U_2^\star\phi^L(\om)\|_2} .
$$

The following fact is the basis  for noiseless MUSIC (See Appendix \ref{secmusic} for proof).
\begin{theorem}
\label{thm1}
Suppose $\om_k \neq \om_l\ \forall k \neq l$. If 
\beq
\label{exact}
L \ge s,\quad M- L+1 \ge s, 
\eeq
then
\begin{center}
$\om \in \supp \Longleftrightarrow R(\om) = 0 \Longleftrightarrow J(\om) = \infty$.
\end{center}
\end{theorem}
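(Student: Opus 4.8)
The plan is to collapse the three-way equivalence to a single geometric statement: $\om \in \supp$ if and only if the imaging vector $\phi^L(\om)$ lies in the range of $H$, which is exactly the signal space $\range(U_1)$. The equivalences $R(\om)=0 \iff J(\om)=\infty$, and their equivalence with $\PT\phi^L(\om)=\mathbf{0}$, are immediate from the definitions: since $\|\phi^L(\om)\|_2=\sqrt{L+1}>0$, both $R(\om)=0$ and $J(\om)=\infty$ hold precisely when the noise-space projection $\PT\phi^L(\om)=U_2U_2^\star\phi^L(\om)$ vanishes. In turn $\PT\phi^L(\om)=\mathbf{0}$ means $\phi^L(\om)\perp\range(U_2)$, i.e. $\phi^L(\om)\in\range(U_2)^\perp=\range(U_1)=\range(H)$. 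So everything reduces to proving $\om\in\supp \iff \phi^L(\om)\in\range(H)$.

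First I would identify $\range(H)$ explicitly from the Vandermonde decomposition (\ref{van}), $H=\Phi^L X(\Phi^{M-L})^T$. Under (\ref{exact}) the outer factors have full rank: the $(L+1)\times s$ matrix $\Phi^L$ is Vandermonde with distinct nodes $e^{-2\pi i\om_j}$ and at least $s$ rows, hence has full column rank $s$; likewise $(\Phi^{M-L})^T$ is $s\times(M-L+1)$ of full row rank $s$ since $M-L+1\ge s$. As $X$ is diagonal with nonzero amplitudes $x_j$ it is invertible. Then $(\Phi^{M-L})^T$ maps $\CC^{M-L+1}$ onto $\CC^s$, $X$ is a bijection of $\CC^s$, and $\Phi^L$ carries $\CC^s$ onto its column span; chaining these gives $\range(H)=\range(\Phi^L)=\mathrm{span}\{\phi^L(\om_1),\ldots,\phi^L(\om_s)\}$. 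This also confirms $\rank(H)=s$, so the SVD has exactly $s$ nonzero singular values as written.

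It then remains to show $\phi^L(\om)\in\range(\Phi^L)\iff\om\in\supp$. The direction $\Leftarrow$ is trivial, as $\phi^L(\om_j)$ is literally the $j$-th column of $\Phi^L$. For $\Rightarrow$ I would argue by contradiction: if $\om\notin\supp$, then $\om,\om_1,\ldots,\om_s$ are $s+1$ distinct frequencies on $\TT$, so the $(L+1)\times(s+1)$ matrix $[\phi^L(\om)\ \phi^L(\om_1)\ \cdots\ \phi^L(\om_s)]$ is Vandermonde with distinct nodes and, because $L\ge s$ forces $L+1\ge s+1$ rows, has full column rank $s+1$. Hence $\phi^L(\om)$ is linearly independent of $\{\phi^L(\om_j)\}_{j=1}^s$, contradicting $\phi^L(\om)\in\range(\Phi^L)$. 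This is precisely where the hypothesis $L\ge s$ is essential, just as $M-L+1\ge s$ was essential above to pin down $\range(H)$.

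The argument is essentially routine linear algebra, so I do not expect a genuine obstacle; the only point demanding care is bookkeeping which half of (\ref{exact}) is used where — $M-L+1\ge s$ to make $(\Phi^{M-L})^T$ surject and fix $\range(H)=\range(\Phi^L)$, and $L\ge s$ to give the $(s+1)$-node Vandermonde matrix full column rank so that no spurious frequency can sneak into the signal space. I would flag the one silent assumption, namely that all amplitudes satisfy $x_j\neq 0$ (so that $X$ is invertible), since without it $\rank(H)$ could drop and the stated equivalence could fail.
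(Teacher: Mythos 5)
Your proposal is correct and follows essentially the same route as the paper's own proof: identify $\range(H)=\range(\Phi^L)$ via the Vandermonde decomposition (using $M-L+1\ge s$ and invertibility of $X$), and then rule out spurious frequencies via the full column rank of the extended Vandermonde matrix $[\Phi^L \ \phi^L(\om)]$ (using $L\ge s$). Your explicit flagging of the hidden assumption $x_j\neq 0$ is a fair point --- the paper uses it silently when asserting that $X(\Phi^{M-L})^T$ has full row rank --- but otherwise the two arguments coincide.
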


\begin{remark}
Condition (\ref{exact}) says that the number of measurement data $(M+1) \ge 2s$ 
suffices to guarantee exact reconstruction by the MUSIC algorithm. \end{remark}

For the noisy data matrix 
$H^\ep$ let  the SVD  be written as   
$$H^\ep = [\underbrace{U^\ep_1}_{(L+1) \times s} \ \underbrace{U^\ep_2}_{(L+1)\times (L+1-s)}] \ \underbrace{\text{diag}(\sigma^\ep_1,\sigma^\ep_2,\ldots,\sigma^\ep_s,\sigma^\ep_{s+1},\ldots)}_{(L+1) \times (M-L+1)} \ [\underbrace{V^\ep_1}_{(M-L+1) \times s} \ \underbrace{V^\ep_2}_{(M-L+1) \times (M-L+1-s)}]^\star$$
with the singular values $\sigma^\ep_1\geq \sigma^\ep_2\geq \sigma^\ep_3 \geq \cdots.$ 
The noise-space correlation function and imaging function become 
$$
\RE(\om) = \frac{\|\PET\phi^L(\om)\|_2}{   \|\phi^L(\om)\|_2} =  \frac{\|{U_2^\ep}^\star\phi^L(\om)\|_2}{\|\phi^L(\om)\|_2} 
$$
and
$$ \JE(\om) = \frac{1}{{\RE}(\om)} =  \frac{   \|\phi^L(\om)\|_2}{\|\PET\phi^L(\om)\|_2} =  \frac{\|\phi^L(\om)\|_2} {\|{U_2^\ep}^\star\phi^L(\om)\|_2},
$$
respectively with $\PET=U^\ep_2(U^\ep_2)^\star$.

The MUSIC algorithm is given by
\begin{center}
   \begin{tabular}{|l|}\hline
    %\centerline
    { \centerline{\bf MUSIC for Spectral Estimation}} \\ \hline
    {\bf Input:} $y^\ep \in \CC^{M+1}, s, L$. \\
     1) Form matrix $H^\ep = {\rm Hankel}(y^\ep) \in \CC^{(L+1)\times(M-L+1)}$.
     \\
     2) SVD: $H^\ep = [U_1^\ep\  U_2^\ep] {\rm diag}(\si_1^\ep , \ldots , \si_s^\ep ,\ldots) [V_1^\ep\ V_2^\ep]^\star $, where $U_1^\ep \in \CC^{(L+1)\times s}$.\\
     3) Compute imaging function $J^\ep(\om) = \|\phi^{L}(\om)\|_2 /\|{U_2^\ep}^\star \phi^L(\om)\|_2$. \\
   {\bf Output:} $\hat \supp =\{ \om \text{ corresponding to } s \text{ largest local maxima of } J^\ep(\om) \} $.\\
    \hline
   \end{tabular}
\end{center}

Figure \ref{figm1} shows a noise-space correlation function and an imaging function in the noise-free case. True frequencies are exactly located where the noise-space correlation function vanishes and the imaging function peaks.

\begin{figure}[hthp]
        \centering
       \subfigure[Noise-space correlation $R(\om)$]{\includegraphics[width=6cm]{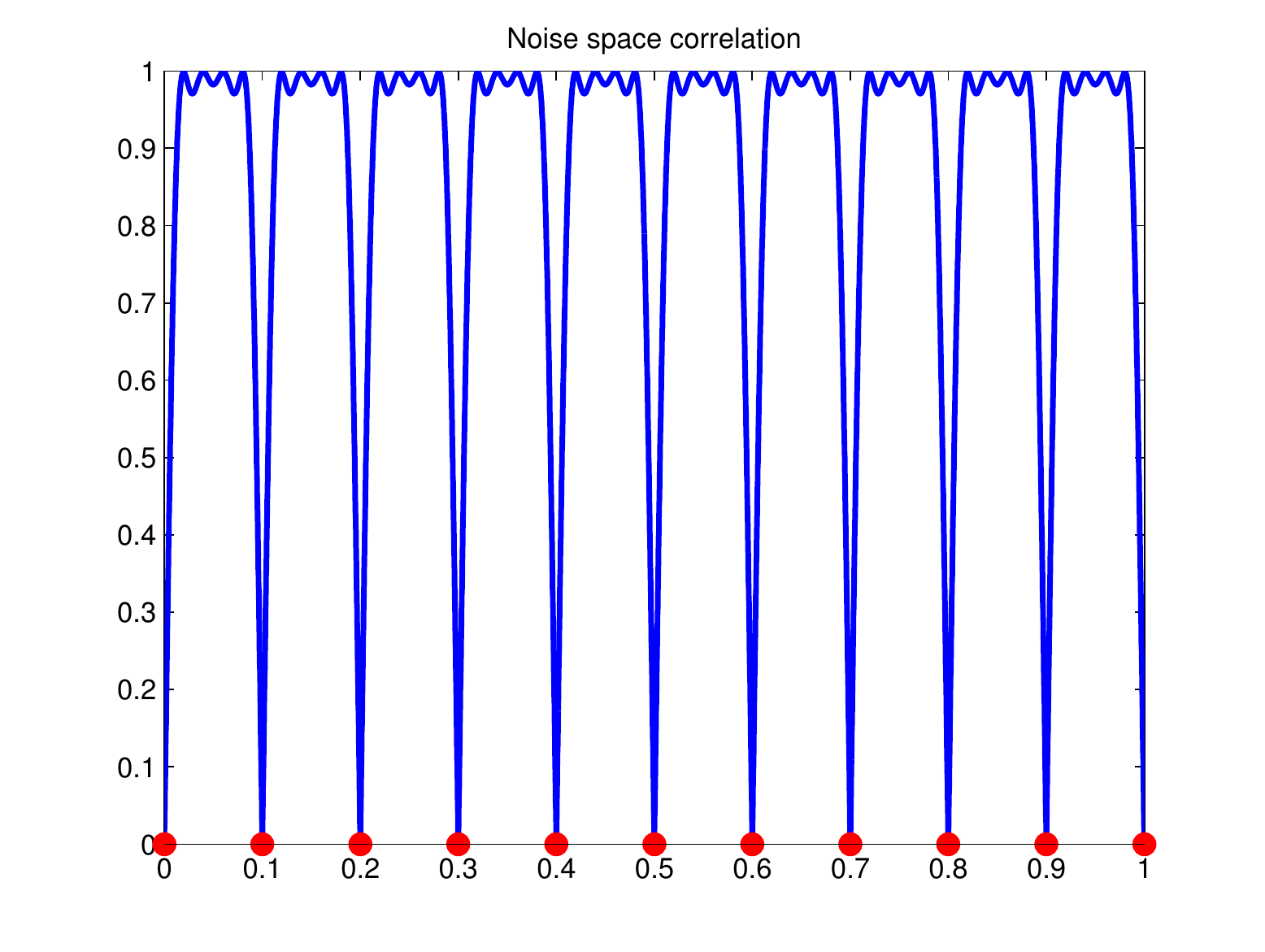}}
        \subfigure[Imaging function $J(\om)$]{\includegraphics[width=6cm]{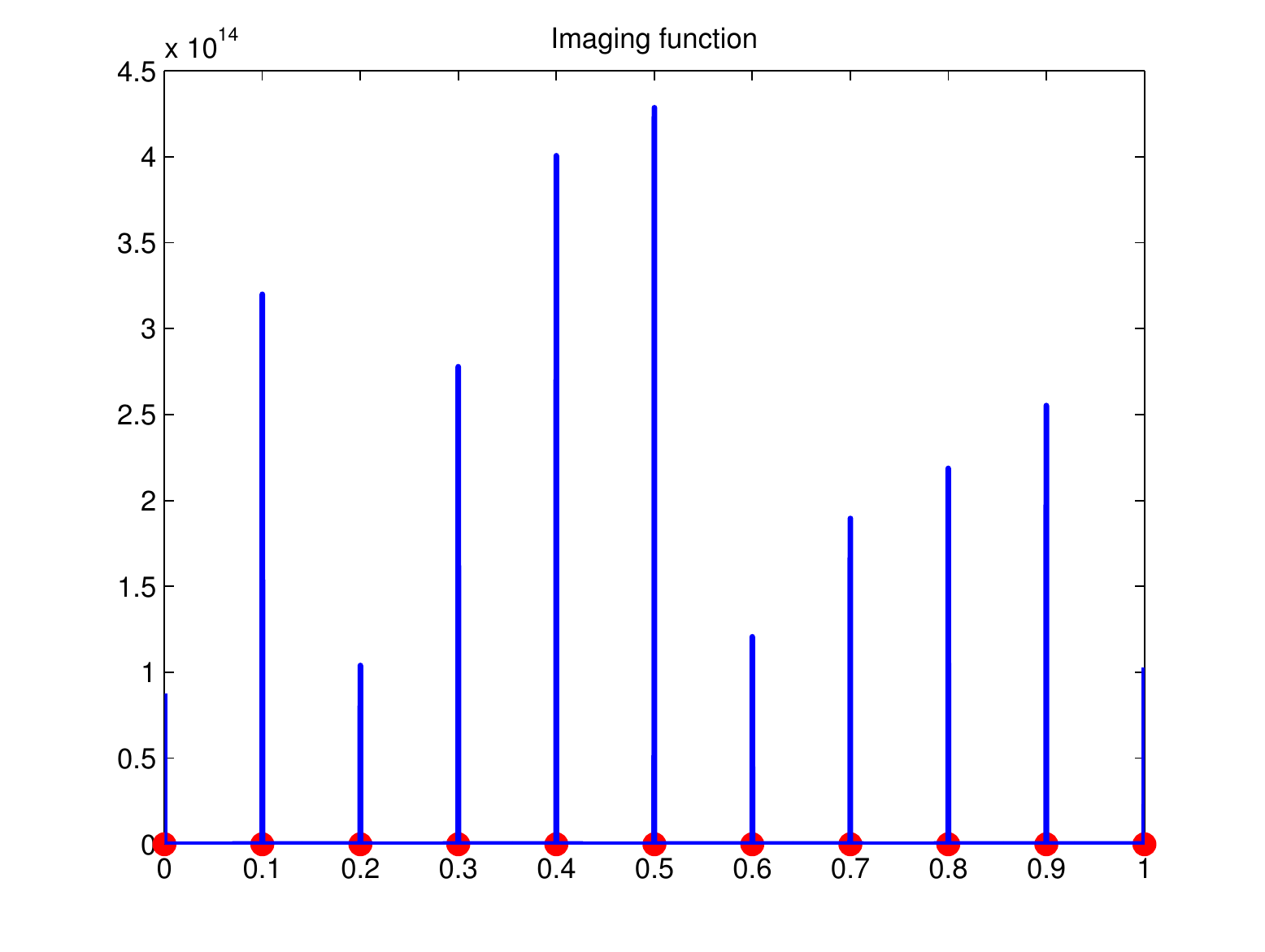}}
\caption{Plots of $R(\om)$ and $J(\om)$ when $M = 100, L = 50$ and there are $10$ equally spaced objects on $\TT$ represented by red dots.}
              \label{figm1}
\end{figure}

The MUSIC algorithm as formulated above requires the number of frequencies $s$ as an input. There are some techniques \cite{ests1,ests2} for evaluating how many objects are present in the event that such information is not available.   %Most technique is based on looking at the singular value distribution and making decisions on where to cut singular values to discriminate the signal space and the noise space. We will discuss this problem in Section \ref{secdis}.
When $\si_s \gg 2\|E\|_2$, $s$ can be easily estimated based on the singular value distribution of $H^\ep$ due to Weyl's theorem \cite{Weyl}. 
\begin{proposition} [Weyl's Theorem]
\label{propweyl}
$|\si^{\ep}_j -\si_j| \le \|E\|_2, \ j = 1,2,\ldots$
\end{proposition}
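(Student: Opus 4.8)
The plan is to deduce Weyl's theorem from the Courant--Fischer min-max (variational) characterization of singular values, which turns the perturbation bound into a one-line application of the triangle inequality. Recall that $H$ and $H^\ep=H+E$ are both $(L+1)\times(M-L+1)$ matrices (see (\ref{eq5})), so they share the common domain $\CC^{M-L+1}$; this is what makes a term-by-term comparison of their singular values meaningful. I would first record the characterization
\[
\sigma_j=\max_{\dim V=j}\ \min_{x\in V,\,\|x\|_2=1}\|Hx\|_2,\qquad
\sigma^\ep_j=\max_{\dim V=j}\ \min_{x\in V,\,\|x\|_2=1}\|H^\ep x\|_2,
\]
where $V$ ranges over subspaces of $\CC^{M-L+1}$ of the indicated dimension and singular values past the rank are set to zero. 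This is standard, but if a self-contained derivation is wanted it can be obtained by applying the eigenvalue Courant--Fischer theorem to the Hermitian matrix $H^\star H$, whose nonzero eigenvalues are the $\sigma_j^2$.

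The core step is the pointwise bound: for every unit vector $x$,
\[
\|H^\ep x\|_2=\|(H+E)x\|_2\ \ge\ \|Hx\|_2-\|Ex\|_2\ \ge\ \|Hx\|_2-\|E\|_2 .
\]
Fixing a subspace $V$ and minimizing over unit $x\in V$ preserves this inequality, so $\min_{x\in V}\|H^\ep x\|_2\ge \min_{x\in V}\|Hx\|_2-\|E\|_2$. Choosing $V$ to be the maximizer in the characterization of $\sigma_j$ and then bounding $\sigma^\ep_j$ from below by the min over that particular $V$ yields $\sigma^\ep_j\ge\sigma_j-\|E\|_2$. Interchanging the roles of $H$ and $H^\ep$ via $H=H^\ep+(-E)$, and noting $\|-E\|_2=\|E\|_2$, gives the reverse inequality $\sigma_j\ge\sigma^\ep_j-\|E\|_2$. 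Combining the two produces $|\sigma^\ep_j-\sigma_j|\le\|E\|_2$ for every $j$, which is the claim.

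There is no substantial obstacle here; the result is classical and the argument is short. The only points demanding care are bookkeeping ones: making sure both matrices are viewed as operators on the same space $\CC^{M-L+1}$ so that the index $j$ refers to comparable objects, adopting the convention that $\sigma_j=0$ once $j$ exceeds $\min(L+1,M-L+1)$ so the inequality holds for all $j$ as stated, and keeping the direction of the triangle inequality consistent through the symmetric swap. An alternative, equally short route would pass through the Hermitian dilation $\left[\begin{smallmatrix}0 & H\\ H^\star & 0\end{smallmatrix}\right]$, whose nonzero eigenvalues are $\pm\sigma_j$, thereby reducing the statement directly to Weyl's eigenvalue inequality for Hermitian matrices; I would mention this as the conceptual reason the singular-value and eigenvalue versions of Weyl's theorem coincide.
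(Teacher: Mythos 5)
Your proof is correct, but there is nothing in the paper to compare it against: the paper states Proposition \ref{propweyl} as a classical fact, labels it ``Weyl's Theorem,'' and simply cites the literature (the reference \cite{Weyl}) without giving any argument. What you have done is supply the standard proof that the paper takes for granted, and you have done so accurately. The Courant--Fischer route is exactly the textbook derivation: the pointwise triangle-inequality bound $\|(H+E)x\|_2 \ge \|Hx\|_2 - \|E\|_2$ survives the inner minimization over a fixed subspace $V \subset \CC^{M-L+1}$ and the outer maximization over $\dim V = j$, giving $\sigma^\ep_j \ge \sigma_j - \|E\|_2$, and the symmetric swap $H = H^\ep + (-E)$ gives the other direction. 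Your bookkeeping points are also the right ones to flag: both matrices act on the same space $\CC^{M-L+1}$, so the index $j$ compares like with like, and the zero-padding convention makes the statement meaningful for all $j$ as written (which matters here, since the paper uses precisely the tail bound $\sigma^\ep_j \le \|E\|_2$ for $j \ge s+1$, where $\sigma_j = 0$). The alternative you mention via the Hermitian dilation $\left[\begin{smallmatrix}0 & H\\ H^\star & 0\end{smallmatrix}\right]$ is equally valid and is the conceptually cleanest way to see that the singular-value statement is literally Weyl's eigenvalue inequality in disguise; either argument would serve as a self-contained appendix proof if the paper wanted one.
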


As a result, $\si^\ep_j \le \|E\|_2, \ \forall  j\ge s+1$ and $\si^\ep_s \ge \si_s-\|E\|_2$. Hence $\se_s \gg \se_{s+1}$, creating a gap between $\sigma_s^\ep$ and
 $\{\si^\ep_j: j\geq s+1\}$. An example is shown in Figure \ref{fig3}.

Before describing our main results, we pause to define notations to be used in the subsequent sections.
For  an $m\times n$ matrix $A$, let $\smax(A)$ and $\smin(A)$ denote  the maximum and minimum nonzero singular values of $A$, respectively. Denote the spectral norm, Frobenius norm and nuclear norm of $A$ by $\|A\|_2$, $\|A\|_F$ and $\|A\|_\star$. 
%For $x,y\in \CC^N$, let $\lan x,y\ran  = x^\star y$ be the inner product between $x$ and $y$ and $\|x\|_p = (\sum_{j=1}^N |x_j|^p)^{\frac 1 p}$. 
Let $\xmax = \max_{j=1,\ldots,s} |x_j|$ and $\xmin = \min_{j=1,\ldots,s} |x_j|$. The dynamic range of $x$ is defined as $\xmax/\xmin$. Fixing $\supp = \{\om_1,\ldots,\om_s\}\subset \TT$, we define the matrix $\Phi^{N_1 \rightarrow N_2}$ such that $$\Phi^{N_1 \rightarrow N_2}_{kj} = e^{-2\pi i k \om_j}, \ k = N_1,\ldots,N_2,\ j = 1,\ldots,s.$$
For simplicity, we denote $\Phi^M = \Phi^{0 \rightarrow M}.$
\subsection{Contribution of the present work}\label{sec1.2}
The main contribution of the paper is a stability analysis for the MUSIC algorithm with respect to general support set $\supp$ and external noise. 

In the MUSIC algorithm frequency candidates are identified at the $s$ smallest local minima of the noise-space correlation which measures how much an imaging vector is correlated with the noise space. In noise-free case, the noise-space correlation function $R(\om)$ vanishes exactly on $\supp$. For the noisy case we prove
\beq
\label{eqp'}
|R^\ep(\om) -R(\om)| \le \alpha \|E\|_2,\quad \alpha=\frac{4\si_1+2\|E\|_2}{(\si_s-\|E\|_2)^2}
\eeq
which holds for any support set  $\supp \subset\TT$.

%We also show that  for every true frequency $\om_j$, when $\|E\|_2$ is sufficiently small, there exists a strict local minimizer $\hat\om_j$ of $\RE(\om)$ in the neighborhood of $\om_j$ and $\hat\om_j \rightarrow \om_j$ as $\|E\|_2 \rightarrow 0$.

\commentout{
sufficiently small $\|E\|_2$ there exist local minimizers $\{\hat\om_j: j=1,\cdots, s\}$  of
$R^\ep$
 such that  
\beq
\label{eqfre'} 
|\hat\om_j-\om_j| \le
C\alpha \|E\|_2
\eeq
for some $C$ depending on $L$ and the shape of $R^2(\om)$ around $\om_j$. 
}

To make the bounds (\ref{eqp'}) explicit and more meaningful, we
prove  the discrete
Ingham inequalities (Corollary  \ref{cor2}) which implies 
\beq
\label{smin'}
{\sigma_s^2\over {L(M-L)}}&\ge & \xmin^2 
\left( \frac 2 \pi - \frac{2}{\pi L^2 q^2} - \frac 4 L\right)
\left( \frac 2 \pi - \frac{2}{\pi (M-L)^2 q^2} - \frac {4}{M-L}\right)\\
{  \sigma_1^2\over {L(M-L)}}& \le  & \xmax^2 
\left(\frac{4\sqrt 2}{\pi } + \frac{\sqrt 2}{\pi L^2 q^2} + \frac{3\sqrt 2}{L}\right)
\left(\frac{4\sqrt 2}{\pi } + \frac{\sqrt 2}{\pi (M-L)^2 q^2} + \frac{3\sqrt 2}{M-L}\right) \label{smax'}  \eeq
 under the gap assumption 
\beq
\label{eq9}
q=\min_{j\neq l}d(\om_j,\om_l)
 >\max\left(
\frac{1}{L}\sqrt{\frac{2}{\pi} }\left(\frac 2 \pi - \frac 4 L\right)^{-\frac 1 2},
\frac{1}{M-L}\sqrt{\frac{2}{\pi}} \left(\frac 2 \pi - \frac {4} {M-L}\right)^{-\frac 1 2}
\right) . 
  \eeq

Furthermore, we prove that for every $\om_j\in\supp$, there exists a local minimizer $\hat\om_j$
of $R^\ep$ such that $\hat\om_j\to \om_j$ as noise decreases to $0$.

To relax the restriction on the minimum separation between adjacent frequencies, condition (\ref{eq9}) suggests that $L$ should be about $M/2$ and then the resolving power of the present form of MUSIC is as good as  $2/M=$ 2 RL. 

By the results of  \cite{Adam}, the spectral norm of the random Hankel
 matrix $E$ constructed from a zero mean,  independently and identically distributed (i.i.d.)  sequence of a finite variance is on the order of $\sqrt{M \log M}$ for $M\gg 1$ while $\sigma_s$ is on the order
 of $M$ (with $L\approx M/2$). 
In this case the factor $\alpha$ in (\ref{eqp'}) is almost always positive
for sufficiently large $M$ regardless of the variance of noise and 
\[
|\RE(\om)-R(\om)| =\mathcal{O}(M^{-1/2})
\]
up to a logarithmic factor.

Also the super-resolution effect of MUSIC is studied. When the minimum separation between frequencies drops below 1 RL, we show that the noise level that MUSIC can tolerate obeys a power law with respect to the minimum separation with an exponent smaller than an estimate established by Donoho.

Our analysis can be easily extended to other settings where the MUSIC algorithm can be applied, such as the estimation of Directions of Arrivals (DOA) \cite{KV96} and inverse scattering \cite{Cheney,Devaney, Kirsch}.

\subsection{Comparison with other works}\label{sec:1.2}

Among existing works, \cite{MUSIC} is  most closely related to the present work. Central to the results of \cite{MUSIC} is a stability criterion expressed in terms of the Noise-to-Signal Ratio (NSR) $\mathbb{E}(\|\ep\|_2)/\|y\|_2$,
the dynamic range and, when the objects are located exactly on a grid
of spacing $\geq$ 1 RL,  the restricted isometry constants from the theory of compressed sensing \cite{RIP}. The emphasis there is on {\em sparse}  (i.e. undersampling), and typically random,  measurement. 
For the gridless setting considered in the present work, the implications of the analysis in \cite{MUSIC} are not explicit  due to lack of the restricted isometry property for a well-separated set in the continuum. This barrier is overcome in the present work by the discrete Ingham inequalities and the resulting bounds on singular values.  

Other closely related work includes \cite{DNN} and \cite{YY}  where Vandermonde decomposition of the Hankel matrix \eqref{hankel} are used to design different algorithms. 

In \cite{DNN} Demanet {\em et al.} proposed an approach to spectral estimation with a selection step of the support set followed by a pruning step. In the selection step, any $\om$ satisfying $\sin\measuredangle(\plo,\range H^\ep)$ $ \le \eta$ for some judicious choice of $\eta>0$ is kept as a frequency candidate based on their estimate  
$$\sin\measuredangle(\phi^L(\om_j),\range H^\ep) \le C {s\|E\|_2\over \xmin  \smin(\Phi^{M-L})\|\phi^L(\om_j)\|_2},\quad \forall \om_j \in \supp$$
for some constant $C>0$. In comparison, our estimates (\ref{eqp'})-(\ref{smax'}) are more comprehensive as they apply to $\TT$ including $\supp$ and more explicit due to discrete Ingham inequalities. In addition,  the choice of the thresholding parameter $\eta$ can affect the performance of the algorithm in \cite{DNN} while MUSIC  does not contain any thresholding parameter. 

In \cite{YY} Chen and Chi exploited the low-rank property of the Hankel matrix $H$ and applied the matrix completion technique to recover a spectrally  sparse signal from its partial time-domain samples. The focus of \cite{YY} is on the completion and denoising of of data from the partial noisy samples while MUSIC is designed for frequency recovery. A combination of \cite{YY} and our work constitutes a new framework for single-snapshot spectral estimation with compressive noisy measurements which is to be discussed in Section \ref{secconc}.
 
As for frequency recovery, recent progresses center around greedy algorithms and Total Variation (TV) minimization. 

The challenge of applying greedy algorithms to \eqref{linearsystem1} while $N \gg M$ lies in the high coherence and ill conditioning of the sensing matrix $A$. In order to mitigate this effect, we exploited the coherence pattern of $A$ and introduced the techniques of Band exclusion and Local Optimization (BLO) to enhance  standard compressive sensing algorithms. The performance guarantee in \cite{FL} assumes $q\geq 3$ RL and ensures reconstruction of $\supp$ to the accuracy of $1$ RL.    

In \cite{Csr1,Csr2}, Cand\`es and Fernandez-Granda proposed TV minimization and showed that, under the assumption of $q\geq 4$ RL, the TV minimizer yields an $L_1$ reconstruction error linearly proportional to noise with a magnification factor proportional to $F^2$ where $F$ is the refinement/super-resolution factor. 
Inspired by this approach, Tang {\em et al.} \cite{Tang} developed an atomic norm (equivalent to the TV norm in 1D) minimization for the completion of $y$ from its partial samples and showed exact reconstruction in the noise-free case. Like \cite{MUSIC}, a main
emphasis in \cite{Tang} is on sparse measurements. Unfortunately, the effect of noise is not considered in \cite{Tang}. 
For numerical implementation  a SemiDefinite Programming (SDP) on the dual problem is solved in \cite{Tang,Csr2} where  numerical efficiency and stable retrieval of primal solutions may become a problem.

Historically, Prony was the first to address the problem of spectral estimation \cite{Prony}. Unfortunately, Prony's method is numerically unstable and numerous modifications were attempted to improve its numerical behavior. Approximate Prony Method (APM) proposed by Beylkin and Monz\'on in \cite{Beylkin} is a major breakthrough for function approximation by exponential sums. Specifically, Beylking and Monz\'on considered the following problem: given $2N+1$ values of function $f(t)$ on a uniform grid on $[0,1]$ and a target accuracy $\ep > 0$, they find the minimal number $s$ of complex weights $w_j$ and complex nodes $\gamma_j$ such that 
$$\left|
f\left(\frac{k}{2N} \right) - \sum_{j=1}^s w_j \gamma_j^k
\right|
\le \ep, \ \forall k, 0\le k \le 2N.$$
Many interesting examples were provided in \cite{Beylkin}. For  instance, the Bessel function $J_0(100\pi t)$ in $[0,1]$ is approximated by exponential sums of $28$ complex nodes with accuracy $\ep =10^{-10}$ by APM. 

In comparison  the spectral estimation problem  \eqref{model} is the  identification of  $\{\gamma_j\}$ from noisy data, instead of approximation of the signal. 
For spectral estimation with noisy data, APM's  stability may be questionable. The numerical examples of spectral estimation by APM in \cite{Potts} all have low  NSR $=\mathcal{O}(10^{-\delta})$ where $\delta\geq 4$. 
%For example, the Noise-to-Signal ratio (defined consistently with \eqref{eqnsr}) in Table 7.1  of \cite{Potts} is $0, 0, 2.5620\times 10^{-4},2.6954 \times 10^{-4},8.1017\times 10^{-4}$ from Line 1 to Line 5.
 In contrast  our simulations in Section \ref{secnum} are performed with NSR as large as $0.5$. Furthermore, the super-resolution effect of MUSIC is quantitatively documented in Section \ref{secsup} while it has not been reported in literature whether APM has the capability of localizing closely spaced frequencies.     

In terms of discrete Ingham inequalities, Theorem \ref{thm3} is the first result in which both the gap condition and the upper/lower bounds are explicitly given. In comparison semi-discrete Ingham inequalities in \cite[Lemma 3.1]{Semi,Potts} give
the correct scaling with respect to the size of the Vandermonde matrix $\Phi^L$ without  an explicit estimate for  the constants (cf. (\ref{sv}) and (\ref{sv2}) in Section \ref{seccon}). In other words  the previous Ingham inequalities affirm only that the matrix $\Phi^L$ has a finite condition number under certain gap condition of $\supp$ without an explicit estimate on the magnitude of the condition number.

%Progresses have also been made on quantitative characterizations of the support recovery in TV minimization \cite{Carlos} or TV regularization \cite{ACG,DP}. In general, the large recovered spikes in the solution of TV minimization are very close to true frequencies, and the mass of all the recovered spikes far from true frequencies is small. Therefore the band-excluded thresholding proposed in \cite{FL} is appropriate for support identification from the TV solution.

%when the minimum separation is below 1 RL.

%Classical analysis of the MUSIC algorithm \cite{SN} assumes that the noise matrix ${\rm Hankel}(\ep)$ defined as \eqref{hankel} is diagonal which is extremely restrictive in applications. 

\commentout{ 
 It is impossible to indefinitely reduce the gridding error while keeping
the side-lobes down. 
%maintaining  the incoherence condition \cite{DonElad} or restricted isometry property \cite{RIP} required for the performance guarantee of standard compressive sensing theory. 
On the one hand, to keep the side-lobes down, 
 the grid spacing needs to be at least
one Rayleigh length (RL), which is the reciprocal of the time window $ 1/M $.
On the other hand, to reduce the gridding error, a fine grid 
of spacing $\ell = {\rm RL}/F$ is needed, where $F>1$ is called  the refinement factor in \cite{FL} and super-resolution factor in \cite{Csr1}.  A large $F$ gives rise to an  underdetermined and highly coherent system. 
}

Detailed numerical comparisons of the MUSIC algorithm with Band-excluded Locally Optimized Orthogonal Matching Pursuit (BLOOMP) of \cite{FL}, SemiDefinite Programming (SDP) of \cite{Csr1,Csr2,Tang} and Matched filtering using prolates enhanced by the Band-excluded and Locally Optimized technique \cite{Armin} are presented in Section \ref{secnum}.

Since the SVD step is its primary computational  cost, MUSIC  has low computational complexity compared to other existing methods.
As we will also see, MUSIC is also among the most accurate algorithms.  Finally, MUSIC is the only algorithm that can resolve frequencies with complex amplitudes closely spaced below
1 RL. Indeed, the resolution of MUSIC  can be  arbitrarily small for sufficiently small noise.

\commentout{For the detection of well-separated objects, MUSIC and BLOOMP combine the advantages of strong stability and low computation complexity. SDP is also stable but suffers from long running time even when the problem size is small. In order to pick the right candidates from the SDP solution the band exclusion technique proposed in \cite{FL} is appropriate. BLO-based DPSS is simple but does not work for complex-valued objects. Furthermore, only the MUSIC algorithm has the perfect theory and numerical performance in the noise-free case and the capability of localizing closely spaced objects. 
}
 
The paper is organized as follows. We  estimate nonzero singular values of rectangular Vandermonde matrices with nodes on the unit disk in Section \ref{seccon}. Perturbation theory for MUSIC is presented in in Section \ref{secper} and  super-resolution effect of MUSIC is studied in Section \ref{secsup}. Numerical experiments are provided in Section \ref{secnum}. We finally conclude and discuss extensions of our current work in Section \ref{secconc}.

\section{Vandermonde matrices with nodes on the unit circle}
\label{seccon}

Performance of the MUSIC algorithm in the presence of noise is crucially dependent on $\si_1$ and $\si_s$, the maximum and minimum nonzero singular values of the noiseless Hankel data matrix. To pave a way for the stability analysis, we discuss singular values of the rectangular Vandermonde matrix $\Phi^L$ in this  section.
 
\commentout{
\subsection{Coherence pattern of $\Phi^L$}
The concept of mutual incoherence introduced by Donoho and Huo \cite{DonHuo} has been extensively used in compressive sensing as a measure of the capability of many CS algorithms to correctly detect sparse objects. Mutual coherence represents the maximum pairwise correlation between columns of the sensing matrix. Pairwise coherence between the $j$-th column and the $l$-th column of $\Phi^L$ is defined as
\beq
\label{co1}
\mu(\om_j,\om_l) = \frac{|\lan \phi^L(\om_j),\phi^L(\om_l)\ran |}{\|\phi^L(\om_j)\|_2 \cdot \|\phi^L(\om_l)\|_2}.
\eeq
With deterministically sampled data we can compute the pairwise coherence explicitly, i.e.,
\beq
\label{co2}
\mu(\om_j,\om_l) = \Big|\frac{\sin[\pi(L+1)d(\om_j,\om_l)]}{(L+1)\sin[\pi d(\om_j,\om_l)]}\Big|.
\eeq
We plot $\mu(\om_j,\om_l)$ as a function of $d(\om_j,\om_l)$ in Fig. \ref{fig0}. In fact the pairwise coherence is only dependent on the distance between two objects and it has a clear pattern: 
\begin{enumerate}
\item $\mu(\om_j,\om_l) = 0$ if $d(\om_j,\om_l) = k/(L+1), k = 1,2,\ldots$.
\item $\mu(\om_j,\om_l) \le 1/[2(L+1)d(\om_j,\om_l)]$ as $0 \le d(\om_j,\om_l) \le 1/2$.
\item $\mu(\om_j,\om_l) \rightarrow 1$ as $d(\om_j,\om_l) \rightarrow 0$.
\end{enumerate}
Overall if $d(\om_j,\om_l)$ is below $1/(L+1)$, $\phi^L(\om_j)$ and $\phi^L(\om_l)$ become highly correlated and the pairwise coherence $\mu(\om_j,\om_l)$ is large, yielding a bad conditioning  of $\Phi^L$. By utilizing the coherence pattern of $\Phi^L$, the authors proposed techniques of band exclusion and local optimization embedded in standard CS algorithms to approximately locate objects separated by $3$ RL. 

\begin{figure}[hthp]
        \centering
\includegraphics[width=7cm]{fig/PairwiseCoherenceL16.pdf}
\caption{Pairwise coherence $\mu(\om_j,\om_l)$ versus $d(\om_j,\om_l)$ when $L = 16$. $d(\om_j,\om_l)$ on x-axis is counted in the unit of $1/(L+1)$.}
\label{fig0}
\end{figure}

Coherence of matrix $\Phi^L$ is defined as the maximum pairwise coherence, i.e.,
$$\mu(\Phi^L) = \max_{j \neq l} \mu(\om_j,\om_l).$$
$\mu(\Phi^L)$ is small if and only if objects in $\{\om_j, j = 1, \ldots, s\}$ are pairwise separated. 

According to the Gershgorin circle theorem, we can derive a coarse estimation of nonzero singular values of $\Phi^L$ through the coherence $\mu(\Phi^L)$:  
\beq
1-(s-1)\mu(\Phi^L)\le \frac{1}{L+1}\si^2(\Phi^L)\le 1+(s-1)\mu(\Phi^L).
\label{sinvalue1}
\eeq
However, the bound in \eqref{sinvalue1} is suboptimal and too coarse to be used in practice. A sharper bound is provided as follows.
}

%\subsection{Singular values of $\Phi^L$}

Our estimate is motivated by the classical Ingham inequalities \cite[(pp.162-164)]{Ingham,Young} for non-harmonic Fourier series whose exponents satisfy a gap condition. Specifically Ingham inequalities address the stability problem of complex exponential sums in the system $\{e^{2\pi i \om_j t}, \ t \in [-T/2,T/2],\ \om_j \in \RR,\ j = 1,\ldots,s\}.$ 

\begin{proposition}
\label{propingham}
Let $s \in \NN$ and $T>0$ be given. If the ordered frequencies $\{\om_1,\ldots,\om_s\}$ fulfill the gap condition 
\beq
\om_{j+1}-\om_j \ge q > 1/T, \ j = 1,\ldots, s-1,
\label{seping}
\eeq
then the system of complex exponentials $\{e^{-2\pi i\om_j t}, \ t \in [-T/2,T/2], \ j = 1,\ldots,s\}$ form a Riesz basis of its span in $L^2[-T/2,T/2]$, i.e., 
\beq
\label{ing}
\frac{2}{\pi}\Big(1-\frac{1}{T^2 q^2}\Big)\|{\bc}\|^2_2 
\le  \frac{1}{T}\int_{-T/2}^{T/2} \Big|\sum_{j=1}^s {\bc}_j e^{-2\pi i \om_j t}\Big|^2 dt
 \le 
 \frac{4\sqrt 2}{\pi}\Big(1+\frac{1}{4T^2 q^2}\Big)\|{\bc}\|_2^2
 \eeq
for all complex vectors ${\bc} = ({\bc}_j)_{j=1}^s \in \CC^s$. 
\end{proposition}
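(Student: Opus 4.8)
The plan is to prove both inequalities by Ingham's classical kernel method. Writing $f(t)=\sum_{j=1}^s\bc_j e^{-2\pi i\om_j t}$, one compares $\int_{-T/2}^{T/2}|f(t)|^2\,dt$ with a weighted integral $\int_\RR |f(t)|^2 g(t)\,dt$ for a judiciously chosen kernel $g\in L^1(\RR)$. Expanding the square and using $\widehat g(\nu)=\int_\RR g(t)e^{-2\pi i\nu t}\,dt$ gives
\[
\int_\RR |f(t)|^2 g(t)\,dt=\sum_{j,l}\bc_j\overline{\bc_l}\,\widehat g(\om_j-\om_l)=\widehat g(0)\,\|\bc\|_2^2+\sum_{j\neq l}\bc_j\overline{\bc_l}\,\widehat g(\om_j-\om_l),
\]
so in each direction the diagonal term produces the leading constant, while the gap condition $q>1/T$ is used to control the off-diagonal sum. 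Throughout one bounds $2|\bc_j||\bc_l|\le|\bc_j|^2+|\bc_l|^2$ and uses $|\om_j-\om_l|\ge q|j-l|$ for $j\neq l$.

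For the lower bound I would take $g(t)=\cos(\pi t/T)$ on $[-T/2,T/2]$ and $g\equiv 0$ outside. Since $0\le g\le 1$ there, $\int_{-T/2}^{T/2}|f|^2\ge\int_\RR|f|^2 g$. A direct computation gives $\widehat g(\nu)=\frac{2T\cos(\pi\nu T)}{\pi(1-4\nu^2T^2)}$, so the diagonal term is $\widehat g(0)\|\bc\|_2^2=\frac{2T}{\pi}\|\bc\|_2^2$, while for $|\nu|\ge q>1/T$ one has $|\widehat g(\nu)|\le \frac{1}{2\pi T(\nu^2-1/(4T^2))}$. The off-diagonal sum is then bounded by $\|\bc\|_2^2$ times $\frac1{\pi T}\sum_{m\ge1}\frac{1}{q^2m^2-1/(4T^2)}$. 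Here the hypothesis $q>1/T$ is exactly what yields $q^2m^2-\frac1{4T^2}>q^2(m^2-\tfrac14)$, so that the series is dominated by the telescoping sum $\frac1{q^2}\sum_{m\ge1}\big(\frac{1}{m-1/2}-\frac{1}{m+1/2}\big)=\frac{2}{q^2}$. This gives an off-diagonal bound of $\frac{2}{\pi Tq^2}\|\bc\|_2^2$, and subtracting from the diagonal and dividing by $T$ produces exactly $\frac2\pi\big(1-\frac1{T^2q^2}\big)\|\bc\|_2^2$.

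For the upper bound I would instead dominate the indicator, choosing a nonnegative kernel $g\ge \chi_{[-T/2,T/2]}$ so that $\int_{-T/2}^{T/2}|f|^2\le\int_\RR|f|^2 g$, and again split into diagonal and off-diagonal parts. The clean strategy is to take $g$ of Fej\'er type (a nonnegative band-limited function, e.g. $g=|h|^2$ with $\widehat h$ supported in $(-q/2,q/2)$) scaled so that $g\ge 1$ on $[-T/2,T/2]$; then $\widehat g$ is supported in $(-q,q)$ and, since $|\om_j-\om_l|\ge q$ for $j\neq l$, every off-diagonal term vanishes, leaving the bound $\widehat g(0)\|\bc\|_2^2$. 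Optimizing the kernel over this constrained class produces the leading factor $\frac{4\sqrt2}{\pi}$, and the residual dependence on $qT$ gives the correction $\big(1+\frac1{4T^2q^2}\big)$.

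The main obstacle is the sharp constant bookkeeping, concentrated in the upper-bound kernel: one must exhibit a majorant $g$ of $\chi_{[-T/2,T/2]}$ whose Fourier transform is concentrated enough relative to the gap $q$ to realize the exact constant $\frac{4\sqrt2}{\pi}$ together with the precise $q$-correction, rather than merely some finite bound; this is an extremal (Beurling--Selberg type) problem and is the delicate step. A secondary technical point is justifying the interchange of summation and integration for the almost-periodic integrand $|f|^2$, which is handled by absolute convergence once $g\in L^1(\RR)$ and $\widehat g$ decays; by contrast the lower-bound argument above is essentially self-contained.
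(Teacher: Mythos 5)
Your lower bound is correct and complete: the kernel $g(t)=\cos(\pi t/T)\chi_{[-T/2,T/2]}(t)$, the transform $\widehat g(\nu)=\frac{2T\cos(\pi\nu T)}{\pi(1-4\nu^2T^2)}$, the diagonal value $\widehat g(0)=2T/\pi$, and the telescoping estimate of the off-diagonal sum under $q>1/T$ yield exactly $\frac{2}{\pi}\bigl(1-\frac{1}{T^2q^2}\bigr)\|\bc\|_2^2$. This is Ingham's classical argument, and it is the continuous template of the paper's own proof of the discrete analogue (Theorem \ref{thm3}, Appendix \ref{app2}), where the same cosine kernel appears as $g(t)=\cos\pi(t-0.5)$ and the transform bound comes from Poisson summation. (Note that the proposition itself is quoted in the paper from Ingham/Young without proof, so the discrete proof is the right point of comparison. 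Also, since $s$ is finite, $f$ is a finite exponential sum and the interchange-of-sum-and-integral issue you raise does not arise.)

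The upper bound, however, has a genuine gap. You never exhibit a majorant; you only assert that optimizing over nonnegative majorants $g\ge\chi_{[-T/2,T/2]}$ with $\widehat g$ supported in $(-q,q)$ ``produces the leading factor $\frac{4\sqrt2}{\pi}$.'' That assertion is unsupported and, as stated, wrong: the extremal problem you describe is of Beurling--Selberg type, and its solution gives a bound of the shape $1+O\bigl(\frac{1}{Tq}\bigr)$, not the stated factor $\frac{4\sqrt2}{\pi}\bigl(1+\frac{1}{4T^2q^2}\bigr)$; the constant $\frac{4\sqrt2}{\pi}$ does not originate from any such extremal construction, and solving that extremal problem is far harder than what the proposition requires. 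The missing idea is an elementary doubling trick that reuses the kernel you already have. Since $\cos\frac{\pi t}{2T}\ge\frac{1}{\sqrt2}$ on $[-T/2,T/2]$ and is nonnegative on $[-T,T]$,
\[
\int_{-T/2}^{T/2}|f(t)|^2\,dt\;\le\;\sqrt2\int_{-T}^{T}|f(t)|^2\cos\frac{\pi t}{2T}\,dt ,
\]
and the right-hand side is your weighted integral for the doubled interval, i.e.\ your lower-bound kernel with $T$ replaced by $2T$, whose transform is $\frac{4T\cos(2\pi\nu T)}{\pi(1-16\nu^2T^2)}$. The diagonal term is $\frac{4T}{\pi}\|\bc\|_2^2$, and the same $2|\bc_j||\bc_l|\le|\bc_j|^2+|\bc_l|^2$ plus telescoping argument, now using $16q^2m^2T^2-1>16q^2T^2\bigl(m^2-\frac14\bigr)$ (valid precisely because $q>1/T$), bounds the off-diagonal part by $\frac{1}{\pi Tq^2}\|\bc\|_2^2$. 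Multiplying by $\sqrt2$ and dividing by $T$ gives exactly $\frac{4\sqrt2}{\pi}\bigl(1+\frac{1}{4T^2q^2}\bigr)\|\bc\|_2^2$. This is also precisely how the paper proves the upper half of its discrete inequality (Case 1 in Appendix \ref{app2}: substitute $L\to 2L$ and use $g(\tfrac14)=\tfrac{1}{\sqrt2}$), so the repair stays entirely within the method you set up for the lower bound.
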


\begin{remark}
Ingham inequalities can be considered as a generalization of the Parseval's identity for non-harmonic Fourier series. The gap condition is necessary for a positive lower bound in \eqref{ing} but the upper bound always holds. 
\end{remark}

%\begin{remark}
%We would like to point out that the bounds in \eqref{ing} are independent of $s$.
%\end{remark}

We prove a discrete version of Ingham inequalities. 

%%%%%%%%%%%%%%%%%%%%%%%%%%%%%%%%%%%
\begin{theorem}
\label{thm3}
 Suppose $\supp$ satisfies  the gap condition 
\beq
\label{sep}
q=\min_{j\neq l} d(\om_j,\om_l) > \frac 1 L\sqrt{\frac{2}{\pi }}\Big(\frac{2}{\pi} - \frac 4 L \Big)^{-\frac 1 2}.
\eeq 
When $L$ is an even integer,
\beq
 \Big(\frac{2}{\pi} - \frac{2}{\pi L^2 q^2}-\frac 4 L\Big)
 \|{\bc}\|_2^2
 \le \frac 1 L \|\Phi^L {\bc}\|_2^2 
 \le 
 \Big(\frac{4\sqrt 2}{\pi}  + \frac{\sqrt 2}{\pi L^2 q^2} + \frac{3\sqrt 2}{L}\Big)\|{\bc}\|_2^2, \ \forall {\bc} \in \CC^s.
 \label{sv}
\eeq
 In other words,
\beq
\label{smax}
\frac 1 L \smax^2(\Phi^L) \le   \frac{4\sqrt 2}{\pi}  + \frac{\sqrt 2}{\pi L^2 q^2} + \frac{3\sqrt 2}{L}
\eeq
and 
\beq
\label{smin}
\frac 1 L \smin^2(\Phi^L) \ge \frac{2}{\pi}- \frac{2}{\pi L^2 q^2}-\frac 4 L .
\eeq
When $L$ is an odd integer,
\beq
\label{sv2}
\left(\frac{2}{\pi}- \frac{2}{\pi L^2 q^2} -\frac 4 L\right) \|\bc\|_2^2
\le
\frac{1}{L}\|\Phi^L {\bc}\|_2^2
\le
\left(1+\frac 1 L\right)\left(\frac{4\sqrt 2}{\pi}  + \frac{\sqrt 2}{\pi (L+1)^2 q^2}+ \frac{3\sqrt 2}{L+1} \right)\|{\bc}\|_2^2, \ \forall {\bc} \in \CC^s.
\eeq
\end{theorem}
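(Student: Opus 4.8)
The plan is to mimic, at the discrete level, the proof of the classical Ingham inequalities (Proposition \ref{propingham}), replacing the integral over $[-T/2,T/2]$ by the sum over the $L+1$ sampling times, with $T=L$. I would treat the case of even $L$ first, where the samples can be symmetrized about the origin, and then reduce the odd case to it. Writing $F(t)=\sum_j \bc_j e^{-2\pi i\om_j t}$, we have $\|\Phi^L\bc\|_2^2=\sum_{k=0}^L|F(k)|^2$, and since the phase $e^{-2\pi i(L/2)\om_j}$ can be absorbed into $\bc_j$ without changing $\|\bc\|_2$ or $\|\Phi^L\bc\|_2$, for even $L$ we may center the samples and study $\sum_{k=-L/2}^{L/2}|F(k)|^2=\sum_{j,l}\bc_j\bar\bc_l G_{jl}$ with the Gram entries $G_{jl}=\sum_{k=-L/2}^{L/2}e^{-2\pi i(\om_j-\om_l)k}$, a Dirichlet kernel at the frequency difference. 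I emphasize that the naive route of replacing this sum by the integral $\int_{-L/2}^{L/2}$ and invoking Proposition \ref{propingham} as a black box \emph{fails}: the resulting Gram-matrix error decays only like $1/|\om_j-\om_l|$, so its row (Schur) sum is $q$-dependent of order $q^{-1}\log q^{-1}$ and cannot be absorbed into an $O(1/L)$ correction. This is exactly the obstruction that makes the Gershgorin/coherence bound too coarse.

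For the lower bound I would instead introduce nonnegative weights $w_k=\cos(\pi k/L)$, $|k|\le L/2$, which satisfy $0\le w_k\le 1$ and vanish at the endpoints, so that $\sum_{k=-L/2}^{L/2}|F(k)|^2\ge\sum_k w_k|F(k)|^2=\sum_{j,l}\bc_j\bar\bc_l\,W(\om_j-\om_l)$, where $W(\nu)=\sum_{k=-L/2}^{L/2}\cos(\pi k/L)e^{-2\pi i\nu k}$. The two facts that make this work, precisely as in Ingham's integral computation, are: (i) $W(0)=\sum_k\cos(\pi k/L)$ equals $\tfrac{2}{\pi}L$ up to an $O(1)$ remainder, which reproduces the continuous constant $2/\pi$; and (ii) $W$ splits as a sum of two Dirichlet kernels shifted by $\pm\tfrac1{2L}$, whose common denominator is quadratic in $\nu$, so that $|W(\nu)|$ decays like $1/\nu^2$ rather than $1/\nu$. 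Bounding the off-diagonal part by $\|\bc\|_2^2\max_j\sum_{l\ne j}|W(\om_j-\om_l)|$ and summing the $1/\nu^2$ tail against the gap $q=\min_{j\ne l}d(\om_j,\om_l)$ produces exactly the $\tfrac{2}{\pi L^2q^2}$ term of Proposition \ref{propingham}, while the several $O(1)$ discretization remainders consolidate into the extra $-4/L$. Dividing by $L$ yields (\ref{smin}).

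For the upper bound I would choose instead a \emph{majorizing} kernel: nonnegative weights $v_k$ supported on a slightly enlarged symmetric range with $v_k\ge 1$ for $|k|\le L/2$, whose discrete transform is nonnegative with $1/\nu^2$ decay (the discrete analog of the kernel Ingham uses for his upper bound, and the source of the constants $4\sqrt2/\pi$ and $\sqrt2$). Then $\sum_{k=-L/2}^{L/2}|F(k)|^2\le\sum_k v_k|F(k)|^2$, and the same diagonal-plus-tail bookkeeping gives (\ref{smax}); note this direction does not need the gap condition, consistent with the remark after Proposition \ref{propingham}. For odd $L$ the symmetric centering lands on half-integers, so I would compare with the even matrix $\Phi^{L+1}$: appending one row only adds a nonnegative term, so $\|\Phi^L\bc\|_2^2\le\|\Phi^{L+1}\bc\|_2^2$ and $\tfrac1L\|\Phi^L\bc\|_2^2\le\tfrac{L+1}{L}\cdot\tfrac1{L+1}\|\Phi^{L+1}\bc\|_2^2$, which turns the even upper bound with $L+1$ into the $(1+1/L)$-prefactored bound in (\ref{sv2}); the lower bound in (\ref{sv2}) follows from the analogous monotone comparison with $\Phi^{L-1}$ together with the even lower bound.

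The genuinely technical step, and the main obstacle, is the explicit construction and estimation of these discrete kernels. One must exhibit weight sequences sandwiching the indicator of $\{|k|\le L/2\}$ from below and above, compute their discrete Fourier transforms $W$ and $V$ in closed form, and bound them sharply enough to (a) recover the \emph{exact} continuous constants $2/\pi$ and $4\sqrt2/\pi$ at $\nu=0$, (b) secure the clean $1/\nu^2$ decay that keeps the off-diagonal tail $q$-summable with only a $q^{-2}$ dependence (not $q^{-1}\log q^{-1}$), and (c) track every $O(1)$ remainder so that, after division by $L$, they collapse into the explicit $4/L$ and $3\sqrt2/L$ terms rather than unspecified constants. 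Handling the torus metric, i.e. the wrap-around of $\om_j-\om_l$ into $[-1/2,1/2)$ when summing the tails, is an additional bookkeeping point but not a conceptual difficulty.
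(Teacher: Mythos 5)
Your plan for the lower bound and for the even-$L$ upper bound is, in substance, the paper's own proof: your centered weights $w_k=\cos(\pi k/L)$ are exactly the paper's kernel $g(k/L)=\cos\pi(k/L-\tfrac12)=\sin(\pi k/L)$ after re-indexing; the $1/\nu^2$ decay of its transform is established there via Poisson summation (your two-shifted-Dirichlet-kernel computation is an equivalent closed-form route); and your ``majorizing kernel on an enlarged range'' is precisely the paper's dilation step, namely apply the weighted estimate at scale $2L$ and use $g\ge g(1/4)=1/\sqrt2$ on the middle window of length $L+1$, which is exactly where $4\sqrt2/\pi$ and the $\sqrt2$'s come from. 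Two quibbles there: the majorant's transform need not be (and is not) nonnegative---it is proportional to $\cos(\pi L \nu)/(1-4L^2\nu^2)$, which oscillates in sign; only the decay of its absolute value is used---and the enlargement is by a factor of $2$, not ``slight''. Also, your ``$O(1)$ remainders'' include an $s$-dependent term of size $\frac{8s}{\pi L^2}$, which folds into the stated $4/L$ and $3\sqrt2/L$ only because the gap condition forces $s<L$ on the torus; this deserves a sentence.

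The genuine gap is your treatment of the odd-$L$ \emph{lower} bound. Monotone comparison with $\Phi^{L-1}$ (with $L-1$ even) gives at best
\[
\frac1L\|\Phi^L\bc\|_2^2\;\ge\;\frac{L-1}{L}\left(\frac2\pi-\frac{2}{\pi(L-1)^2q^2}-\frac{4}{L-1}\right)\|\bc\|_2^2
=\left(\frac2\pi-\frac{2}{\pi L}-\frac{2}{\pi L(L-1)q^2}-\frac4L\right)\|\bc\|_2^2,
\]
which falls short of the lower bound claimed in (\ref{sv2}) by essentially $\frac{2}{\pi L}$ (and carries a slightly worse $q$-term); note also that the gap condition (\ref{sep}) at $L$ is weaker than the one needed to make the $L-1$ bound nonvacuous. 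Since the entire content of Theorem \ref{thm3} is the explicit tracking of $O(1/L)$ corrections, and since (\ref{sv2})'s lower bound is the \emph{same} expression as in the even case with no prefactor available to absorb a loss, your route proves a strictly weaker statement. The fix is to notice that the lower-bound argument never needed parity: run it one-sided with weights $g(k/L)=\sin(\pi k/L)$, $k=0,\dots,L$, or equivalently observe that absorbing the half-integer phase $e^{2\pi i\om_j L/2}$ into $\bc_j$ is legitimate for odd $L$ as well---only the sample points move to half-integers, which changes nothing in the kernel estimates. This is how the paper proceeds: its weighted lemma is proved for arbitrary $L$, the lower bound follows from $0\le g\le 1$ uniformly in the parity, and evenness is invoked only in the upper bound's dilation step; the odd upper bound is then handled by the $\Phi^{L+1}$ comparison exactly as you propose, which is sound because the $\left(1+\frac1L\right)$ prefactor in (\ref{sv2}) exists precisely to absorb that comparison loss.
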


Proof of Theorem \ref{thm3} is provided in Appendix \ref{app2}.

\begin{remark}
The difference between the bounds of  the  discrete and the continuous Ingham inequalities  is $\mathcal{O}(1/L)$ which is negligible when $L$ is large. The upper bound in \eqref{sv} holds even when the gap condition \eqref{sep} is violated; however, \eqref{sep} is necessary for the positivity of the lower bound.
\end{remark}
\begin{remark}
Some form of discrete Ingham inequalities are developed in \cite{NZ1,NZ2} for the analysis of the control/observation properties of numerical schemes of the 1-d wave equation. The main result therein is that when time integrals in \eqref{ing} are replaced by discrete sums on a discrete mesh, discrete Ingham inequalities converge to the continuous one as the mesh becomes infinitely fine. Their asymptotic analysis, however, do not provide the non-asymptotic
results stated in Theorem \ref{thm3}. 
\end{remark}
%\begin{remark}
%It has to be pointed out that the estimates in both discrete and continuous case are independent of $s$. What matters is the gap between objects.
%\end{remark}

%On the flip side the minimum nonzero singular value of $\Phi^L/L$ is very close to $0$ if there are nearby components in $\supp$. For example in the case where $\supp$ contains two objects separated by $0.1/L$, the nonzero singular values of $\Phi^L/L$ are $1.9463$ and $0.0188$ respectively. 

\section{Perturbation of noise-space correlation}
\label{secper}

In this section we use tools in classical matrix perturbation theory \cite{Stewart, Ilse} and develop  a perturbation estimate on the noise-space correlation function, the key ingredient of the MUSIC algorithm. %An earlier result can be found in \cite{DNN}, where the perturbation estimate is only given on $\{\om_1,\ldots,\om_s\}$. 
Our main results are presented in Theorem \ref{thmp1}, Corollary \ref{cor2}  and Theorem \ref{thmp2} and proofs are provided in Appendix \ref{app1} and \ref{app4}.

\begin{theorem}
\label{thmp1}
Suppose $L \ge s$, $M-L+1\ge s$ and $\|E\|_2 < \si_s$. Then
\beq
\label{eqp}
|R^\ep(\om) -R(\om)| \le \| \PET -\PT \|_2 :=
\sup_{\phi\in \CC^{L+1}} \frac{\|\PET\phi-\PT\phi\|_2}{\|\phi\|_2}
\le \frac{4\si_1+2\|E\|_2}{(\si_s -\|E\|_2)^2} \|E\|_2.
\eeq
In particular, for $\om_j \in \supp$, $R(\om_j) = 0$ and
\beq
\label{eqp2}
|\RE(\om_j)| \le
\frac{2\|E\|_2}{\xmin\smin((\Phi^{M-L})^T)\|\phi^L(\om_j)\|_2}, \  j = 1,2,\ldots,s.
\eeq
\end{theorem}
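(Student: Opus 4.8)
The plan is to prove the two claims in turn: first the global estimate \eqref{eqp}, which controls $\RE-R$ everywhere on $\TT$ through the operator-norm distance between the two noise-space projectors, and then the pointwise estimate \eqref{eqp2} on $\supp$, which exploits the Vandermonde structure of $H$. For \eqref{eqp}, the first inequality is immediate from the reverse triangle inequality: writing $\phi=\phi^L(\om)$,
\[
|\RE(\om)-R(\om)|=\frac{\big|\,\|\PET\phi\|_2-\|\PT\phi\|_2\,\big|}{\|\phi\|_2}\le \frac{\|(\PET-\PT)\phi\|_2}{\|\phi\|_2}\le \|\PET-\PT\|_2 ,
\]
the last quantity being exactly the operator norm defined in the statement. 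Everything then reduces to bounding $\|\PET-\PT\|_2$ by the claimed amount, and this is where I would invoke classical Hermitian perturbation theory (cf.\ \cite{Stewart,Ilse}).

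To bound $\|\PET-\PT\|_2$ I would pass to the Hermitian Gram matrices $\Gamma=HH^\star$ and $\Gamma^\ep=H^\ep(H^\ep)^\star$. Here $\PT$ is the spectral projector of $\Gamma$ onto its zero eigenvalue, while $\PET$ is the spectral projector of $\Gamma^\ep$ onto its $L+1-s$ smallest eigenvalues. Weyl's theorem (Proposition \ref{propweyl}) supplies the spectral gap needed for a $\sin\Theta$ estimate: the smallest signal eigenvalue of $\Gamma^\ep$ satisfies $(\se_s)^2\ge(\si_s-\|E\|_2)^2$, whereas the target eigenvalue of $\Gamma$ is $0$, so under the hypothesis $\|E\|_2<\si_s$ the two relevant spectral sets are separated by $(\si_s-\|E\|_2)^2>0$ — precisely the squared quantity appearing in the denominator. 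The Davis--Kahan (equivalently resolvent) bound then gives $\|\PET-\PT\|_2\le c\,\|\Gamma^\ep-\Gamma\|_2/(\si_s-\|E\|_2)^2$, and expanding $\Gamma^\ep-\Gamma=HE^\star+EH^\star+EE^\star$ yields $\|\Gamma^\ep-\Gamma\|_2\le 2\si_1\|E\|_2+\|E\|_2^2$. The factor $\si_1$ and the squared denominator are then exactly the features of the stated bound, and tracking the constant $c$ in the $\sin\Theta$ estimate produces the numerator $4\si_1+2\|E\|_2$.

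For \eqref{eqp2}, fix $\om_j\in\supp$. Since $L\ge s$, $M-L+1\ge s$ and the nodes are distinct, $\Phi^L$ and $(\Phi^{M-L})^T$ both have rank $s$ and $X$ is invertible, so $\range H=\range\Phi^L$; in particular $\phi^L(\om_j)=\Phi^L e_j\in\range H$, whence $\PT\phi^L(\om_j)=\mathbf 0$ and $R(\om_j)=0$. The key step is to realize $\phi^L(\om_j)$ as $Hz$ with controlled norm: taking the minimal-norm solution $z$ of $(\Phi^{M-L})^Tz=x_j^{-1}e_j$ gives $Hz=\Phi^L X(\Phi^{M-L})^Tz=\Phi^L e_j=\phi^L(\om_j)$ together with $\|z\|_2\le\big(|x_j|\,\smin((\Phi^{M-L})^T)\big)^{-1}$. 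Next I would bound $\|\PET H\|_2$: because $(U_2^\ep)^\star H^\ep$ equals the bottom block of $\Sigma^\ep$, its norm is $\se_{s+1}\le\|E\|_2$ (Weyl again, using $\si_{s+1}=0$), so $\|\PET H\|_2\le\|\PET H^\ep\|_2+\|\PET E\|_2\le \se_{s+1}+\|E\|_2\le 2\|E\|_2$. Combining, $\|\PET\phi^L(\om_j)\|_2=\|\PET Hz\|_2\le 2\|E\|_2/\big(\xmin\,\smin((\Phi^{M-L})^T)\big)$, and dividing by $\|\phi^L(\om_j)\|_2$ yields \eqref{eqp2}.

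The routine ingredients are the triangle inequality and the Vandermonde bookkeeping in the last paragraph; the delicate point is the projector perturbation bound in \eqref{eqp}. The subtlety is that $\PET$ projects onto a \emph{truncated} singular subspace of the generically full-rank matrix $H^\ep$ rather than onto $\range H^\ep$, so one must carefully separate the perturbed signal eigenvalues from the perturbed noise eigenvalues via Weyl and feed the correct gap $(\si_s-\|E\|_2)^2$ into the $\sin\Theta$ theorem. I would pay particular attention to verifying that the hypothesis $\|E\|_2<\si_s$ alone suffices for the gap argument — it does, since the relevant separation is between the zero eigenvalue of $\Gamma$ and the signal spectrum of $\Gamma^\ep$ — and to pinning down the exact constant $4\si_1+2\|E\|_2$, since the natural $\sin\Theta$ estimate can easily yield a qualitatively similar but numerically different constant.
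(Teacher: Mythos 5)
Your proposal is correct and takes essentially the same route as the paper: for \eqref{eqp} the paper likewise passes to the Gram matrices $HH^\star$ and $H^\ep(H^\ep)^\star$, bounds the perturbation $\|HE^\star+EH^\star+EE^\star\|_2\le(2\si_1+\|E\|_2)\|E\|_2$, and invokes Weyl's theorem for the gap; the only difference is that, where you cite Davis--Kahan, the paper proves the two cross-term bounds $\|U_2^\star U_1^\ep\|_2\le \|HE^\star+EH^\star+EE^\star\|_2/(\se_s)^2$ and $\|U_1^\star U_2^\ep\|_2\le \|HE^\star+EH^\star+EE^\star\|_2/\bigl(\si_s^2-(\se_{s+1})^2\bigr)$ by direct block-matrix manipulation and then sums them, which is precisely where your constant $c=2$ (hence the numerator $4\si_1+2\|E\|_2$) comes from. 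Your argument for \eqref{eqp2} --- a preimage $z$ of $\phi^L(\om_j)$ under $H$ with $\|z\|_2\le \bigl(\xmin\smin((\Phi^{M-L})^T)\bigr)^{-1}$ combined with $\|\PET H\|_2\le \se_{s+1}+\|E\|_2\le 2\|E\|_2$ --- is the paper's pseudo-inverse computation in different notation.
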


\begin{remark}
While \eqref{eqp} is a general perturbation estimate valid on $\TT$, including $\supp$, \eqref{eqp2} is a sharper estimate for $\om_j \in \supp$. 
\end{remark}

\begin{remark}
Suppose noise vector $\ep$ contains {\em i.i.d.} random variables of variance $\sigma^2$, $\|E\|_2 \le \|E\|_F = \mathcal{O}(\sigma)$. For fixed $M$ and $\supp$,
$$|\RE(\om)-R(\om)| = \mathcal{O} (\sigma).$$ 
\end{remark}

Theorem \ref{thmp1} holds for  all signal models with {\em any} support set $\supp$. In view of  the Vandermonde decomposition (\ref{van}) 
we next derive explicit bounds for the perturbation of noise-space correlation
 by combining Theorem \ref{thm3} and \ref{thmp1}.  
\begin{corollary}
\label{cor2}
Let $L$ and $M-L$ be even integers. Suppose $\supp$ satisfies the following gap condition 
\beq
\label{eq9'}
q=\min_{j\neq l}d(\om_j,\om_l)
 >\max\left(
\frac{1}{L}\sqrt{\frac{2}{\pi} }\left(\frac 2 \pi - \frac 4 L\right)^{-\frac 1 2},
\frac{1}{M-L}\sqrt{\frac{2}{\pi}} \left(\frac 2 \pi - \frac {4} {M-L}\right)^{-\frac 1 2}
\right) . 
  \eeq
Then 
\beq
\label{ce1}
|\RE(\om)-R(\om)| \le \frac{4\alpha_1+2\frac{\|E\|_2}{\sqrt{L(M-L)}}}{\left(\alpha_2 -\frac{\|E\|_2}{\sqrt{L(M-L)}}\right)^2} \cdot \frac{\|E\|_2}{\sqrt{L(M-L)}},
\eeq
where 
\beq
\label{eqalpha1}
\alpha_1 = \xmax \sqrt{
\left(\frac{4\sqrt 2}{\pi } + \frac{\sqrt 2}{\pi L^2 q^2} + \frac{3\sqrt 2}{L}\right)
\left(\frac{4\sqrt 2}{\pi } + \frac{\sqrt 2}{\pi (M-L)^2 q^2} + \frac{3\sqrt 2}{M-L}\right)}
\eeq
and 
\beq
\label{eqalpha2}
\alpha_2 = \xmin \sqrt{
\left( \frac 2 \pi - \frac{2}{\pi L^2 q^2} - \frac 4 L\right)
\left( \frac 2 \pi - \frac{2}{\pi (M-L)^2 q^2} - \frac {4}{M-L}\right)}.
\eeq
\commentout{
If external noise is bounded such that $\ep_j \le \sigma, \ j = 0,\ldots, M$,
\begin{eqnarray}
|\RE(\om)-R(\om)| 
& \le& \frac{4\alpha_1+2\sigma \sqrt{\frac{(L+1)(M-L+1)}{L(M-L)}}}{\alpha_2^2 - \sigma^2 \frac{(L+1)(M-L+1)}{L(M-L)} }\cdot \sigma \sqrt{\frac{(L+1)(M-L+1)}{L(M-L)}}
\label{eq18}\\
& \approx&  \frac{4\alpha_1 + 2\sigma}{\alpha_2^2 -\sigma^2} \sigma \quad (L,M\gg 1). \nn
\end{eqnarray}
}
\end{corollary}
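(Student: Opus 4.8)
The plan is to feed the explicit singular-value bounds of Theorem \ref{thm3} into the abstract perturbation estimate \eqref{eqp} of Theorem \ref{thmp1}. Since the right-hand side of \eqref{eqp} is increasing in $\si_1$ and, on the range $\si_s>\|E\|_2$, decreasing in $\si_s$, it suffices to bound $\si_1=\smax(H)$ from above and $\si_s=\smin(H)$ from below, where $H=\Phi^L X(\Phi^{M-L})^T$ is the Vandermonde decomposition \eqref{van} with $X=\mathrm{diag}(x_1,\dots,x_s)$. Both bounds come from factoring the three-fold product and estimating each factor separately, using that $\smax(X)=\xmax$, $\smin(X)=\xmin$, and that transposition leaves singular values unchanged, so the nonzero singular values of $(\Phi^{M-L})^T$ coincide with those of $\Phi^{M-L}$.

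For the upper bound I would use submultiplicativity of the spectral norm,
\[
\si_1=\|\Phi^L X(\Phi^{M-L})^T\|_2\le \smax(\Phi^L)\,\xmax\,\smax(\Phi^{M-L}),
\]
and then insert \eqref{smax} applied once to $\Phi^L$ (even $L$) and once to $\Phi^{M-L}$ (even $M-L$). Taking square roots and collecting the factor $\sqrt{L(M-L)}$ yields $\si_1\le \sqrt{L(M-L)}\,\alpha_1$ with $\alpha_1$ as in \eqref{eqalpha1}. This step is routine.

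The delicate step is the lower bound $\si_s\ge \sqrt{L(M-L)}\,\alpha_2$, since the naive multiplicative inequality for the smallest singular value is false for general products and requires rank bookkeeping. Under \eqref{exact} and distinct frequencies, $\Phi^L$ and $\Phi^{M-L}$ have full column rank $s$ and $X$ is invertible, so $\Phi^L X$ is injective; hence $\ker H=\ker (\Phi^{M-L})^T$ and $H$ has rank exactly $s$, making $\si_s$ its smallest nonzero singular value. For any $v\perp \ker H=\ker(\Phi^{M-L})^T$ I would chain
\[
\|Hv\|_2\ge \smin(\Phi^L)\,\|X(\Phi^{M-L})^Tv\|_2\ge \smin(\Phi^L)\,\xmin\,\|(\Phi^{M-L})^Tv\|_2\ge \smin(\Phi^L)\,\xmin\,\smin(\Phi^{M-L})\,\|v\|_2,
\]
where the first two inequalities use full column rank of $\Phi^L$ and invertibility of $X$, and the last uses that $v$ is orthogonal to the kernel of $(\Phi^{M-L})^T$. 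Minimizing over such $v$ gives $\si_s\ge \smin(\Phi^L)\,\xmin\,\smin(\Phi^{M-L})$, and inserting the two instances of \eqref{smin} produces $\si_s\ge \sqrt{L(M-L)}\,\alpha_2$. The gap condition \eqref{eq9'}, being the maximum of the two thresholds \eqref{sep} for $\Phi^L$ and $\Phi^{M-L}$, is exactly what guarantees that both lower bounds in \eqref{smin}, and hence $\alpha_2$, are strictly positive.

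Finally I would substitute these two bounds into \eqref{eqp}. Writing $\beta=\|E\|_2/\sqrt{L(M-L)}$, the hypothesis $\|E\|_2<\si_s$ together with $\si_s\ge\sqrt{L(M-L)}\,\alpha_2$ is secured by $\beta<\alpha_2$, which keeps us in the region where \eqref{eqp} is monotone decreasing in $\si_s$; replacing $\si_1$ by its upper bound and $\si_s$ by its lower bound then only increases the right-hand side. Dividing numerator and denominator through by the appropriate powers of $\sqrt{L(M-L)}$ collapses everything to \eqref{ce1}. The main obstacle is the lower-bound step: one must verify the kernel identity $\ker H=\ker(\Phi^{M-L})^T$ so that the chained inequality is applied only to vectors orthogonal to $\ker(\Phi^{M-L})^T$, and must track the positivity condition $\beta<\alpha_2$ needed to legitimately replace $\si_s$ by its lower bound inside the squared denominator.
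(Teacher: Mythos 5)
Your proposal is correct and follows essentially the same route as the paper: the paper's (very terse) proof simply combines Theorem \ref{thm3} with Theorem \ref{thmp1} via the product bounds $\si_1 \le \smax(\Phi^L)\xmax\smax(\Phi^{M-L})$ and $\si_s \ge \smin(\Phi^L)\xmin\smin(\Phi^{M-L})$, which is exactly what you do. Your added care with the kernel identity $\ker H = \ker(\Phi^{M-L})^T$ for the lower bound, and with the positivity condition $\|E\|_2/\sqrt{L(M-L)} < \alpha_2$ needed for the monotone substitution, fills in details the paper leaves implicit but does not change the argument.
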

%The inequality \eqref{ce1} follows directly from Theorem \ref{thm3} and Theorem \ref{thmp1}. 
\commentout{For bounded noises $\ep_j \le \sigma, \ j =0,\ldots, M,$
we can use the bound
$$\|E\|_2 \le \|E\|_F \le \sigma\sqrt{(L+1)(M-L+1)}$$
to derive (\ref{eq18}).
}

\begin{remark}
Corollary \ref{cor2} is stated in the case that both $L$ and $M-L$ are even integers. However, the results hold in other cases with a slightly different $\alpha_1$ based on \eqref{sv} and \eqref{sv2}.
\end{remark}

\begin{remark}
As noted in Section \ref{sec1.2}, for i.i.d. noise $\|E\|_2$ grows like $\sqrt{M\log M}$ which is  much smaller
than $\sqrt{L(M-L)}$ with $L\sim M$ as $M\to \infty$. As a consequence, 
\[
|\RE(\om)-R(\om)| =\mathcal{O}\left(\frac{\sqrt{\log M}}{\sqrt M}\right)
\]
under the gap condition \eqref{eq9'}.
\end{remark}

How close are the MUSIC estimates, namely the $s$ lowest local minimizers of $R^\ep(\om)$, to the true frequencies which are  the  zeros of $R(\om)$? While we can not at the moment answer
this question directly, the following asymptotic result says that 
every true frequency has a near-by strict local minimizer of $R^\ep$ in its vicinity converging to it. Denote $[\plo]' = {d\plo}/{d\om}$. 

\begin{theorem}
\label{thmp2}
%Let \[ \alpha=\frac{4\si_1+2\|E\|_2}{\si_s^2-\|E\|_2^2}\sqrt{L(M-L)}.\]
Suppose 
\beq
\label{generic}
\left. \PT[\phi^L(\om)]' \right|_{\om = \om_j} \neq \mathbf{0}, \quad \forall \om_j \in \supp.
\eeq
When $\|E\|_2$ is sufficiently small (details in \eqref{p4e3} and \eqref{p4e4}), 
there exists a strict local minimizer $\hat\om_j$ of $\RE(\om)$ near $\om_j$
%such that  $$\hat\om_j \rightarrow \om_j \text{ as } \|E\|_2 \rightarrow 0$$
such that
\beq
\label{eq23}
|\hat\om_j-\om_j|\displaystyle\min_{\xi \in (\om_j,\hat\om_j)} |Q''(\xi)|
\le
4\alpha  \eta(L){\|E\|_2  }
\eeq
where  $Q(\om) = R^2(\om)$, $\eta(L) = 2\pi \sqrt{1^2+2^2+\ldots+L^2}/\sqrt{L+1}$ and
\[ \alpha=\frac{4\si_1+2\|E\|_2}{(\si_s-\|E\|_2)^2}.\]

%Especially if $n=2$, there exists a strict local minimizer $\hat\om_j$ of $\RE(\om)$ around $\om_j$ such that \eqref{eqfre} holds.
\end{theorem}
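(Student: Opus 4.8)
The plan is to work with the squared correlation $Q=R^2$ and its perturbed counterpart $Q^\ep=(R^\ep)^2$, since a point is a (strict) local minimizer of $R^\ep$ exactly when it is one of $Q^\ep$ (as $R^\ep\ge 0$). The crucial simplification is that the denominator $\|\plo\|_2^2=\sum_{k=0}^L|e^{-2\pi i k\om}|^2=L+1$ is constant in $\om$, so $Q(\om)=\|\PT\plo\|_2^2/(L+1)$. Writing $N(\om)=\|\PT\plo\|_2^2=\lan\PT\plo,\plo\ran$ and differentiating gives $N'(\om)=2\re\lan\PT[\plo]',\PT\plo\ran$; at a true frequency, where $\PT\plo\big|_{\om=\om_j}=\mathbf 0$, this yields $Q'(\om_j)=0$ and
\[
Q''(\om_j)=\frac{2}{L+1}\big\|\PT[\plo]'\big|_{\om=\om_j}\big\|_2^2>0,
\]
the positivity being exactly the nondegeneracy hypothesis \eqref{generic}. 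Thus $\om_j$ is a nondegenerate strict local minimizer of $Q$.

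The engine of the estimate is a uniform bound on $(Q-Q^\ep)'$. Because $\PT$ and $\PET$ are orthogonal projections, $\|\PT\phi\|_2^2-\|\PET\phi\|_2^2=\phi^\star(\PT-\PET)\phi$, so
\[
(Q-Q^\ep)(\om)=\frac{1}{L+1}\lan(\PT-\PET)\plo,\plo\ran
\]
is real-valued, and differentiating gives $(Q-Q^\ep)'(\om)=\frac{2}{L+1}\re\lan(\PT-\PET)[\plo]',\plo\ran$, using that $\PT-\PET$ is Hermitian. Cauchy--Schwarz, the identity $\|[\plo]'\|_2/\|\plo\|_2=\eta(L)$ (from $\|[\plo]'\|_2=2\pi\sqrt{1^2+\cdots+L^2}$ and $\|\plo\|_2=\sqrt{L+1}$), and the projection bound $\|\PET-\PT\|_2\le\alpha\|E\|_2$ of Theorem \ref{thmp1} then give the key inequality
\[
|(Q-Q^\ep)'(\om)|\le 2\eta(L)\,\|\PET-\PT\|_2\le 2\alpha\,\eta(L)\,\|E\|_2\le 4\alpha\,\eta(L)\,\|E\|_2
\]
for every $\om\in\TT$.

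With these two ingredients the existence and the estimate follow by a continuity/intermediate-value argument. Fix $\delta>0$ so small that $Q''>0$ on $[\om_j-\delta,\om_j+\delta]$; then $Q'$ is strictly increasing there with $Q'(\om_j-\delta)<0<Q'(\om_j+\delta)$. Requiring $\|E\|_2$ small enough that $4\alpha\eta(L)\|E\|_2$ falls below $\min\{|Q'(\om_j-\delta)|,\,Q'(\om_j+\delta)\}$ (this is condition \eqref{p4e3}), the uniform bound forces $(Q^\ep)'(\om_j-\delta)<0<(Q^\ep)'(\om_j+\delta)$, so $Q^\ep$ attains an interior minimum at some $\hat\om_j\in(\om_j-\delta,\om_j+\delta)$ with $(Q^\ep)'(\hat\om_j)=0$; shrinking $\|E\|_2$ further so that $(Q^\ep)''>0$ near $\hat\om_j$ (condition \eqref{p4e4}, a second-derivative perturbation of $Q''(\om_j)>0$) makes $\hat\om_j$ a strict local minimizer of $Q^\ep$, hence of $R^\ep$. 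Finally, since $Q'(\om_j)=0=(Q^\ep)'(\hat\om_j)$, the mean value theorem produces $\xi_0\in(\om_j,\hat\om_j)$ with $Q'(\hat\om_j)=Q''(\xi_0)(\hat\om_j-\om_j)$, whence
\[
|\hat\om_j-\om_j|\min_{\xi\in(\om_j,\hat\om_j)}|Q''(\xi)|\le|Q'(\hat\om_j)|=|(Q-Q^\ep)'(\hat\om_j)|\le 4\alpha\,\eta(L)\,\|E\|_2,
\]
which is exactly \eqref{eq23}.

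I expect the delicate part to be not the derivative bound --- which is the short, clean computation above --- but the existence and strictness of $\hat\om_j$: quantifying ``sufficiently small $\|E\|_2$'' requires controlling the perturbation of the second derivative $(Q^\ep)''$ uniformly on $[\om_j-\delta,\om_j+\delta]$ (an estimate analogous to the one for $(Q-Q^\ep)'$ but involving $[\plo]''$), and this is what pins down the explicit smallness conditions \eqref{p4e3}--\eqref{p4e4}. The factor $4$ in \eqref{eq23} is deliberate slack; the computation above in fact delivers the sharper constant $2$.
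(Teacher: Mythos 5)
Your proof is correct, and its overall architecture is identical to the paper's: pass to $Q=R^2$ and $Q^\ep=(R^\ep)^2$ using the constant denominator $L+1$, bound $(Q-Q^\ep)'$ uniformly via the projection-perturbation bound $\|\PET-\PT\|_2\le\alpha\|E\|_2$ of Theorem \ref{thmp1}, get existence of a critical point of $(Q^\ep)'$ by an intermediate-value argument, get strictness from a uniform second-derivative perturbation bound, and finish with the mean value theorem exactly as in the paper's Step 4. The one genuine difference is how you derive the key derivative bound. The paper expands $Q'=\frac{1}{L+1}\bigl(\lan \PT\plo,\PT[\plo]'\ran+\lan \PT[\plo]',\PT\plo\ran\bigr)$ and telescopes the difference $Q'-(Q^\ep)'$ into four terms, each bounded by $\|\PET-\PT\|_2\,\|\plo\|_2\,\|[\plo]'\|_2$, which is where the constant $4$ in \eqref{eq23} comes from. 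You instead use idempotency and self-adjointness of orthogonal projections to write $Q-Q^\ep=\frac{1}{L+1}\lan(\PT-\PET)\plo,\plo\ran$ as a single Hermitian quadratic form, so differentiation yields only two mutually conjugate terms and Cauchy--Schwarz gives the constant $2$; the same trick sharpens the paper's second-derivative bound \eqref{p4e2} by the same factor. This is a cleaner computation and a strictly better constant (the theorem's bound then holds with slack), at no cost in generality. Two cosmetic caveats: your labels for the smallness conditions are swapped relative to the paper (in the paper \eqref{p4e3} is the second-derivative condition and \eqref{p4e4} the first-derivative one), and you only sketch the second-derivative perturbation bound that underwrites strictness --- the paper carries it out explicitly, introducing $\zeta(L)=(2\pi)^2\sqrt{1^4+2^4+\cdots+L^4}/\sqrt{L+1}$, and you correctly identify that this is what is needed and that it must hold uniformly on the interval rather than only near the (noise-dependent) point $\hat\om_j$.
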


\begin{remark}
\label{rm41}
For  {\em i.i.d.} random variables of variance $\sigma^2$, $\|E\|_2= \mathcal{O}(\sigma)$ for fixed $M$.  Hence $$|\hat\om_j - \om_j| = \mathcal{O} (\sigma)$$ 
as $\sigma\to 0$. 
\end{remark}

\begin{remark}
\label{rm42}
In view of the identity 
\beqn
Q''(\om_j) ={2\over L+1} \|\mathcal{P}_2 [\phi^L(\om)]'\|_2^2\Big|_{\om=\om_j}\neq 0, \quad \forall j
\eeqn
cf. (\ref{Qpp}),
assumption (\ref{generic}) is a generic condition that says the true frequencies are not degenerate minimizers of $R^2$. Figure \ref{fig30} shows $\|\PT [\plo]' \|_2/\|[\plo]'\|_2$  for various $M$ and suggests that for  $q \ge $4 RL, \beq
\label{eqremark6}
C_1 \|[\phi^L(\om)]'\|_2^2 \le \|\PT[\phi^L(\om)]'\|_2^2 \le C_2 \|[\phi^L(\om)]'\|_2^2, \ \forall \ \om\in [0,1)
\eeq
for some constants $C_1,C_2 > 0$ independent of $M$.

\begin{figure}[hthp]
        \centering
       \subfigure[$M = 64$]{\includegraphics[width=5cm]{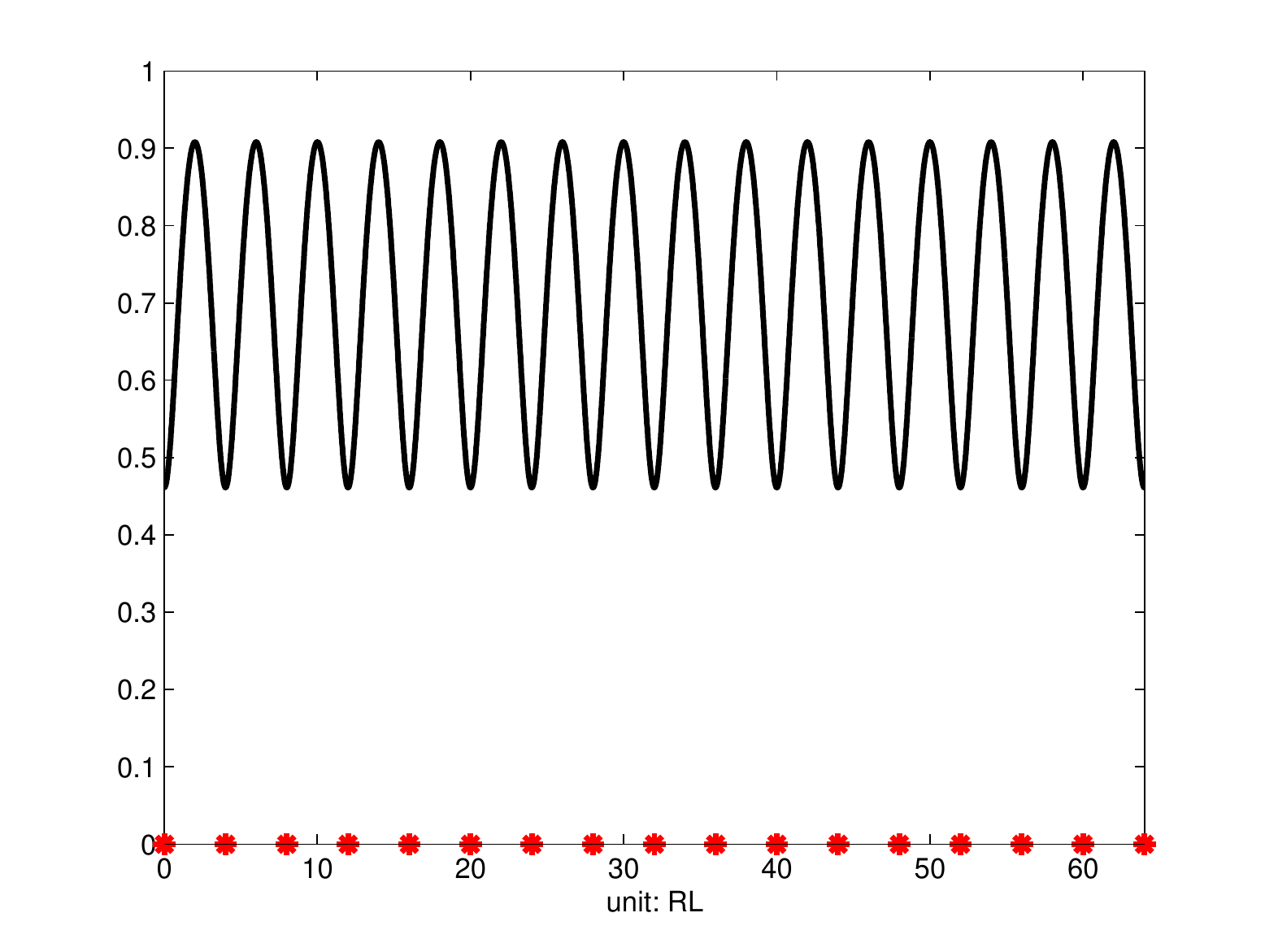}}
       \subfigure[$M = 128$]{\includegraphics[width=5cm]{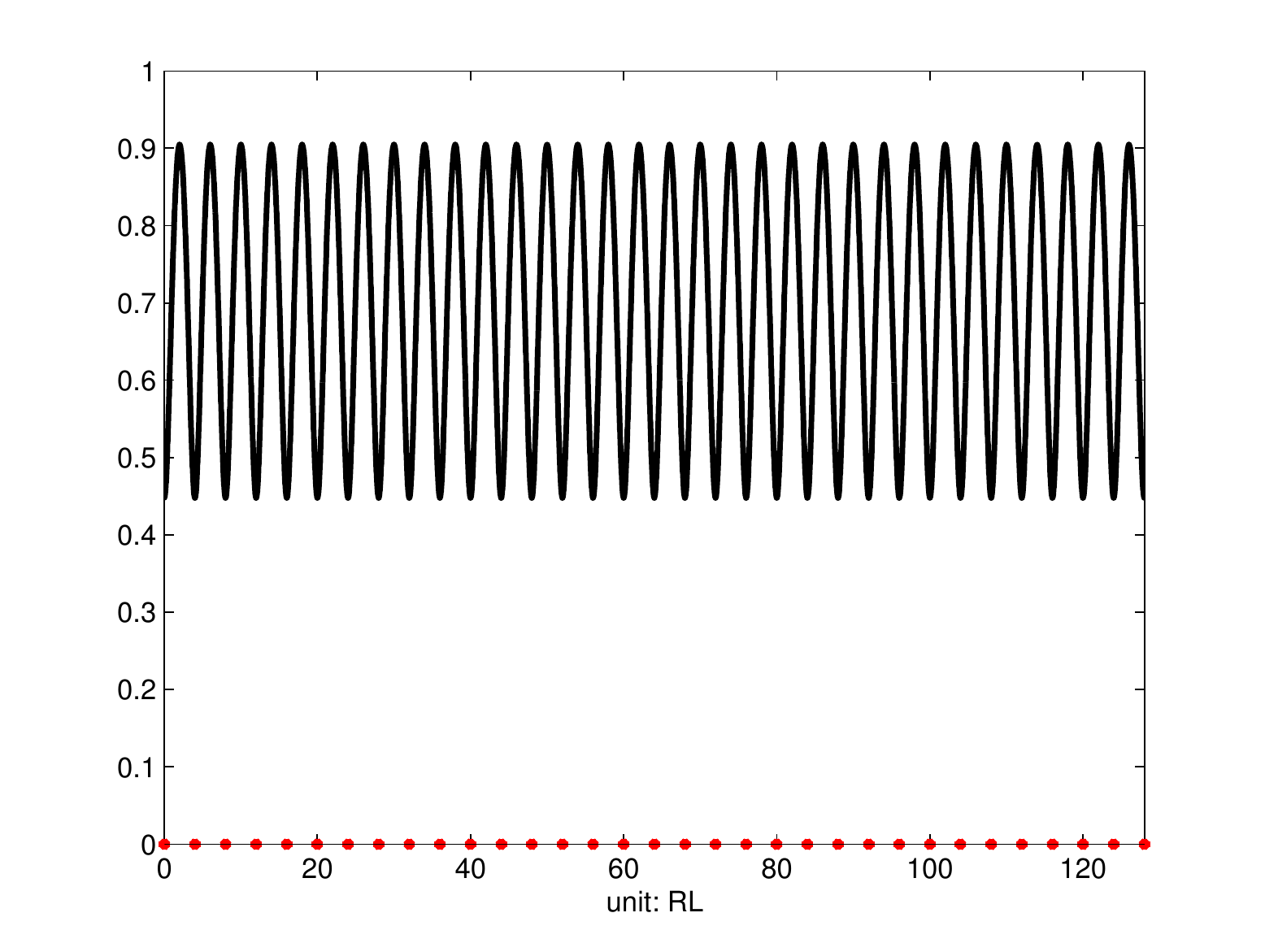}}
        \subfigure[$M = 256$]{\includegraphics[width=5cm]{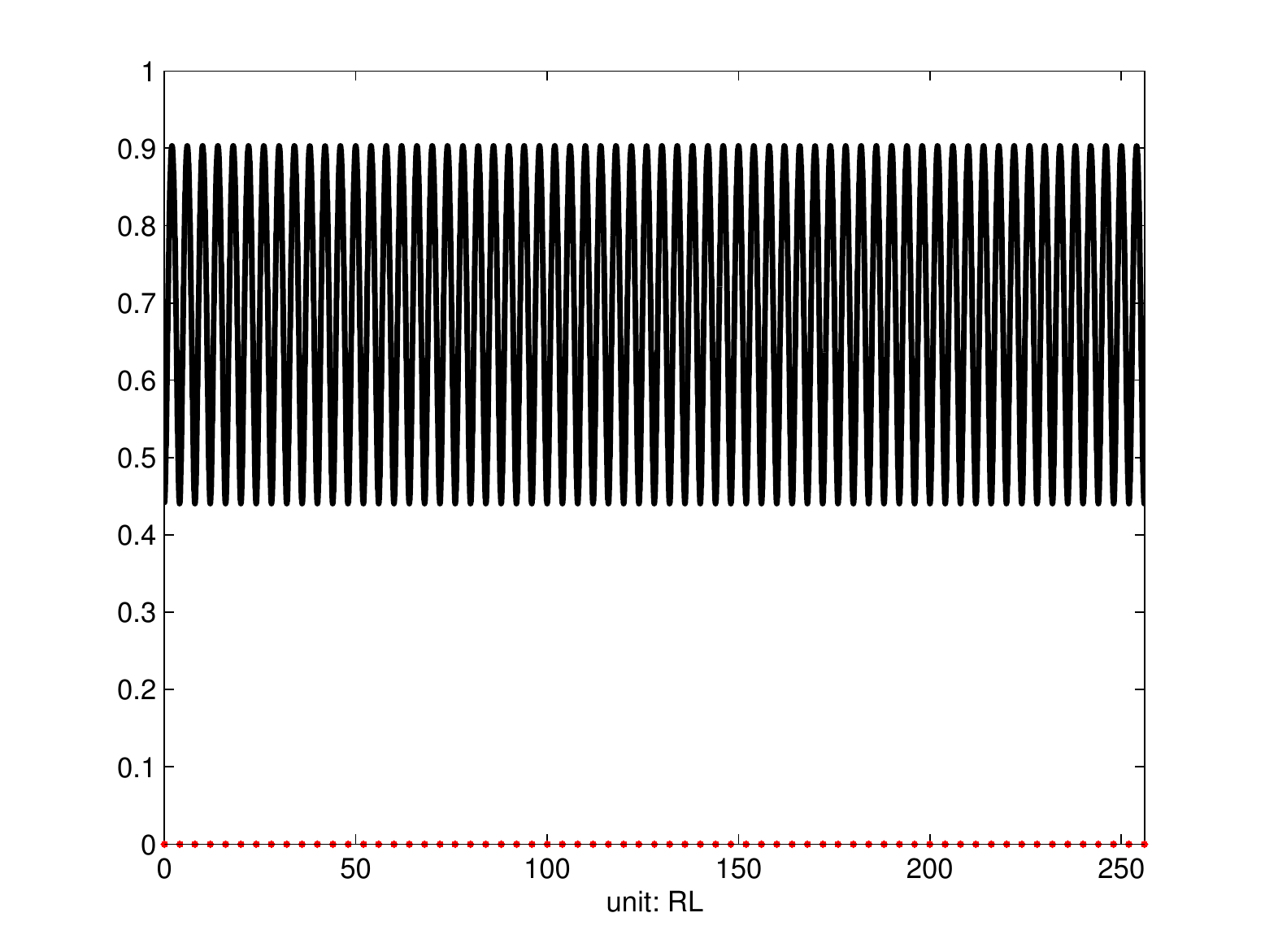}}
       \caption{Function ${\|\PT[\plo]'\|_2}/{\|[\plo]'\|_2}$ with varied $M$ in the case that frequencies in $\supp$ are separated by 4 RL. $L = M/2$ and $\supp$ is marked by red dots.}
         \label{fig30}
\end{figure}

\end{remark}

\begin{remark}
\label{rm43}
Under the assumption (\ref{eq9'}), the right hand side of (\ref{eq23}) scales like
$
\sigma\sqrt{M\log M}
$ with $L = M/2$
for i.i.d. random noise of variance $\sigma^2$. 
Under  the assumption \eqref{eqremark6}, 
$$|Q''(\om_j)| = \frac{2}{L+1} \|\PT[\phi^L(\om_j)]'\|_2^2
\ge \frac{2C_1\|[\phi^L(\om_j)]\|_2^2}{L+1}
= \frac{2C_1(1^2+2^2+\ldots+L^2)}{L+1}
=\mathcal{O}(L^2).$$
As shown in  the proof of Theorem \ref{thmp2} (Step 1), 
\[
\min_{\xi\in(\om_j,\hat\om_j)}|Q''(\xi)| >{1\over 2} Q''(\om_j)=\mathcal{O}(M^2),\quad M=2L\gg 1.
\]
As a result
$$|\hat\om_j - \om_j| = \mathcal{O}\left(\frac{\sqrt{\log M}}{M^{3/2}}\right).$$

\end{remark}

\commentout{
\begin{remark}
We do not provide an explicit estimate  on how small $\|E\|_2$ must be in order to imply the inequality
(\ref{eqfre}) due to the nature of the proof. 
{\color{red} Moreover,  the presence of $\hat\om_j$ on
the right hand side of (\ref{eqfre}) prevents us from inferring that
  $\hat\om_j \rightarrow \om_j$ as $\|E\|_2 \rightarrow 0$. }
  
  The factor $\min_{\xi \in (\om_j,\hat\om_j)} |Q''(\xi)|$ in \eqref{eqfre}, however, does suggest that the larger $ |Q''(\om)|$ in the neighborhood of $\om_j$ are, the more accurate the MUSIC estimates are. Intuitively a pointy minimum of $R^2(\om)$ at $\om_j$ improves  the stability of the MUSIC algorithm. Roughly speaking the same amount of external noise yields a smaller perturbation of the local minimizer in Figure \ref{fig32} (a) than the one in Figure \ref{fig32} (b). 
\begin{figure}[hthp]
        \centering
       \subfigure[]{\includegraphics[width=5cm]{FigureLocalMin/Good.pdf}}
         \subfigure[]{\includegraphics[width=5cm]{FigureLocalMin/Bad.pdf}}
         \caption{}
                     \label{fig32}
\end{figure}
\end{remark}
}

Strong stability in the case of well separated objects is demonstrated in Figure \ref{fig3}. Let $M = 64$, $L = 32$ and then 1 RL = $1/64$. External noise is i.i.d. Gaussian noise, i.e. $\ep \sim N(0,\sigma^2 I)$. Noise-to-Signal Ratio (NSR) $= \mathbb{E}(\|\ep\|_2)/ \|y\|_2 = 20\%$. We display singular values of $H$ and $H^\ep$ in Figure \ref{fig3}(a), noise-space correlation function $R(\om)$ and $\RE(\om)$ in Figure \ref{fig3}(b), and imaging function $\JE(\om)$ in Figure \ref{fig3}(c). With a minimum separation of 4 RL imposed on $\supp$, $\RE(\om)$ is stably perturbed from $R(\om)$. More importantly, every peak of $\JE(\om)$ is near a true object in $\supp$. Simply extracting $s$ largest local maxima of $\JE(\om) $ yields a good reconstruction.

\begin{figure}[hthp]
        \centering
       \subfigure[Singular values of $H$ and $H^\ep$]{\includegraphics[width=5cm]{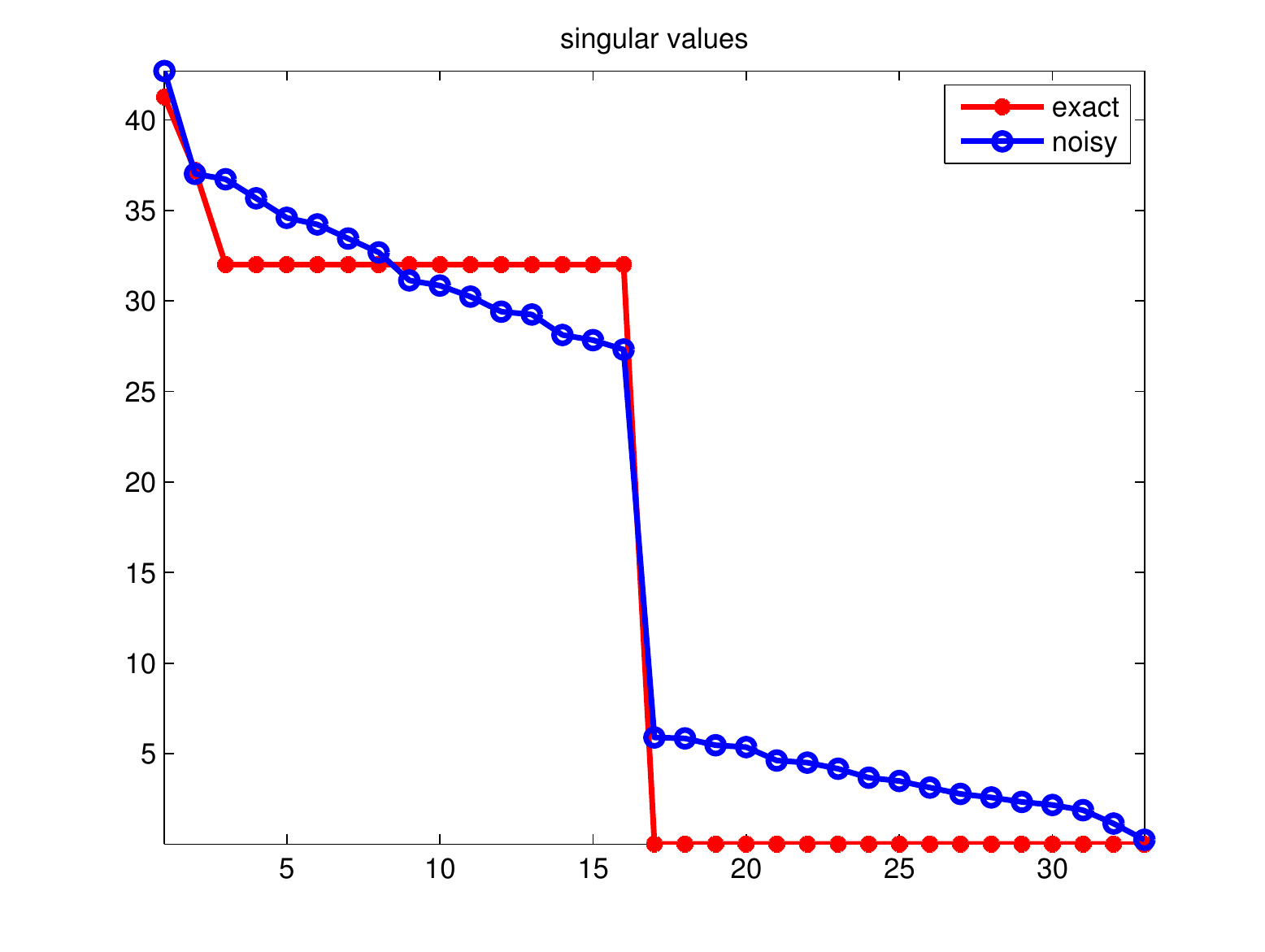}}
            %  \subfigure[Function $\frac{\|\PT[\plo]'\|_2}{\|[\plo]'\|_2}$]{\includegraphics[width=7cm]{Figure3/CondThm4.pdf}}
              %\\
                     \subfigure[$R(\om)$ and $\RE(\om)$]{\includegraphics[width=5cm]{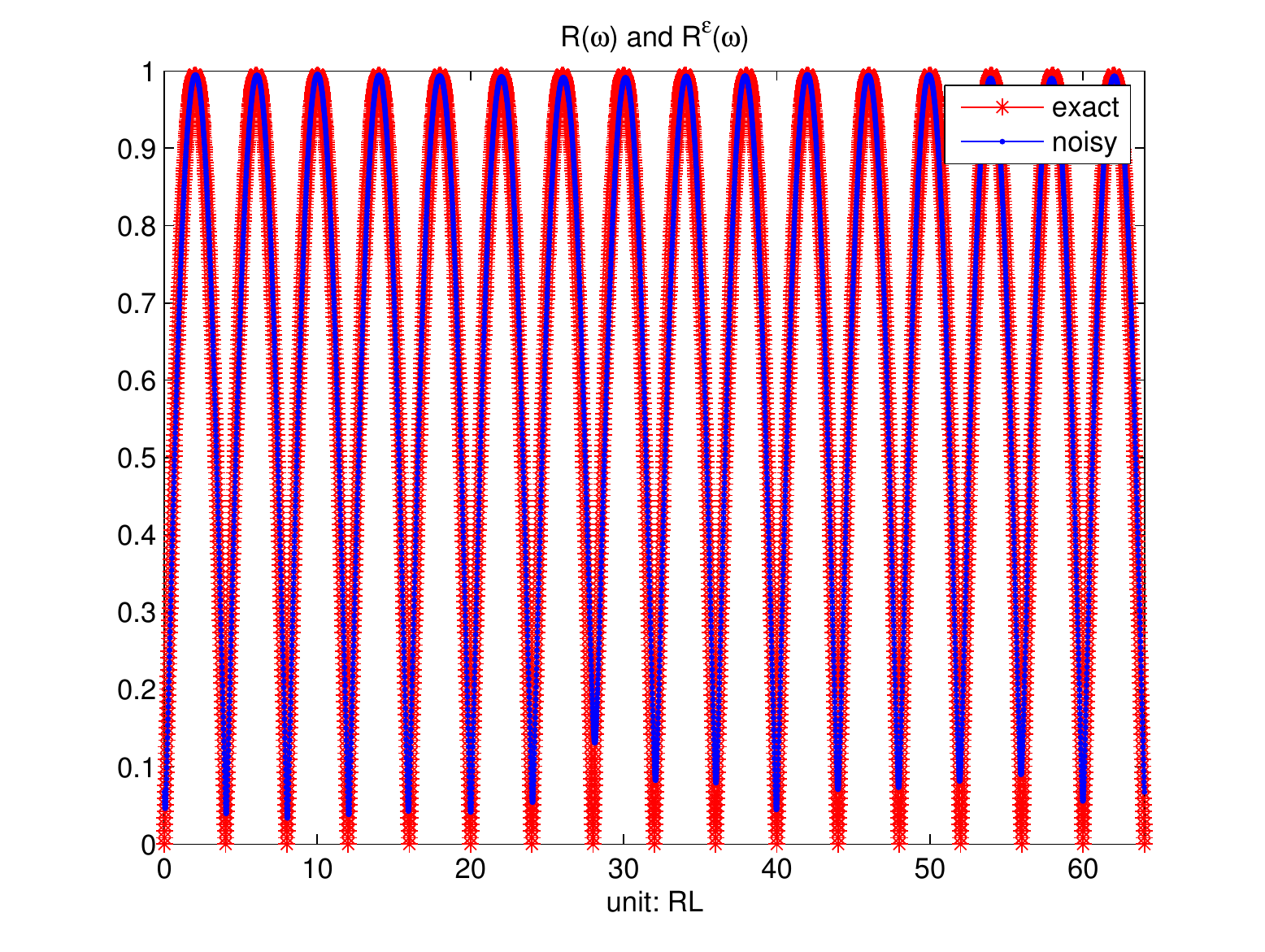}}
        \subfigure[$\JE(\om)$]{\includegraphics[width=5cm]{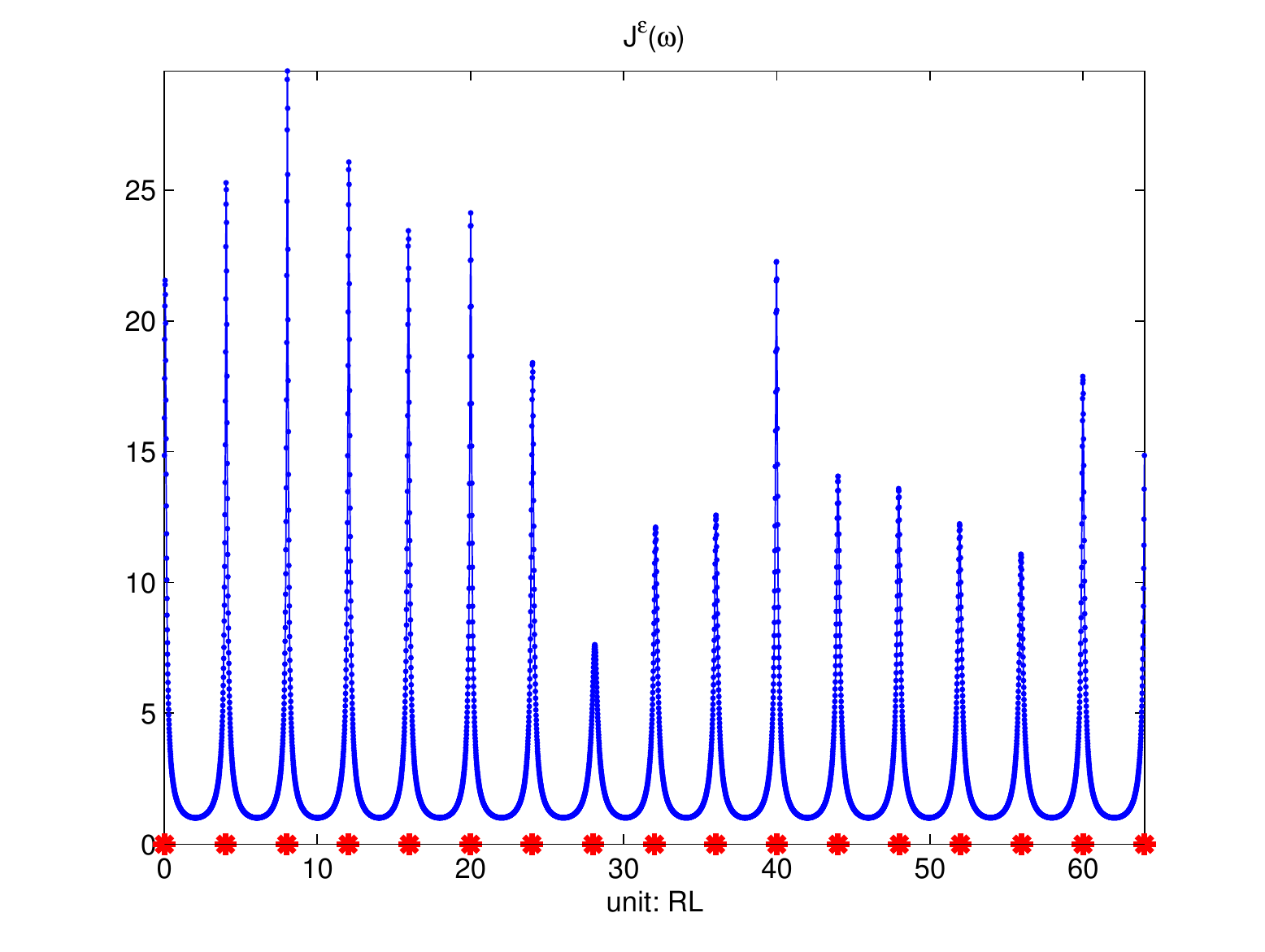}}
\caption{Perturbation of noise-space correlation and imaging function in the case where objects are separated by $4$ RL and NSR = $20\%$. In (c) true objects are located at red dots.}
              \label{fig3}
\end{figure}

\commentout{
Suppose objects in $\supp$ are well separated. We observe that the noise-space correlation $R(\om)$ has a strong pattern: it vanishes exactly on $\supp$ and it is bounded below at all frequencies away from $\supp$. Figure \ref{fig2} shows the noise-space correlation function $R(\om)$ when $M = 64, L=32$ and frequencies in $\supp$ are separated by $2$ RL, $4$ RL and $8$ RL respectively. Notice that $R(\om)$ is not affected by the amplitudes $x$ as the computation of $R(\om)$ only involves the projection of $\plo$ onto the complement of $\range(\Phi^L)$, so the pattern of $R(\om)$ is completely determined by $\supp$.

Next we provide a lower bound for  $R(\om)$ for $\om$ at which the matrix $
\Phi^L_\om = [\Phi^L \ \plo]
$ is well-conditioned (see  Appendix \ref{app3} for proof).

\begin{theorem}
\label{lemma3}
%Let
%\[
%\Phi^L_\om = [\Phi^L \ \plo]. 
%\]
Suppose that
\beq
\label{lemma31}
\smax^2(\Phi^L_\om)-\smin^2(\Phi^L_\om) < 4 \smin^2(\Phi^L)
\eeq
and
\beq
\label{lemma32}
\frac{\smax^2(\Phi^L_\om)-\smin^2(\Phi^L_\om)}{4(L+1)}
<
1-\frac{\smax^2(\Phi^L_\om)-\smin^2(\Phi^L_\om)}{4\smin^2(\Phi^L)},
\eeq
hold at $\om$. 
Then
\beq
\label{lemma33}
R(\om) \ge 
 \sqrt{1-\frac{\smax^2(\Phi^L_\om)-\smin^2(\Phi^L_\om)}{4(L+1)}
\Big(1-\frac{\smax^2(\Phi^L_\om)-\smin^2(\Phi^L_\om)}{4\smin^2(\Phi^L)}\Big)^{-1} 
}.
\eeq
\end{theorem}

\begin{figure}[hthp]
        \centering
       \subfigure[Separation: 2 RL]{\includegraphics[width=5cm]{FigR/RM64Sep2.pdf}}
        \subfigure[Separation: 4 RL]{\includegraphics[width=5cm]{FigR/RM64Sep4.pdf}}
        \subfigure[Separation: 8 RL]{\includegraphics[width=5cm]{FigR/RM64Sep8.pdf}}
\caption{$R(\om)$ when $M = 64, L= 32$ and  frequencies in $\supp$ are separated by (a) $2$ RL, (b) $4$ RL, (c) $8$ RL where RL = 1/64.}
              \label{fig2}
\end{figure}
}

\section{Super-resolution effect of MUSIC}
\label{secsup}
%It is shown in Figure \ref{figPT} that MUSIC has the capability of localizing two closely spaced frequencies. 

Super-resolution refers to the capability of resolving frequencies separated below 1 RL.
The super-resolution effect of MUSIC has been numerically demonstrated in various applications \cite{Devaney,KV96,MUSIC, SAS}, but theoretical guarantees are lacking. In this section  we aim  to analyze the super-resolution effect in light of the preceding results and
that of \cite{Donoho92}.

With  $2s$ noiseless data MUSIC guarantees to exactly recover $s$ distinct frequencies. With noisy data,  the resulting perturbation of the noise-space correlation function depends crucially on $\si_1$ and $\si_s$ (Theorem \ref{thmp1}). As 
\beq
\si_1 & \le \smax(\Phi^L)\xmax\smax(\Phi^{M-L}),\\
\si_s & \ge \smin(\Phi^L)\xmin\smin(\Phi^{M-L}),\label{29}
\eeq
the larger $ \smin(\Phi^L), \xmin, \smin(\Phi^{M-L})$ and the smaller $ \smax(\Phi^L), \xmax, \smax(\Phi^{M-L})$ are, the less sensitive MUSIC is to noise. It follows that a close-to-unity dynamic range $\xmax/\xmin$  and good conditioning of $\Phi^L$ and $\Phi^{M-L}$
are  conducive  to the stability of MUSIC.  

In particular, the denominator $(\si_s -\|E\|_2)^2$ on the right hand side of (\ref{eqp}) indicates that
the amount of noise that can be tolerated by MUSIC is approximately $\sigma_s$, which by (\ref{29}) is at least
$\xmin\smin(\Phi^L)\smin(\Phi^{M-L})$.
 
Hence, to understand the super-resolution effect of MUSIC 
it is essential to estimate the smallest nonzero singular value of $\Phi^L$ with closely spaced frequencies. Under certain weakened gap conditions proposed in \cite{BKL,Donoho92}, we provide an explicit upper bound on  $\smax(\Phi^L)$ and discuss 
the possible implication of the bounds on  $\smin(\Phi^L)$ 
in \cite{Donoho92} on the super-resolution effect.

\subsection{Maximum singular value of $\Phi^L$} 

Motivated by \cite{BKL,Donoho92}, we consider a sequence $\supp=\{\omega_j: j\in \ZZ\}$ which is {\em $s$-periodic} in the sense that $\supp\cap [0,1)=\{\om_1,\om_2,\dots,\om_s\}$ and  $\om_{j+ks}=k+\om_j,\forall k\in \ZZ,  j=1,\dots, s.$ 
Without loss of generality, we assume $0\leq \om_1<\om_2<\cdots<\om_s<1$.

 Define 
\begin{align*}
%\rho &= \frac 1 L\sqrt{\frac{2}{\pi }}\Big(\frac{2}{\pi} - \frac 4 L \Big)^{-\frac 1 2}
%\\
%\\
B(q,L) &= \left\{
\begin{array}{l l}
\frac{4\sqrt 2}{\pi} + \frac{\sqrt 2}{\pi L^2 q^2} + \frac{3\sqrt 2}{L} 
& \text{if } L \text{ is even}
\\
\\
\left(1+\frac 1 L\right)\left(\frac{4\sqrt 2}{\pi} + \frac{\sqrt 2}{\pi (L+1)^2 q^2} + \frac{3\sqrt 2}{L+1}\right)  
& \text{if } L \text{ is odd}.
\end{array}\right.
\end{align*}

\begin{theorem}
\label{thmd1}
\commentout{
 For a fixed positive integer $R$ and $\rho >0$ satisfying
$$ R\rho > \frac 1 L\sqrt{\frac{2}{\pi }}\Big(\frac{2}{\pi} - \frac 4 L \Big)^{-\frac 1 2},$$
}
Suppose $\supp $ is an $s$-periodic sequence 
and  satisfies the following weakened gap condition
\beq
\label{eqd1}
|\om_{j+R} -\om_j| >  R\rho, \ j=1,\ldots, s,
\eeq
for  some $R\in \ZZ^+$ and  $\rho \in \RR^+$.  
%If $j+R \notin \{1,\ldots,s\}$ in \eqref{eqd1}, $\om_{j+R} \equiv \om_{(j+R-ns)}$ for some $n \in \ZZ$. 
Then
\beq
\label{eqd2}
\frac{1}{L}\sum_{k=0}^L \left|\sum_{j=1}^s  \bc_j e^{-2\pi i k\om_j}\right|^2 \le B(R\rho,L) R\|\bc\|_2^2, \ \forall \bc \in \CC^s.
\eeq
In other words,
\beq
\label{eqd3}
\frac{1}{L}\smax^2(\Phi^L) \le   B(R\rho,L) R.
\eeq

\end{theorem}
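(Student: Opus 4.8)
The plan is to derive the bound directly from the discrete Ingham inequality of Theorem~\ref{thm3} by splitting the $s$ frequencies into $R$ well-separated subfamilies and recombining with a Cauchy--Schwarz loss of a factor $R$. Partition the index set $\{1,\dots,s\}$ into the $R$ residue classes $G_r=\{j: j\equiv r\ (\mathrm{mod}\ R)\}$, $r=1,\dots,R$, and write
\[
\sum_{j=1}^{s}\bc_j\,\phi^L(\om_j)=\sum_{r=1}^{R}\Big(\sum_{j\in G_r}\bc_j\,\phi^L(\om_j)\Big).
\]
Applying the elementary inequality $\|\sum_{r=1}^R v_r\|_2^2\le R\sum_{r=1}^R\|v_r\|_2^2$ and then the upper estimate of Theorem~\ref{thm3} (in the form $\tfrac1L\|\Phi^L\bc\|_2^2\le B(q,L)\|\bc\|_2^2$, valid for both parities of $L$ by \eqref{sv}--\eqref{sv2}) to each class separately, together with $\sum_r\|\bc_{G_r}\|_2^2=\|\bc\|_2^2$, produces exactly $\tfrac1L\|\Phi^L\bc\|_2^2\le R\,B(R\rho,L)\|\bc\|_2^2$. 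This also accounts for the shape of \eqref{eqd2}: the factor $R$ is the cost of the Cauchy--Schwarz step over $R$ pieces, while $B(R\rho,L)$ is the single-family Ingham constant evaluated at the effective gap $R\rho$.

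The substance of the argument is showing that each $G_r$ is separated by more than $R\rho$ in the torus metric $d(\cdot,\cdot)$, so that Theorem~\ref{thm3} may legitimately be invoked with gap $R\rho$. Within a class consecutive indices differ by $R$, so the weakened gap condition \eqref{eqd1} gives $\om_{j+R}-\om_j>R\rho$; since the $\om_j$ are increasing on $[0,1)$, every non-wraparound pair in $G_r$ is thus separated by more than $R\rho$. The delicate point---and the step I expect to be the main obstacle---is the wraparound pair, the torus distance between the smallest- and largest-indexed members of a class, which must also exceed $R\rho$. This is exactly where the $s$-periodicity $\om_{j+s}=\om_j+1$ enters: it extends \eqref{eqd1} across the period boundary, so that the ``going-around'' distance $1-(\om_{\max}-\om_{\min})$ of a class coincides with a genuine gap $\om_{j+R}-\om_j$ of the periodically extended sequence. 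This bookkeeping closes cleanly when $R\mid s$, in which case the residue classes are invariant under the shift $j\mapsto j+s$ and the claimed bound follows at once.

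For general $R$ and $s$ one cannot always partition a circular set of density $R$ into $R$ subfamilies that are each $R\rho$-separated, so the clean decomposition above must be supplemented. The robust route is to repeat the Gram-matrix/kernel estimate underlying Theorem~\ref{thm3} directly on the periodic sequence: the weakened gap guarantees that every shell of frequencies at a given torus distance from a fixed $\om_j$ contains at most $O(R)$ points, which is precisely what inflates the separated-case constant $B(R\rho,L)$ by the factor $R$. Either way, once the relevant separation or shell-counting is in hand, monotonicity of $q\mapsto B(q,L)$ lets one replace each true gap $q_r\ge R\rho$ by $R\rho$, and assembling the pieces yields \eqref{eqd2} and hence \eqref{eqd3} after taking the supremum over $\bc$.
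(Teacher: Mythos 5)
Your argument is the same as the paper's proof in Appendix~\ref{appd}: partition by index residue modulo $R$, apply the upper bound of Theorem~\ref{thm3} with gap $R\rho$ to each class, and recombine via $|z_1+\cdots+z_R|^2\le R\left(|z_1|^2+\cdots+|z_R|^2\right)$. But the ``delicate point'' you isolate is not a defect of your write-up --- it is a genuine gap in the paper's own proof, which simply asserts that every class $\supp_m=\{\om_{m+kR}:k\in\ZZ\}$ satisfies $d(\om_j,\om_l)>R\rho$ in the torus metric. That assertion can fail when $R\nmid s$: monotonicity plus \eqref{eqd1} controls lifts whose index difference is a nonzero multiple of $R$, but the torus distance between two class members may be realized by a lift whose index difference is \emph{not} a multiple of $R$. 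Concretely, take $s=5$ equally spaced frequencies $\om_j=(j-1)/5$ and $R=2$, $\rho=0.15$: condition \eqref{eqd1} holds since $|\om_{j+2}-\om_j|=0.4>0.3$ for all $j$, yet the residue class $\{\om_1,\om_3,\om_5\}=\{0,\,0.4,\,0.8\}$ contains the pair $\{0,\,0.8\}$ at torus distance $0.2<R\rho$ (here $d(\om_1,\om_5)=|\om_6-\om_5|$, an index gap of $1$). Your diagnosis that the argument closes exactly when $R\mid s$ is correct: then each class is invariant under $j\mapsto j+s$, every realizing lift has index difference a nonzero multiple of $R$, and the proof (yours and the paper's) is complete.

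Your further remark that no cleverer partition can rescue the general case is also right, and the same example shows it: the conflict graph joining frequencies at torus distance $\le R\rho=0.3$ is the $5$-cycle $0\sim0.2\sim0.4\sim0.6\sim0.8\sim0$, whose chromatic number is $3>R$, so \emph{no} partition into two $R\rho$-separated classes exists. Hence for $R\nmid s$ the decomposition strategy cannot be completed at all, and one is forced to something like your fallback: rerun the kernel estimate of Lemma~\ref{lemma6} using the counting fact that every torus arc of length $R\rho$ contains at most $R$ of the $\om_j$. That sketch is the right direction but is not carried out, and note it will not reproduce the stated constant: within torus distance $R\rho$ of a fixed $\om_j$ there can be up to $2(R-1)$ other frequencies ($R-1$ on each side), each contributing $|G|\approx G(0)$ to the off-diagonal sum, so the direct route gives a factor of roughly $2R-1$ in place of $R$ in \eqref{eqd2}. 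Net assessment: you have a complete proof precisely in the regime where the paper's proof is valid ($R\mid s$); for $R\nmid s$ the bound with the constant $R\,B(R\rho,L)$ is proved by neither you nor the paper. The examples above refute the proofs, not the theorem itself, whose status for $R\nmid s$ is therefore open as stated.
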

Proof of Theorem \ref{thmd1} is in Appendix \ref{appd}.

%\begin{remark}
By choosing $R$ and $\rho$ such that  $R\rho = 2/L$, in which case there are at most $R$ frequencies in any interval of 4 RL (1 RL = 1/(2L) when L = M/2), we can maintain $B(R\rho,L)$ roughly independent of $L\gg 1$ and obtain the (asymptotic) upper bound ${17\sqrt{2}\over 4\pi} R$ for $\smax^2(\Phi^L)/L$. It is noteworthy that  the minimum separation between two consecutive frequencies significantly affect the least nonzero singular value but
not  the largest one.
%\end{remark}

\subsection{Minimum nonzero singular value of $\Phi^L$}
Presently  we can not prove  an {\em explicit} lower bound for $\smin(\Phi^L)$ when two or more  frequencies are spaced below 2 RL (1 RL = 1/(2L) when L = M/2). 

Now we recall  and reformulate the lower
bound established by Donoho \cite{Donoho92}.
Let $\supp=\{\om_j\}_{j\in \ZZ}$ be a subset of the lattice $\mathcal{L}(\Delta) = \{k\Delta, k \in \ZZ\}$ of spacing $\Delta$.
Let  the Rayleigh index $R_*$ be the {\em least} positive integer such that  
\beq
\label{eqd11}
|\om_{j+R_*}-\om_j| >  {2R_*}/L,\quad \forall j. 
\eeq
In other words, $R_*$ is the size of the largest cluster whose members are separated from each other by less than $4$  RL.
Define 
\beq
\label{eqd5}
\nu(\Delta,L,R_*)=\min_{\|c\|_2=1}  \left(\int_{-L/2}^{L/2}\Big|\sum_{j \in \ZZ}\bc_j e^{-2\pi i \om_j t}\Big|^2 dt\right)^{\frac 1 2} .
\eeq
%\begin{enumerate}
%\item 
According to \cite{Donoho92} 
\beq
\nu(\Delta,L,R_*) \ge \Delta^{2R_*+1} \alpha(L,R_*) 
%= \left(\frac{1}{F}\right)^{2R+1} \alpha(T,R)
\label{eqd6}
\eeq
%as $\Delta \rightarrow 0$, 
where $\alpha(L,R_*)$ is some positive constant depending on $L$ and $R_*$. No algorithm is proposed for support reconstruction in \cite{Donoho92}. 

%\item Fix any number $\Delta_0\in (0,1).$  We have 
%\beq
%\nu(\Delta,T,R_*) \le \Delta^{2R_*-1} \beta(T,R_*,\Delta_0),\quad
%\forall \Delta<\Delta_0
%= \left(\frac{1}{F}\right)^{2R-1} \beta(T,R,\Delta_0)
%\label{eqd7}
%\eeq
%where $\beta(T,R_*,\Delta_0)$ is some positive constant depending on $T$, $R_*$ and $\Delta_0$.

%\end{enumerate}

The definition  (\ref{eqd5}) is the continuous analog of
\beq
\label{eqd4}
\smin(\Phi^L)=
\min_{\|c\|_2=1} 
\left(\sum_{k=0}^L \Big| \sum_{j=1}^s \bc_j e^{-2\pi i k\om_j}\Big|^2
\right)^{\frac 1 2}. 
\eeq
Hence by making the identification $\Delta= \min_{i\neq j} d(\om_i,\om_j)=q$, it is plausible that, even with discrete data 
and without  the lattice substrate,  
the bound
\beq
\smin(\Phi^L) \ge q^{2R_*+1} \tilde\alpha(L, R_*) 
%\leq \beta(L,R_*,\Delta_0) q^{2R_*-1}
\label{eqd60}
\eeq
holds  %as $q < 1/L$ 
with some positive constant $\tilde{\alpha}(L,R_*)$
under the condition (\ref{eqd11}).

Our preceding analysis in Sections \ref{seccon} and \ref{secper} is for  the case $R_*=1$. 

%\end{remark}

% \begin{remark}
 %Donoho \cite{Donoho92} conjectured that $\nu(\Delta,T,R_*)\sim \Delta^{p(R_*)}$ for certain power $p(R_*)\in [2R_*-1, 2R_*+1]$ depending on $R_*$ and by analogy $\smin^2(\Phi^L)\sim q^{p(R_*)}$

 %Hence super-resolution is possible for MUSIC while the support set has a low complex of irregularity, i.e, with a small $R$. 

As commented above, for a support set with Rayleigh index $R_*$, the amount of noise that can be tolerated by MUSIC is approximately $\sigma_s$ which, according to \eqref{eqd60},   decays at worst like $\xmin q^{4R_*+2}$ as $q\rightarrow 0$.
Our numerical experiments in Section \ref{secsupnum}
show that the noise level that MUSIC can tolerate obeys  $q^{e(R_*)}$ for $R_* = 2,3,4,5$ and
$e(2) = 3.6691$, $e(3) = 6.0565$, $e(4) = 8.3861$ and $e(5) = 11.2392$, suggesting that $e(R_*)\approx 2.504 R_*-1.4262$.
%MUSIC can  tolerate  the noise level with the optimal power  $p(R_*)\approx 2R_*-1$, for $R_* = 2, 3, 4, 5$ (Figure \ref{figPT}).

%In the context of Figure \ref{figPT}, $R=2$ and then $e(R) \in [3,5]$. $\Delta$ represents the separation of two closely spaced frequencies (x-axis). In Figure \ref{figPT}(a) we plot two curves $y = ax^k$ with $k=3,5$ and a proper choice of $a$ such that they pass through $(0,0)$ and $(0.4,4)$. Interestingly the phase transition curve (green curve) fits well with $y = a x^3$. 
 
 %Moreover, $\nu(\Delta,T,R)$ is directly tied to $R$ and it may be extremely difficult for MUSIC to recover a support set with a high degree of clumping$/$irregularity.

%%%%%%%%%%%%%%%%%%%%%%%%%%%%%%%%%%%%%%%%%%%%%%%%%%%%%%%%%%%%%%%%%%%%%%%%

\commentout{
\textcolor{red}{----------Before Aug 28----------}

\begin{theorem}[{\cite[Theorem 1.3 and 1.4]{Donoho92}}]
\label{thmd2}
Let $\supp=\{\om_j: j\in \ZZ\}$ be any \textcolor{red}{$s$-periodic} subset of the lattice $\mathcal{L}(\Delta) = \{k\Delta, k \in \ZZ\}$ of spacing $\Delta$.
Let  the Rayleigh index $R_*$ be the least positive integer such that  
\beq
\label{eqd11}
|\om_{j+R_*}-\om_j| >  R_*/M,\quad \forall j. 
\eeq
Define 
\beq
\label{eqd5}
\nu(\Delta,T,R_*)=\min_{\|c\|_2=1}  \int_{-T}^T \Big|\sum_{j=1}^s\bc_j e^{-2\pi i \om_j t}\Big|^2 dt .
\eeq
Then the following statements hold.
\begin{enumerate}
\item 
For  $T > 1$, we have 
\beq
\nu(\Delta,T,R_*) \ge \Delta^{2R_*+1} \alpha(T,R_*) 
%= \left(\frac{1}{F}\right)^{2R+1} \alpha(T,R)
\label{eqd6}
\eeq
where $\alpha(T,R_*)$ is some positive constant depending on $T$ and $R_*$.

\item Fix any number $\Delta_0\in (0,1).$  We have 
\beq
\nu(\Delta,T,R_*) \le \Delta^{2R_*-1} \beta(T,R_*,\Delta_0),\quad
\forall \Delta<\Delta_0
%= \left(\frac{1}{F}\right)^{2R-1} \beta(T,R,\Delta_0)
\label{eqd7}
\eeq
where $\beta(T,R_*,\Delta_0)$ is some positive constant depending on $T$, $R_*$ and $\Delta_0$.

\end{enumerate}
\end{theorem}

\begin{remark} The definition  (\ref{eqd5}) is the continuous analog of
\beq
\label{eqd4}
\smin^2(\Phi^L)=
\min_{\|c\|_2=1} \sum_{k=0}^L \Big| \sum_{j=1}^s \bc_j e^{-2\pi i k\om_j}\Big|^2. 
\eeq
Hence by making the identifications  $L=2T$ and $\Delta= \min_{i\neq j} d(\om_i,\om_j)=q$, it is plausible that, even with discrete data 
and without  the lattice substrate,  
the bounds 
\[
\alpha(L, R_*) q^{2R_*+1} \leq \smin^2(\Phi^L)\leq \beta(L,R_*,\Delta_0) q^{2R_*-1}
\]
hold  for any positive $q<\Delta_0<1$ under the condition (\ref{eqd11}) 
that the largest cluster with consecutive frequencies separated by at most 1 RL has exactly $R_*$ elements. 

Our preceding analysis in Sections \ref{seccon} and \ref{secper} deals with the case $R_*=1$. 
%\end{remark}

% \begin{remark}
 Donoho \cite{Donoho92} conjectured
 that $\nu(\Delta,T,R_*)\sim \Delta^{p(R_*)}$ for certain power $p(R_*)\in [2R_*-1, 2R_*+1]$ depending on $R_*$ and by analogy $\smin^2(\Phi^L)\sim q^{p(R_*)}$

 %Hence super-resolution is possible for MUSIC while the support set has a low complex of irregularity, i.e, with a small $R$. 
 \label{rmk12}
\end{remark}

As commented above, 
the amount of noise that can be tolerated by MUSIC is approximately $\sigma_s$ which, according to Remark \ref{rmk12},   is roughly proportional to $\xmin q^{p(R_*)}$
with $p(R_*)\in [2R_*-1, 2R_*+1]$.

No algorithm is proposed for support reconstruction in \cite{Donoho92}. On the other hand, our numerical experiments
show that MUSIC can  tolerate  the noise level with the optimal power  $p(R_*)\approx 2R_*-1$, for 
$R_* = 2, 3, 4, 5$ (Figure \ref{figPT}).

%In the context of Figure \ref{figPT}, $R=2$ and then $e(R) \in [3,5]$. $\Delta$ represents the separation of two closely spaced frequencies (x-axis). In Figure \ref{figPT}(a) we plot two curves $y = ax^k$ with $k=3,5$ and a proper choice of $a$ such that they pass through $(0,0)$ and $(0.4,4)$. Interestingly the phase transition curve (green curve) fits well with $y = a x^3$. 
 
 %Moreover, $\nu(\Delta,T,R)$ is directly tied to $R$ and it may be extremely difficult for MUSIC to recover a support set with a high degree of clumping$/$irregularity.

}

\section{Numerical experiments}
\label{secnum}

A systematic numerical simulation is performed on MUSIC, BLOOMP, SDP and Matched Filtering using prolates in this section, showing that MUSIC combines the advantages of strong stability and low computation complexity for the detection of well-separated frequencies and furthermore only MUSIC yields an exact reconstruction in the noise-free case regardless of the distribution of true frequencies and processes  the capability of resolving closely spaced frequencies.

\subsection{Algorithms tested.}
We compare the performances of various algorithms on the spectral estimation problem 
(\ref{model})
with $M = 100$ and i.i.d. Gaussian noise, i.e. $\ep \sim N(0,\sigma^2 I) + i N(0, \sigma^2 I)$. 
Define the  
\beq
{\hbox{Noise-to-Signal Ratio (NSR)} } = \mathbb{E}(\|\ep\|_2)/\|y\|_2 = \sigma\sqrt{2(M+1)}/\|y\|_2.
\label{eqnsr}
\eeq

We test and compare the following algorithms.
\begin{enumerate}

\item The MUSIC algorithm: As suggested by \eqref{ce1} in Corollary \ref{cor2}  we set $M=2L$. 

\commentout{
\begin{center}
   \begin{tabular}{|l|}\hline
    %\centerline
    { {\bf MUSIC algorithm}} \\ \hline
    {\bf Input:} $y^\ep \in \CC^{M+1}, s, L$. \\
     1) Form matrix $H^\ep = {\rm Hankel}(y^\ep) \in \CC^{(L+1)\times(M-L+2)}$.
     \\
     2) SVD: $H^\ep = [U_1^\ep\  U_2^\ep] {\rm diag}(\si_1^\ep , \ldots , \si_s^\ep ,\ldots) [V_1^\ep\ V_2^\ep]^\star $, where $U_1^\ep \in \CC^{(L+1)\times s}$.\\
     3) Compute imaging function $J^\ep(\om) = \|\phi^{L}(\om)\|_2 /\|{U_2^\ep}^\star \phi^L(\om)\|_2$. \\
   {\bf Output:} $S =\{ s \text{ largest local maximizers of } J^\ep(\om) \} $.\\
    \hline
   \end{tabular}
\end{center}
}

\item Band-excluded and Locally Optimized Orthogonal Matching Pursuit (BLOOMP) \cite{FL}: BLOOMP works with an arbitrarily fine grid with grid spacing $\ell = {\rm RL}/F$ where $F$ is the refinement/super-resolution factor. In the presence of noise it is unnecessary to set an extremely large $F$.  A rule of thumb for the problem of spectral estimation is that $F\sim \hbox{\rm SNR}$ gives rise to  a gridding error comparable to the external noise \cite[Fig. 1]{FL}. For instance  $F = 20$ is adequate when SNR $\geq 5\%$. When frequencies are separated above $3$ RL (i.e., in Fig. \ref{figc3}(b), Fig. \ref{figc1}(b) and Fig. \ref{figc2}(a)(b)), we can set the radii of excluded band and local optimization to be $2$ RL and $1$ RL,  respectively. When frequencies are separated between 2 RL and 3 RL (i.e., in Fig. \ref{figc2}(c)(d), the radii of excluded band and local optimization is set to be 1 RL.

\item SemiDefinite Programming (SDP) \cite{Csr2,Tang}: The code is  from \url{http://www.stanford.edu/~cfgranda/superres_sdp_noisy.m} where SDP is solved through CVX, a package for specifying and solving convex programs \cite{cvx1,cvx2}. Output of SDP is the dual solution of total variation minimization. In the code, frequencies are identified through root findings of a polynomial and amplitudes are solved through least squares. Let $\tilde {\om}=[\tilde\om_j]_{j=1}^n \in \RR^n$ be the frequencies retrieved from the code and $ \tilde x =[\tilde x_j]_{j=1}^n \in \CC^n$ be the corresponding amplitudes.
Usually $n$ is greater than $s$. A straightforward way of extracting $s$ reconstructed frequencies is by Hard Thresholding (HT), i.e., picking the frequencies corresponding to the $s$ largest amplitudes in $\tilde x$.
We also test if the Band Excluded Thresholding (BET) technique introduced in \cite{FL}  can improve on hard thresholding and 
enhance the performance of SDP (Fig. \ref{figc2}).
BET amounts to trimming $\tilde\om\in\RR^n$ to $\hat\om\in \RR^s$ 
as follows. 

%Let $\{{\rm e}_k\}_{k=1}^s$ be the standard basis of $\RR^s$. 

%Let $\tilde {\om}=[\tilde\om_j]_{j=1}^n \in \RR^n$ be the SDP reconstructed frequencies (the estimated support vector in the ascending order)  and let $ \tilde x =[\tilde x_j]_{j=1}^n \in \CC^n$ be the corresponding (nonzero) amplitudes where $n$ is usually greater than $s$. 

\begin{center}
   \begin{tabular}{|l|}\hline
    %\centerline
    { {\bf Band Excluded Thresholding (BET)}} \\ \hline
    {\bf Input:} $\tilde\om, \tilde x, s , r$ (radius of excluded band). 
    \\
    {\bf Initialization:} $\hat \om= [\ ]$.
    \\
    {\bf Iteration:}     for $k = 1, \ldots, s$
    \\
     1) Find $j$ such that $|{\tilde x}_{j}| = \max_i |{\tilde x}_i|$.
     \\  \quad \,If ${\tilde x}_{j}=0$, then go to {\bf Output}.\\
     2) Update the support vector: $\hat \om = [\hat\om\ ; \ \tilde\om_j]$.\\
%     and $\tilde x = [\tilde x \  x^k_{\hat i}]$.
%3) Update the estimate: $\tilde x_{k}=\hat x_{j}$.\\
     3) For $i = 1: n$
     \\
     \qquad  If $\tilde \om_i \in (\tilde\om_{j}- r,\tilde\om_{j} + r)$, set $\tilde x_i = 0$.	
     \\
   {\bf Output:} $\hat\om$.\\
    \hline
   \end{tabular}
\end{center}

\commentout{
\begin{center}
   \begin{tabular}{|l|}\hline
    %\centerline
    { {\bf Band Excluded Thresholding (BET)}} \\ \hline
    {\bf Input:} $\hat\om, \hat x, s , r$ (radius of excluded band). 
    \\
    {\bf Initialization:} $\tilde \om= \mathbf{0}\in \RR^s$, $\tilde x=\mathbf{0}\in \CC^s$.
    \\
    {\bf Iteration:}     for $k = 1, \ldots, s$
    \\
     1) Find $j$ such that $|{\hat x}_{j}| = \max_i |{\hat x}_i|$.
     \\  \quad \,If ${\hat x}_{\hat i}=0$, then go to {\bf Output}.\\
     2) Update the support vector: $\tilde\om \hookleftarrow \tilde\om+\hat \om_{j} {\rm e}_k$.\\
%     and $\tilde x = [\tilde x \  x^k_{\hat i}]$.
3) Update the estimate: $\tilde x_{k}=\hat x_{j}$.\\
     4) For $i = 1: n$
     \\
     \qquad  If $\hat \om_i \in (\hat\alpha_{j}- r,\hat\alpha_{j} + r)$, set $\hat x_i = 0$.
     \\
   {\bf Output:} $\tilde\om$, $\tilde x$.\\
    \hline
   \end{tabular}
\end{center}
}
When frequencies are separated by at least 4 RL, we choose $r = $ 1 RL.
%Note that the BET amplitudes  $\tilde x$ are now  in the descending order of magnitude while the corresponding frequencies  $\tilde\om$ are  not in any particular order. 

\item Matched Filtering (MF) using prolates: In \cite{Armin}, Eftekhari and Wakin use matched filtering windowed by the Discrete Prolate Spheroidal (Slepian) Sequence \cite{DPSS} for the same problem while frequencies are extracted by band-excluded and locally optimized thresholding proposed in \cite{FL}. In its current form,  MF using prolates can not deal with complex-valued amplitudes  so it is tested with real-valued amplitudes only. 
\end{enumerate}

Reconstruction error is measured by Hausdorff distance between the exact ($\supp$) and the recovered ($\hat \supp$) sets of frequencies:
\beq
\label{hausdorff}
d(\hat \supp,\supp) = \max\left\{ \max_{\hat\om\in\hat\supp }\min_{\om\in\supp}d(\hat\om,\om)\ ,\ \max_{\om\in\mathcal{S}}\min_{\hat\om\in\hat\supp }d(\hat\om,\om)\right\}.
\eeq

\subsection{Noise-free case} In the noise-free case only MUSIC processes a theory of exact reconstruction regardless of the distribution of true frequencies. In theory, BLOOMP requires a separation of 3 RL for approximate support recovery while SDP requires a separation of 4 RL for exact recovery. In this test we use the four algorithms to recover  $15$ real-valued amplitudes separated by $1$ RL. Figure \ref{figc3} shows that MUSIC achieves the accuracy of about $0.004$ RL while BLOOMP, SDP and MF using prolates essentially fail, which implies that certain separation condition is necessary for BLOOMP, SDP and MF using prolates.
\begin{figure}[hthp]
        \centering
        %% MUSIC
       \subfigure[MUSIC. Red: exact; Blue: recovered. $d(\hat\supp ,\supp)\approx 0.004$ RL.]{\includegraphics[width=7cm]{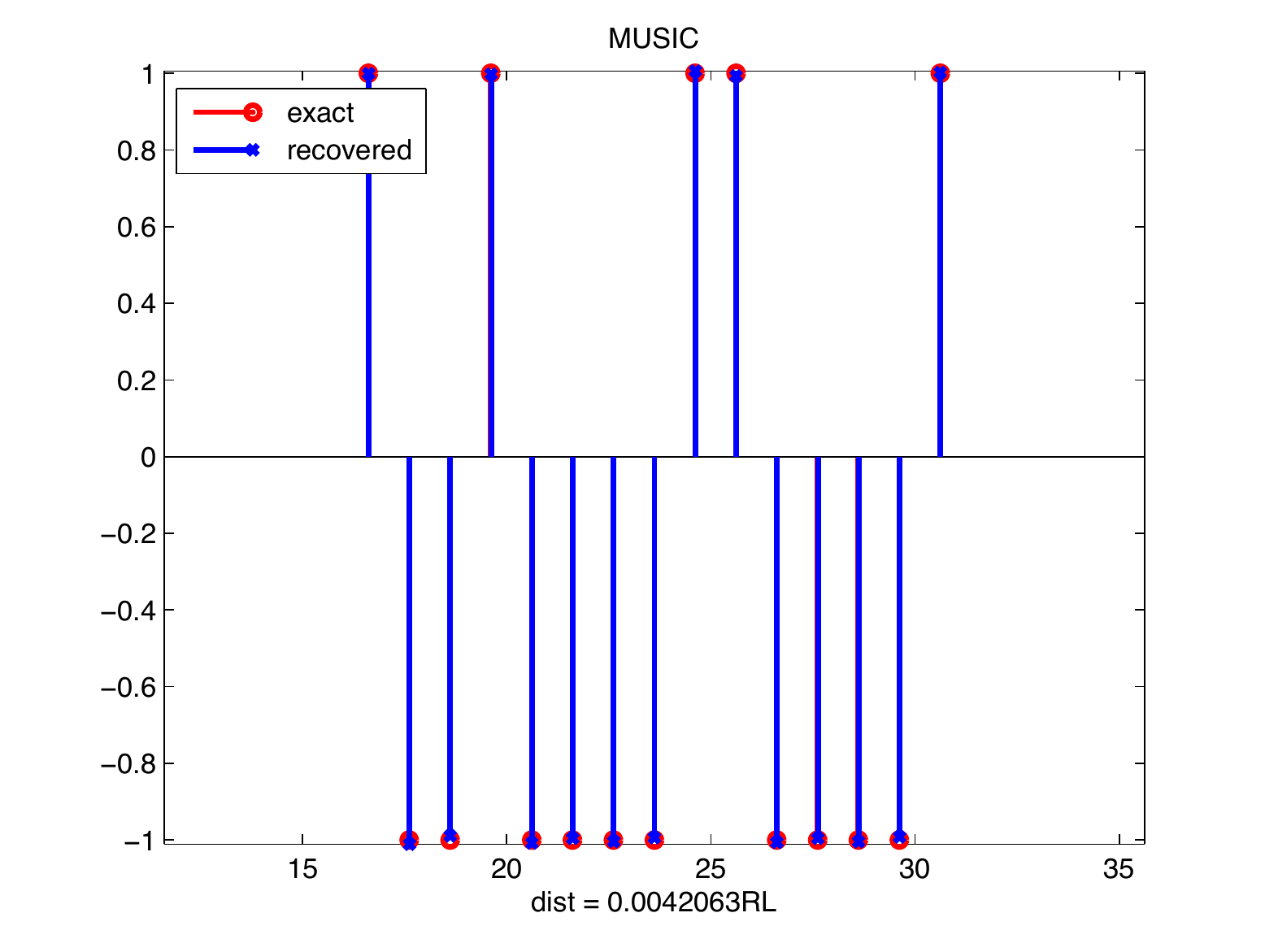}}
       \qquad
       %% BLOOMP
        \subfigure[BLOOMP. Red: exact; Blue: recovered. $d(\hat\supp ,\supp)\approx 1.81$ RL.]{\includegraphics[width=6.9cm,height=5.5cm]{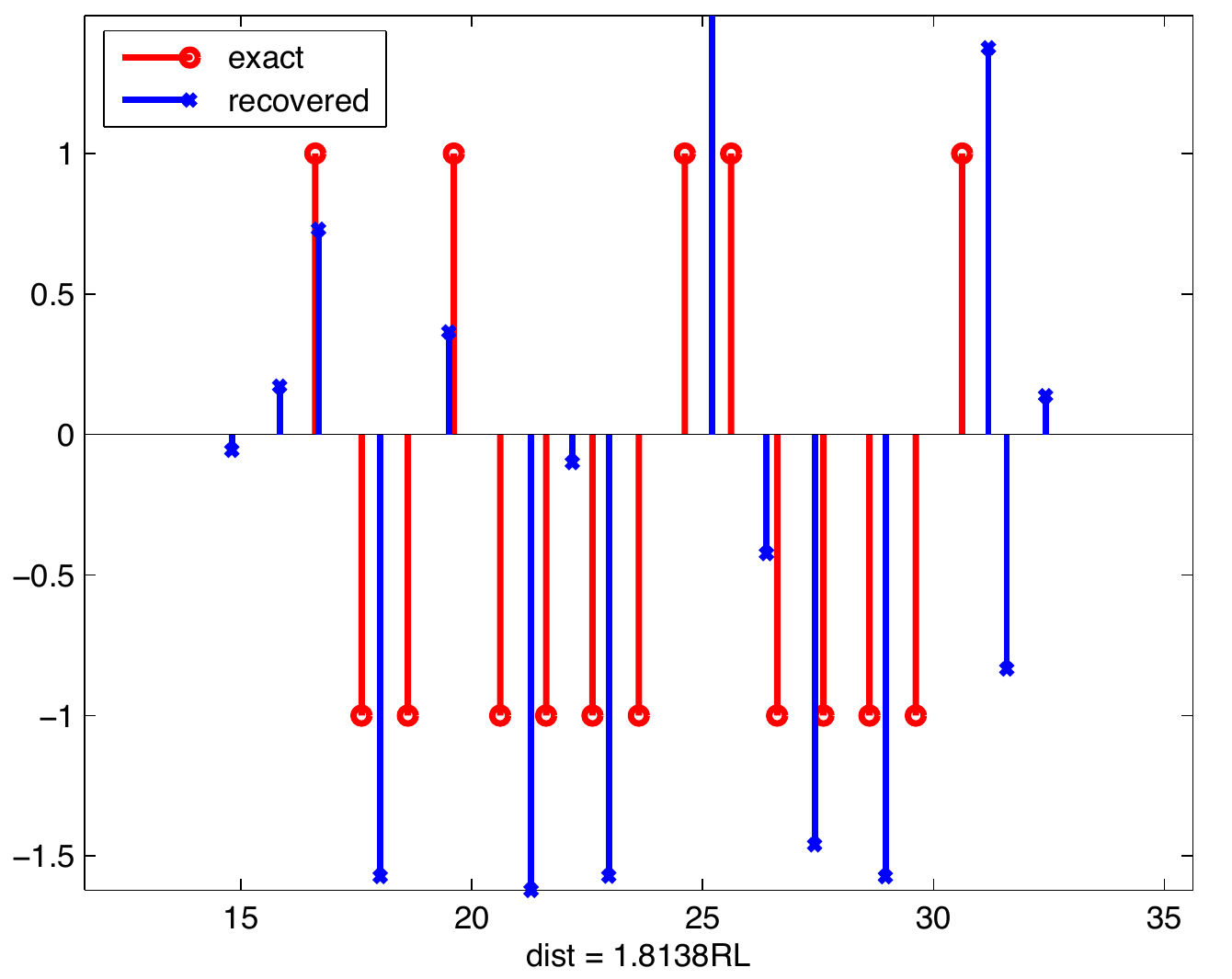}}
        %% SDP
         \subfigure[SDP. Red: exact; Blue: recovered.  Hard thresholding yields $d(\hat\supp ,\supp)\approx 2.72$ RL.]{\includegraphics[width=7cm]{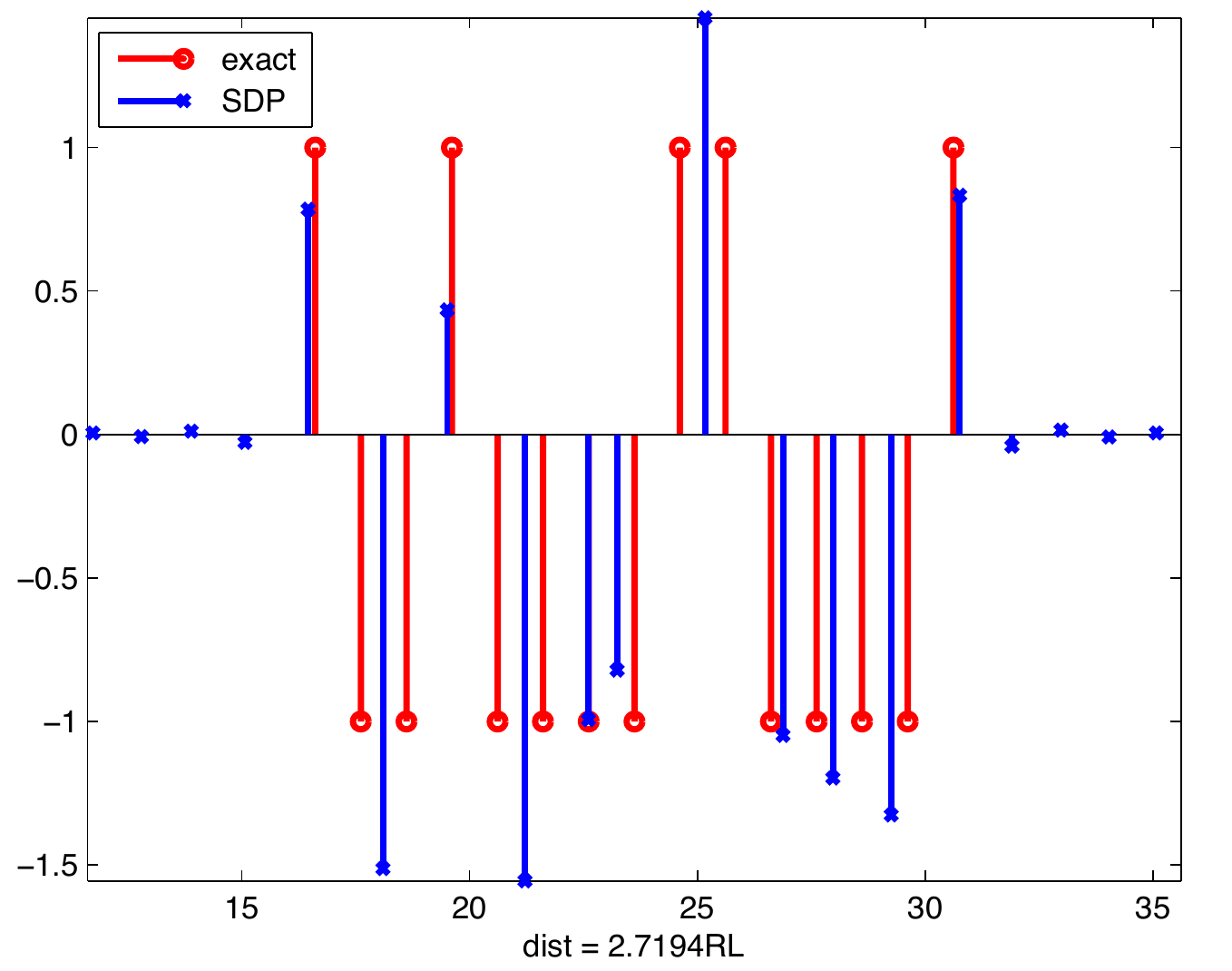}}
         \quad
            %% BLO-based DPSS MF
         \subfigure[MF using prolates. Red: exact; Blue: inverse Fourier transform of $y$ windowed by the first DPSS sequence.]{\includegraphics[width=7cm]{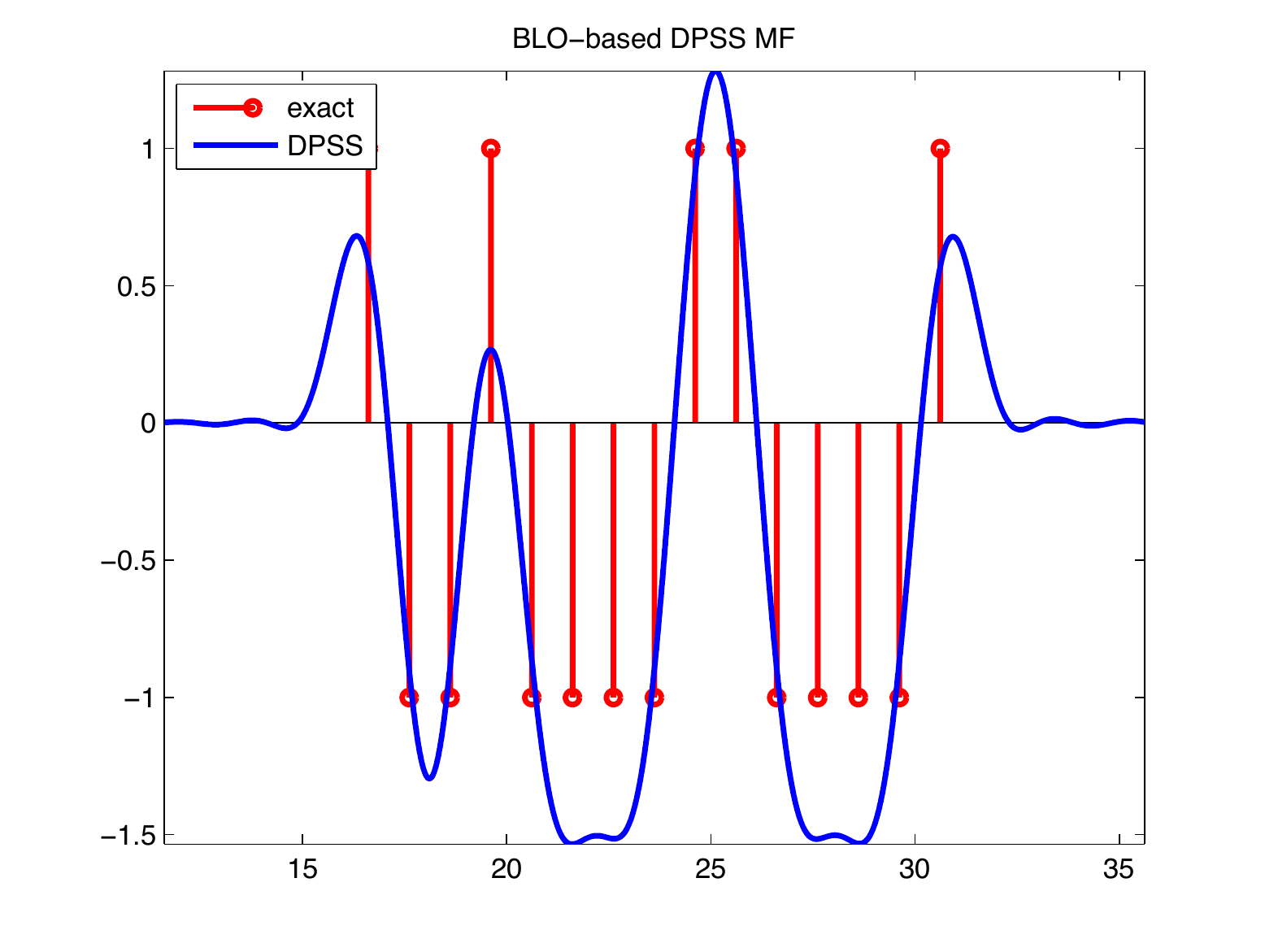}}
           \caption{Reconstruction of 15 real-valued frequencies separated by 1 RL. Dynamic range $= 1$ and NSR $ = 0\%$.           }
              \label{figc3}
\end{figure}

\subsection{Detection of well-separated frequencies}
\begin{figure}[hthp]
        \centering
        %% MUSIC
       \subfigure[MUSIC. Red: exact; Blue: recovered. $d(\hat\supp ,\supp)\approx 0.06$ RL.]{\includegraphics[width=7cm]{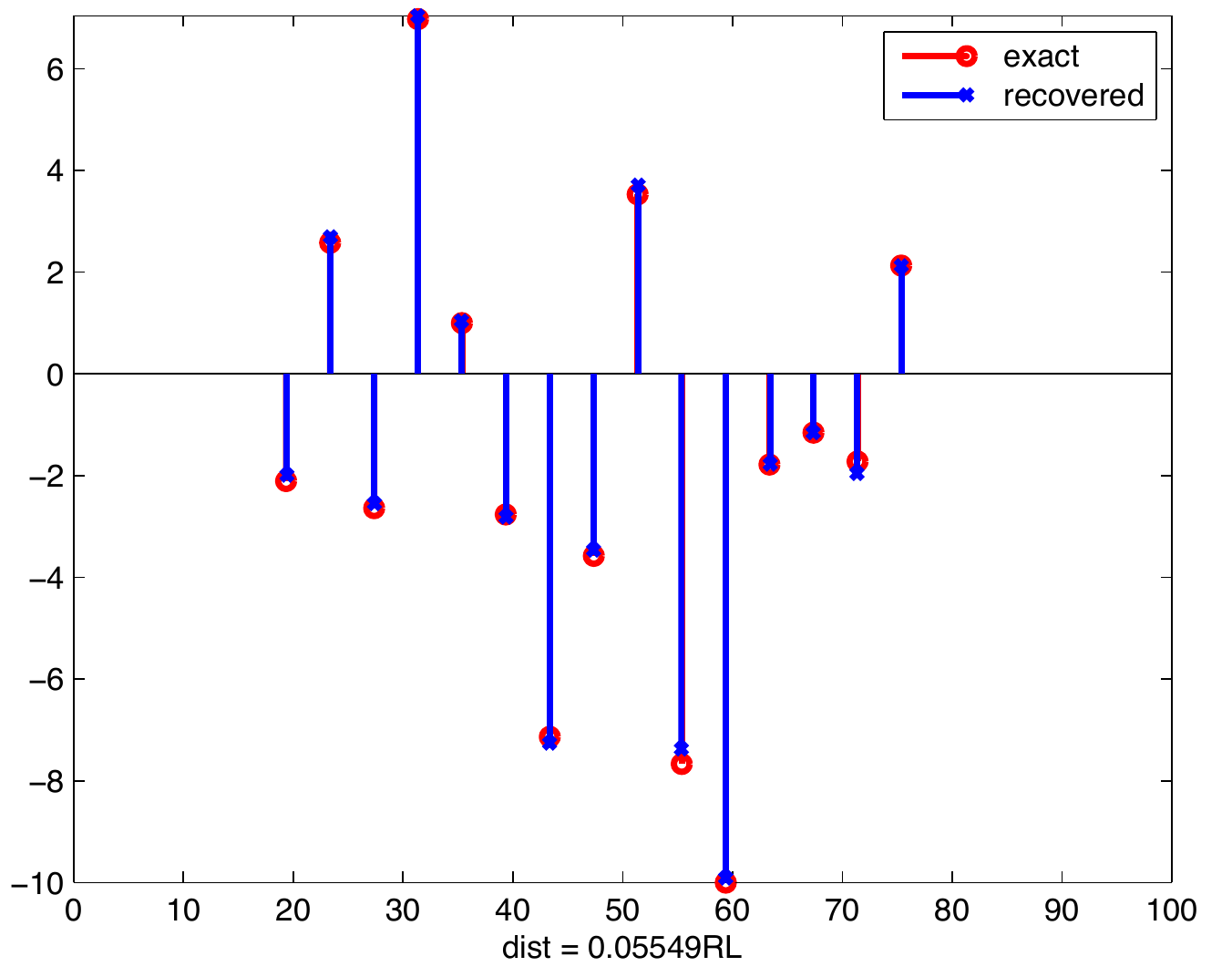}}
       \qquad
       %% BLOOMP
        \subfigure[BLOOMP. Red: exact; Blue: recovered. $d(\hat\supp ,\supp)\approx 0.05$ RL.]{\includegraphics[width=7cm]{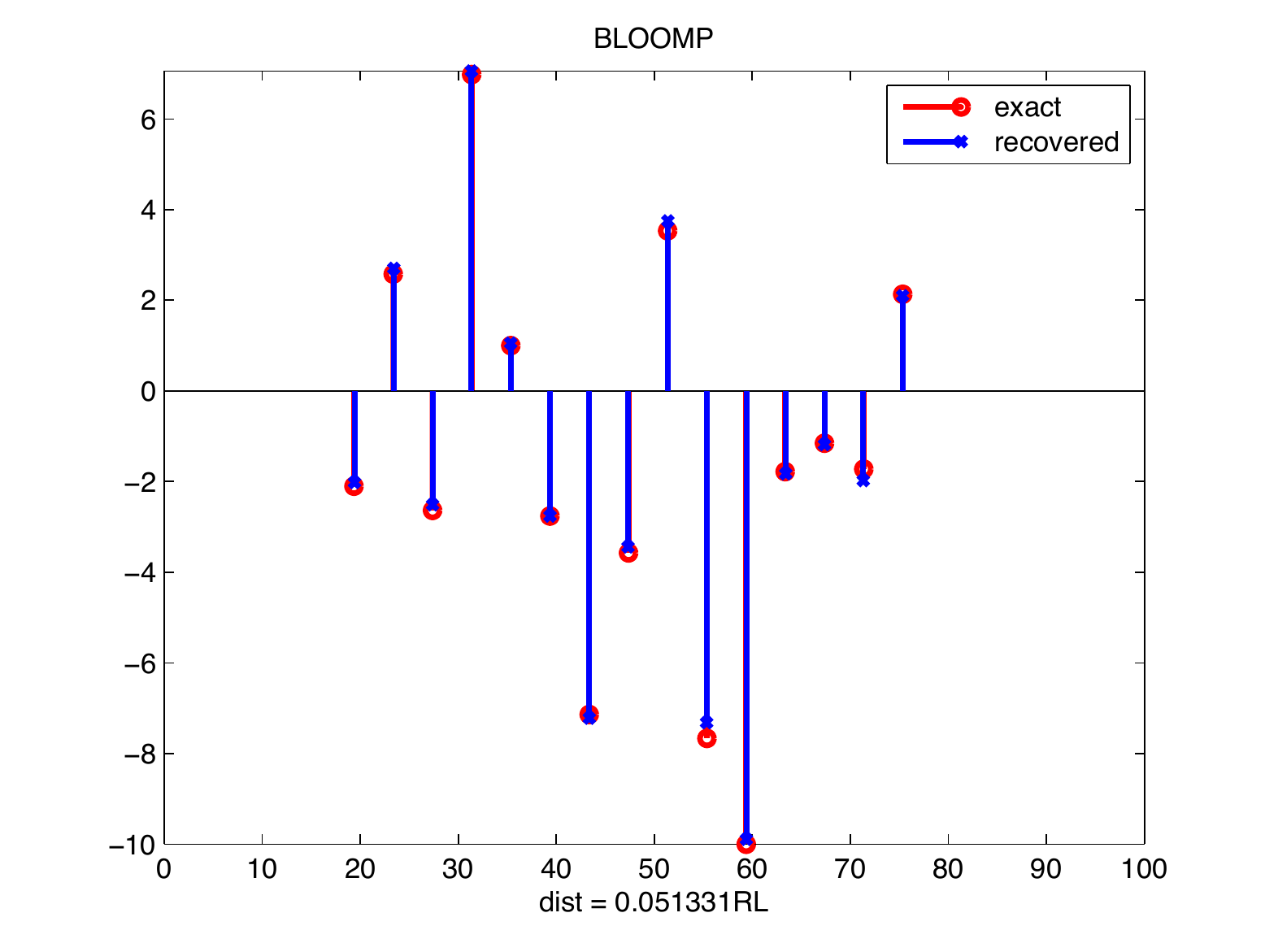}}
        %% SDP
         \subfigure[SDP. Red: exact; Blue: Primal solution of SDP. Hard thresholding (green) yields $d(\hat S,\supp) \approx 3.94$ RL. The true amplitude  around $33$ RL is recovered as two amplitudes and the BET technique  can be used to eliminate the smaller one in the step of frequency selection. ]{\includegraphics[width=7.1cm,height=5.5cm]{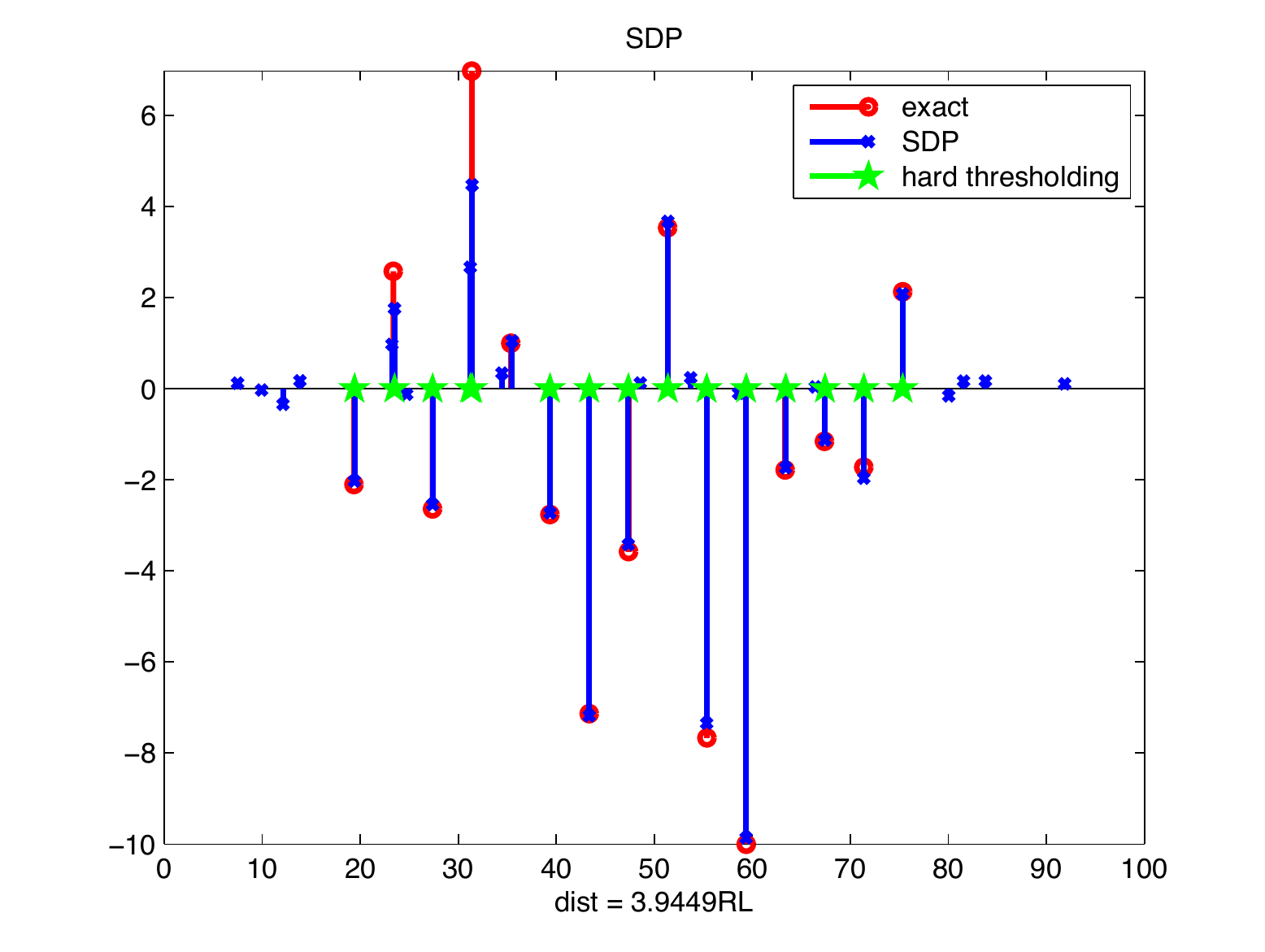}}
         \qquad
         %% BLO-based DPSS MF
          \subfigure[MF using prolates. Red: exact; Blue: inverse Fourier transform of $y^\ep$ windowed by the first DPSS sequence; Green: frequencies selected by the BLO technique. $d(\hat\supp ,\supp)\approx 0.10$ RL.]{\includegraphics[width=7cm]{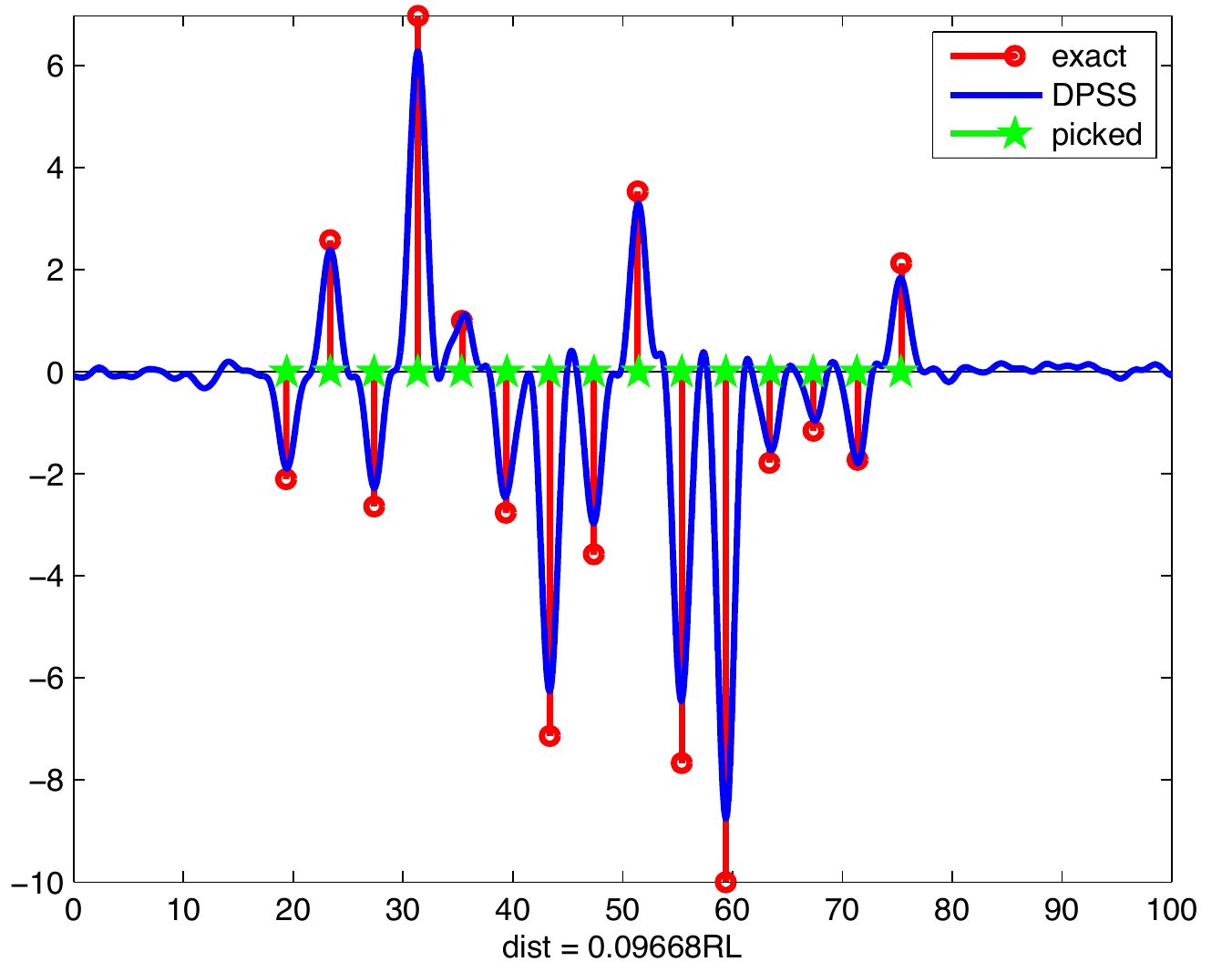}}      
          \caption{Reconstruction of $15$ real-valued amplitudes separated by $4$ RL. Dynamic range $= 10$ and NSR $ = 10\%$.
          }
              \label{figc1}
\end{figure}

Figure \ref{figc1} shows reconstructions of $15$ real-valued frequencies separated by $4$ RL. By extracting $15$ largest local maxima of the imaging function $J^\ep(\om)$, MUSIC yields a reconstruction distance of $0.06$ RL. As predicted by the theory in \cite{FL}, every recovered object of BLOOMP is within 1 RL distance from a true one.  Indeed, in this simulation BLOOMP achieves the best accuracy of $0.05$ RL among tested algorithms. The primal solution of SDP is  usually not $s$-sparse and the recovered frequencies tend to cluster around the true ones \cite{Csr2}
which degrades the accuracy.  The Hausdorff distance between the recovered spikes with the $s$ strongest amplitudes  and the true frequencies is 3.94 RL in this simulation. The  BET  technique can be used to enhance the accuracy of reconstruction and
achieve the accuracy of 0.13 RL.  Similarly  the BLO  technique introduced in \cite{FL}  can
be  applied to improve  the result of Matched filtering windowed by the DPSS sequence (the blue curve in Figure \ref{figc1}(d)) and achieve the accuracy of $0.10$ RL.

Figure \ref{figc2} shows the average errors of 100 trials by SDP with HT, BET-enhanced SDP, BLOOMP and MUSIC for complex-valued objects separated between 4 RL and 5 RL (Fig. \ref{figc2}(a)(b)) or separated between 2 RL and 3 RL (Fig. \ref{figc2}(c)(d)) versus NSR when dynamic range = 1 (Fig. \ref{figc2}(a)(c)) and when dynamic range = 10 (Fig. \ref{figc2}(b)(d)). In this simulation $[0,1)$ are fully occupied by frequencies satisfying the separation condition and amplitudes $x$ are complex-valued with random phases. Refinement factor $F$ in BLOOMP is adaptive according to the rule: $F = \max(5,\min(1/{\rm NSR},20))$.
Figure \ref{figc2} shows that BLOOMP is the stablest algorithm while frequencies are separated above 4 RL and MUSIC becomes the stablest one while frequencies are separated between 2 RL and 3 RL. Simply extracting $s$ largest amplitudes from the SDP solution (black curve) is not a good idea and the BET technique (green curve) can mitigate the problem with SDP. The average running time in Figure \ref{figc2} shows that MUSIC takes about 0.33s for one experiment and is the most efficient one among all methods being tested. SDP needs about 20.5s for one experiment and is computationally most expensive. Running time of BLOOMP is dependent on sparsity $s$ and refinement factor $F$. The running time of BLOOMP in Fig. \ref{figc2}(c)(d) is more than the time in Fig. \ref{figc2}(a)(b) as $s \in [33,50]$ in Fig. \ref{figc2}(c)(d) and $s \in [20,25]$ in Fig. \ref{figc2}(a)(b).

\commentout{
As an example in Fig. \ref{figc2}(d), we display the reconstruction of frequencies separated between 2 RL and 3 RL by MUSIC, BLOOMP, SDP and BET-enhanced SDP when dynamic range $= 10$ and NSR = $12\%$ in Figure \ref{figc20}. The reason why the SDP curve in Fig. \ref{figc2} is bad is demonstrated in Fig. \ref{figc20}. The primal solution of SDP usually contains more than $s$ spikes, and a frequency with large amplitude can split into two or more spikes around it. If hard thresholding is applied for frequency extraction, it is usually the case that two or more candidates are selected around that frequency with large amplitude and some others with small amplitudes are missed. 
}

\begin{figure}[hthp]
        \centering
                          %%%%%%%%%%%%%%%%%%%%%%%%%%%%%%
          Frequencies separated between 4 RL and 5 RL\\
             %% Range1
               \subfigure[Dynamic range = 1. Average running time for SDP and MUSIC in one experiment is 20.3583s and 0.3627s while the average running time for BLOOMP is 6.3420s ($F=20$), 3.2788s ($F=10$) and 1.7610($F = 5$).]{\includegraphics[width=7cm]{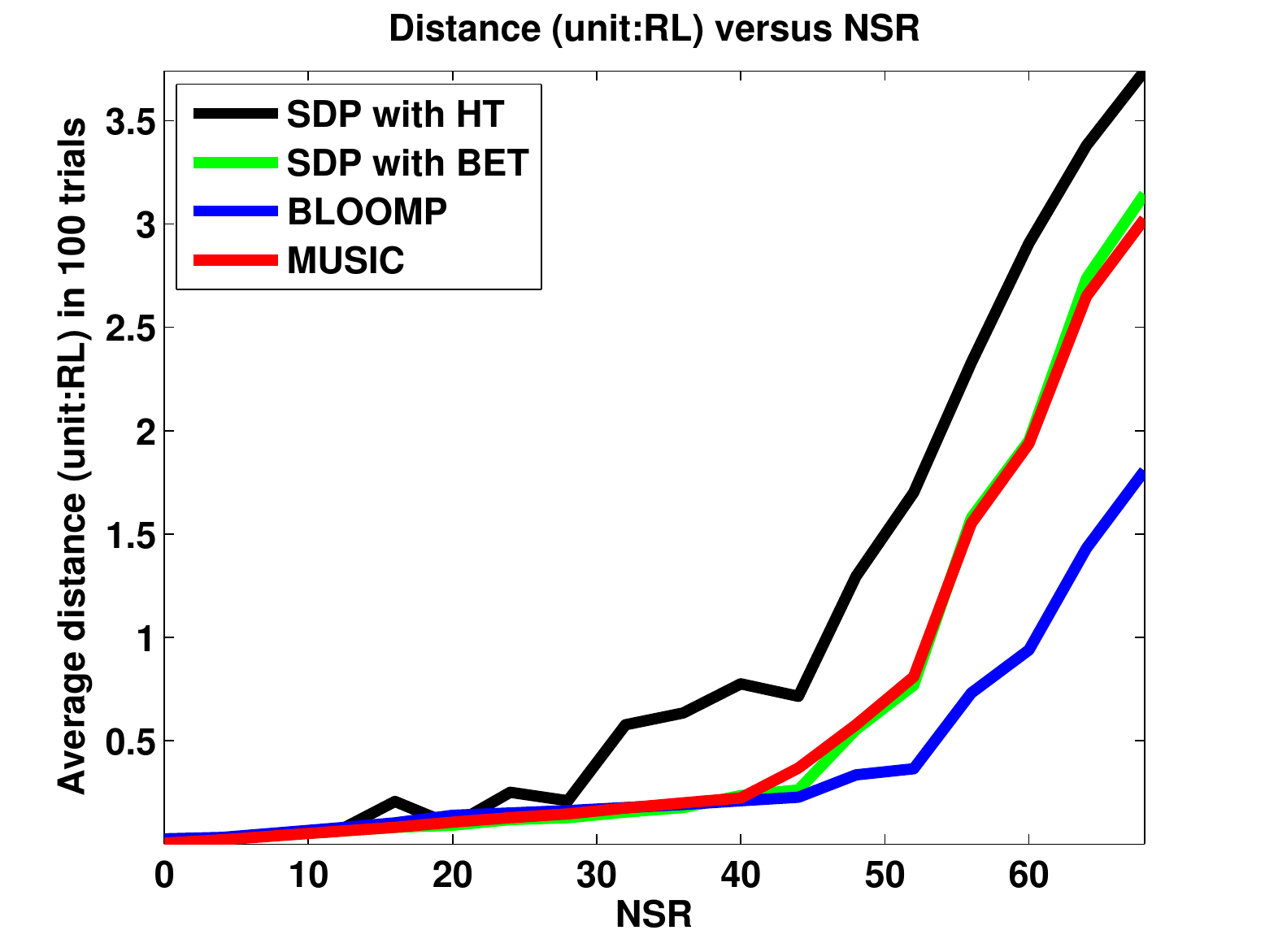}}
               \qquad
       %% Range10
        \subfigure[Dynamic range = 10. Average running time for SDP and MUSIC in one experiment is 20.5913s and 0.3661s while the average running time for BLOOMP is 6.2623s ($F=20$), 3.3030s ($F=10$) and 1.7542s ($F=5$).]{\includegraphics[width=7cm]{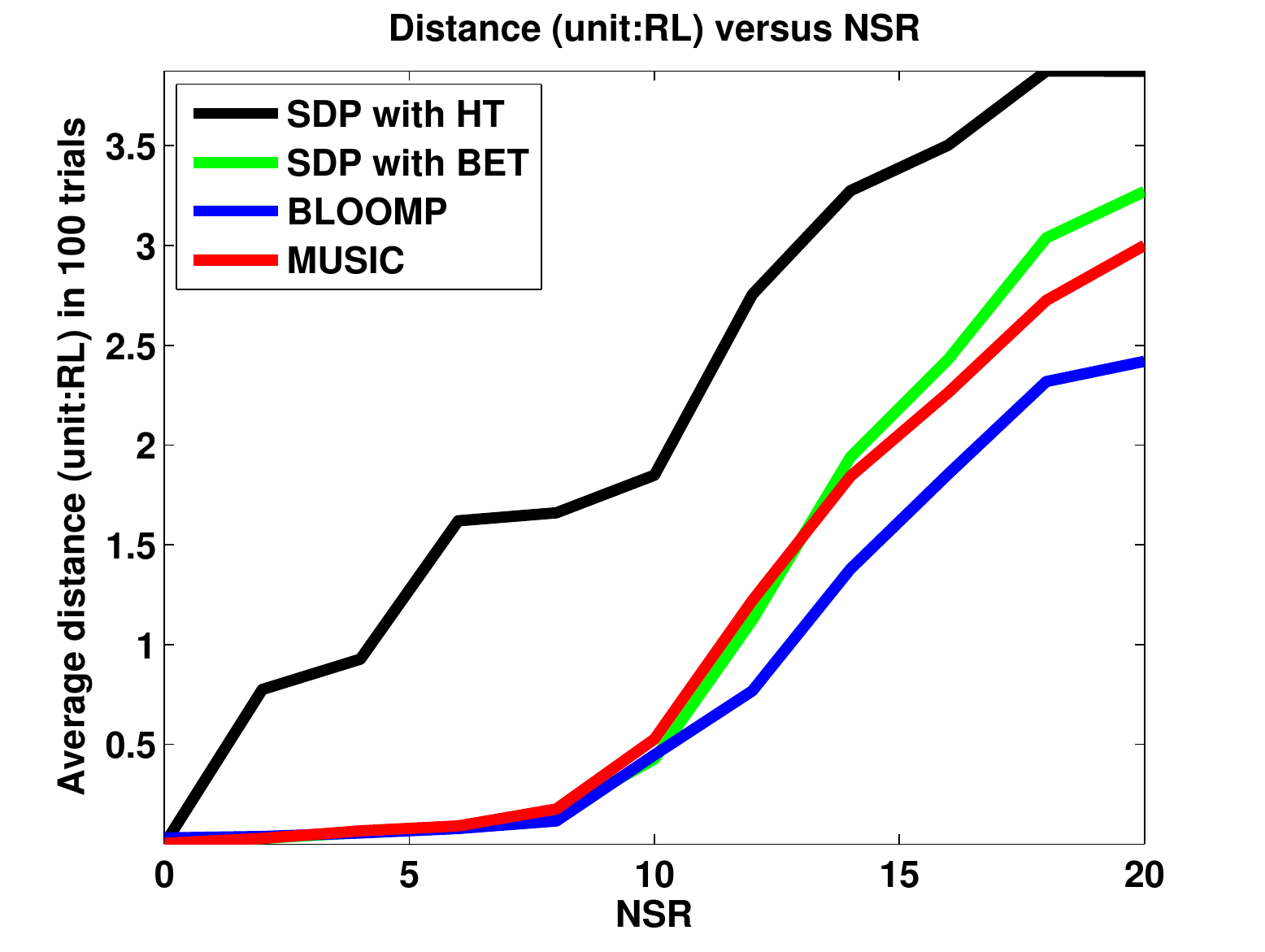}}
        \ \\
        %%%%%%%%%%%%%%%%%%%%%%%%%%%%%%
          Frequencies separated between 2 RL and 3 RL \\
             %% Range1
               \subfigure[Dynamic range = 1. Average running time for SDP and MUSIC in one experiment is 20.6750s and 0.3334s while the average running time for BLOOMP is 19.8357s ($F=20$), 11.2349s ($F=9$) and 8.0850s ($F=5$).]{\includegraphics[width=7cm]{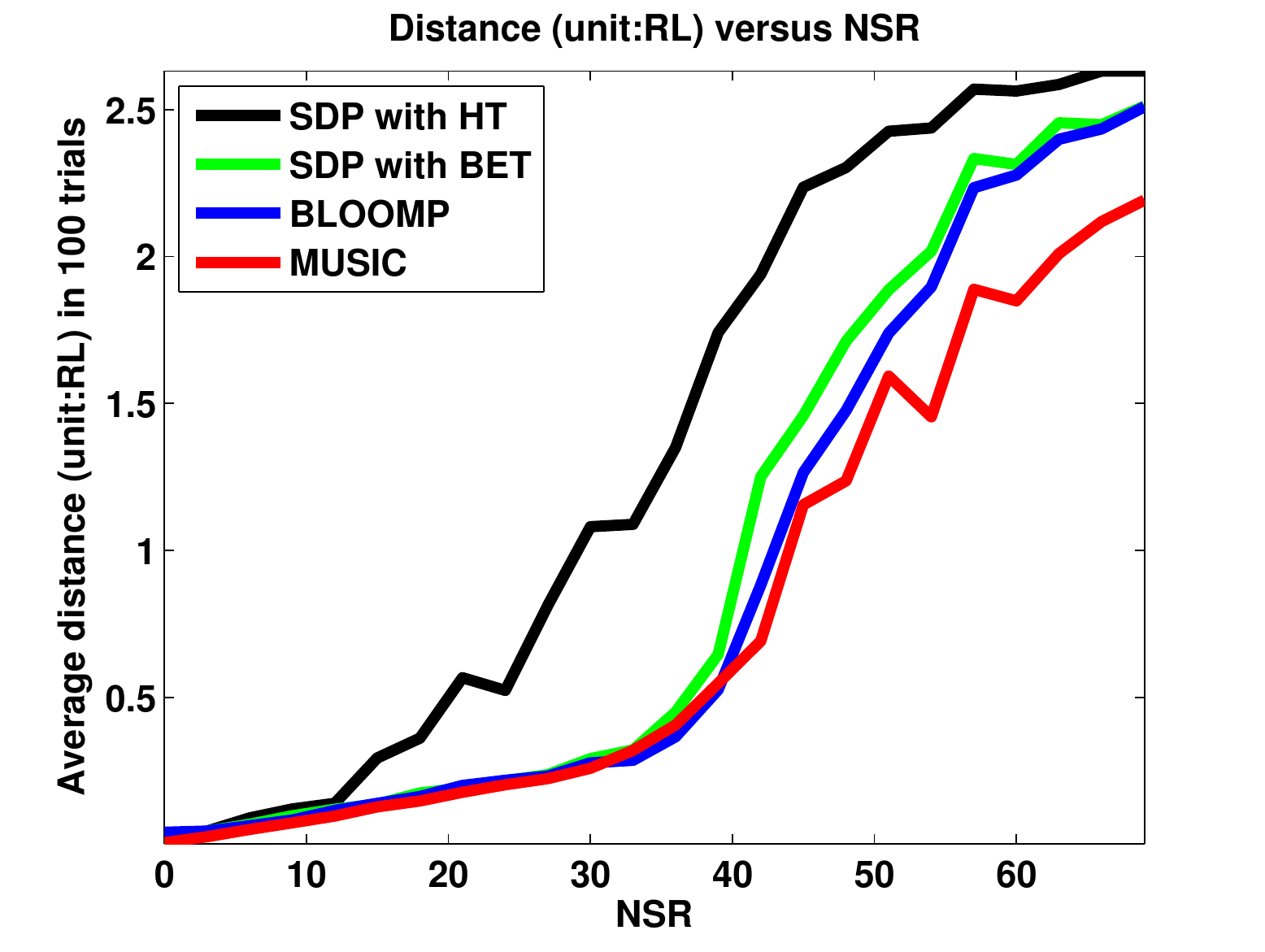}}
               \qquad
       %% Range10
        \subfigure[Dynamic range = 10. Average running time for SDP and MUSIC in one experiment is 21.0572s and 0.3321s while the average running time for BLOOMP is 19.9233s ($F=20$), 11.5190s ($F=9$) and 8.1054s ($F=5$).]{\includegraphics[width=7cm]{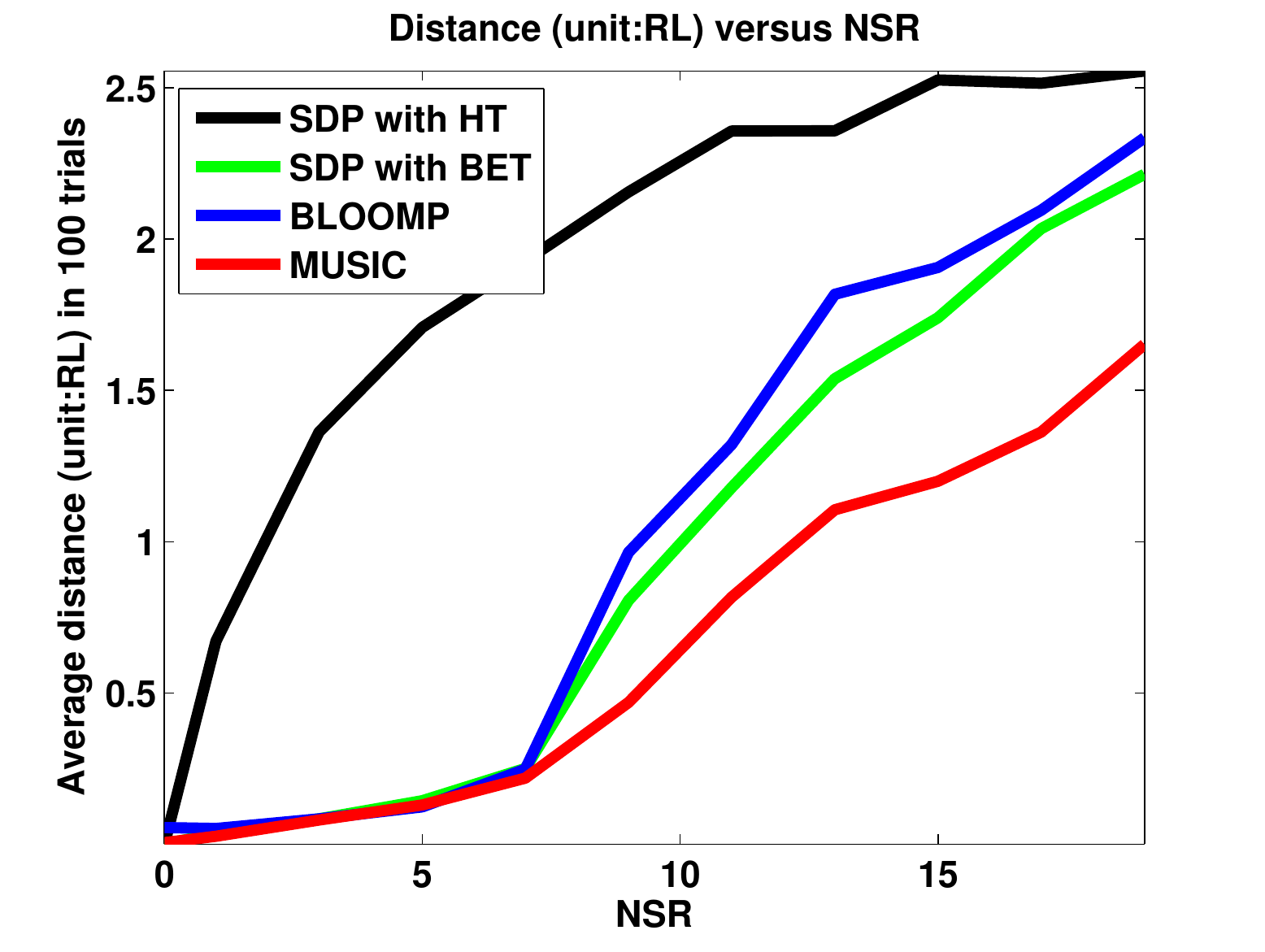}}
        %%%%%%%%%%%%%%%%%%%%%%%%%%%%%%
                 \caption{Average error by SDP with HT, BET-enhanced SDP, BLOOMP and MUSIC on complex-valued objects separated between 4 RL and 5 RL (a)(b) or separated between 2 RL and 3 RL (c)(d) versus NSR when dynamic range = 1 (a)(c) and when dynamic range = 10 (b)(d). MF using prolates is not included since it is not designed for complex amplitudes. 
          }
              \label{figc2}
\end{figure}

\commentout{
\begin{figure}[hthp]
        \centering
        %% MUSIC
       \subfigure[MUSIC. Red: exact; Blue: recovered. $d(\hat\supp ,\supp)\approx 0.24$ RL.]{\includegraphics[width=7cm]{FigCompare/Sep2to3Range10Noise12/MUSIC.pdf}}
       \qquad
       %% BLOOMP
        \subfigure[BLOOMP. Red: exact; Blue: recovered. $d(\hat\supp ,\supp)\approx 0.23$ RL.]{\includegraphics[width=7cm]{FigCompare/Sep2to3Range10Noise12/BLOOMP.pdf}}
        %% SDP
         \subfigure[SDP. Red: exact; Blue: Primal solution of SDP. Hard thresholding yields $d(\hat S,\supp) \approx 2.88$ RL. ]{\includegraphics[width=7cm]{FigCompare/Sep2to3Range10Noise12/SDP.pdf}}
         \qquad
         %% BLO-based DPSS MF
          \subfigure[BET-enhanced SDP. Red: exact; Blue: recovered. $d(\hat\supp ,\supp)\approx 0.30$ RL.]{\includegraphics[width=7cm]{FigCompare/Sep2to3Range10Noise12/SDPBET.pdf}}      
          \caption{Reconstruction of frequencies separated between 2 RL and 3 RL when dynamic range $= 10$ and NSR $ = 12\%$. Amplitudes $x$ are complex-valued and their magnitudes are displayed here.
          }
              \label{figc20}
\end{figure}
}

\subsection{Super-resolution of MUSIC}
\label{secsupnum}
%Super-resolution in optics  refers to the ability of distinguishing objects separated below 1 RL. The MUSIC algorithm is known to enjoy the property of high resolution reconstruction in spectral estimation, DOA estimation and sensor array imaging \cite{T82,KJR81,KV96}. 

Theory in Section \ref{secsup} implies that MUSIC has super-resolution effect and moreover the noise level that MUSIC can handle follows a power law with respect to the minimum separation of the frequencies.
We numerically investigate the $2,3,4,5$-point resolution of MUSIC  here as numerical verification.

%Its simplest version, two-point resolution, is widely used as metric for  resolving power of imaging systems \cite{resolutionsurvey}. 
%{\color{red} Is this the same MUSIC reformated from single measurement as in our paper or multiple measurements? This is an important distinction. Also who is the first one to formulate MUSIC from single measurement? We need to give references.} 

In Figure \ref{figPT}, we consider support set $\supp$ containing two, three, four and five equally spaced  frequencies.
We run MUSIC algorithm on reconstructions of randomly phased complex objects supported on $\supp$ with varied separation $q$ and varied NSR for 100 trials and record the average of $d(\supp,\hat\supp)/q$. Figure \ref{figPT} (a)-(d) displays the color plot of the logarithm to the base 2 of average $d(\supp,\hat\supp)/q$ with respect to NSR (y-axis) and $q$ (x-axis) in the unit of RL. Frequency localization is considered successful if $d(\supp,\hat\supp)/q <1/2$.

A  phase transition occurs in (a)-(d), manifesting  MUSIC's capability of resolving two, three, four and five closely spaced complex-valued objects if NSR is below certain level. 
%Exact reconstruction by MUSIC of arbitrarily close objects in the noise-free case is reflected  by the blue column near the vertical axis. 
Theory in Section \ref{secsup} indicates the noise level that MUSIC can handle scales at worst like $q^{4R_*+2}$ where $R_* =2$ in Figure \ref{figPT}(a), $R_* = 3$ in Figure \ref{figPT}(b), $R_* = 4$ in Figure \ref{figPT}(c) and $R_* = 5$ in Figure  \ref{figPT}(d).  
The borderline between successful recovery and failure
defined by $d(\supp,\hat \supp)/q=1/2$ are marked out in black in Figure \ref{figPT} (a)-(d).
The phase transition curves for $R_*=2,3,4,5$ are shown in (e) in the ordinary scale  and in (f) in log-log scale. 
It appears that the transition curves can be fitted to a constant times $q^{e(R_*)}$ with  $e(2) = 3.6691$, $e(3) = 6.0565$, $e(4) = 8.3861$ and $e(5) = 11.2392$, suggesting that a much smaller exponent 
$e(R_*)\approx 2.504 R_*-1.4262$ 
than $4R_*+2$.

\begin{figure}[hthp]
        \centering
        %% MUSIC
     \subfigure[Two-point resolution of MUSIC, $R_*=2$]{\includegraphics[width=7cm]{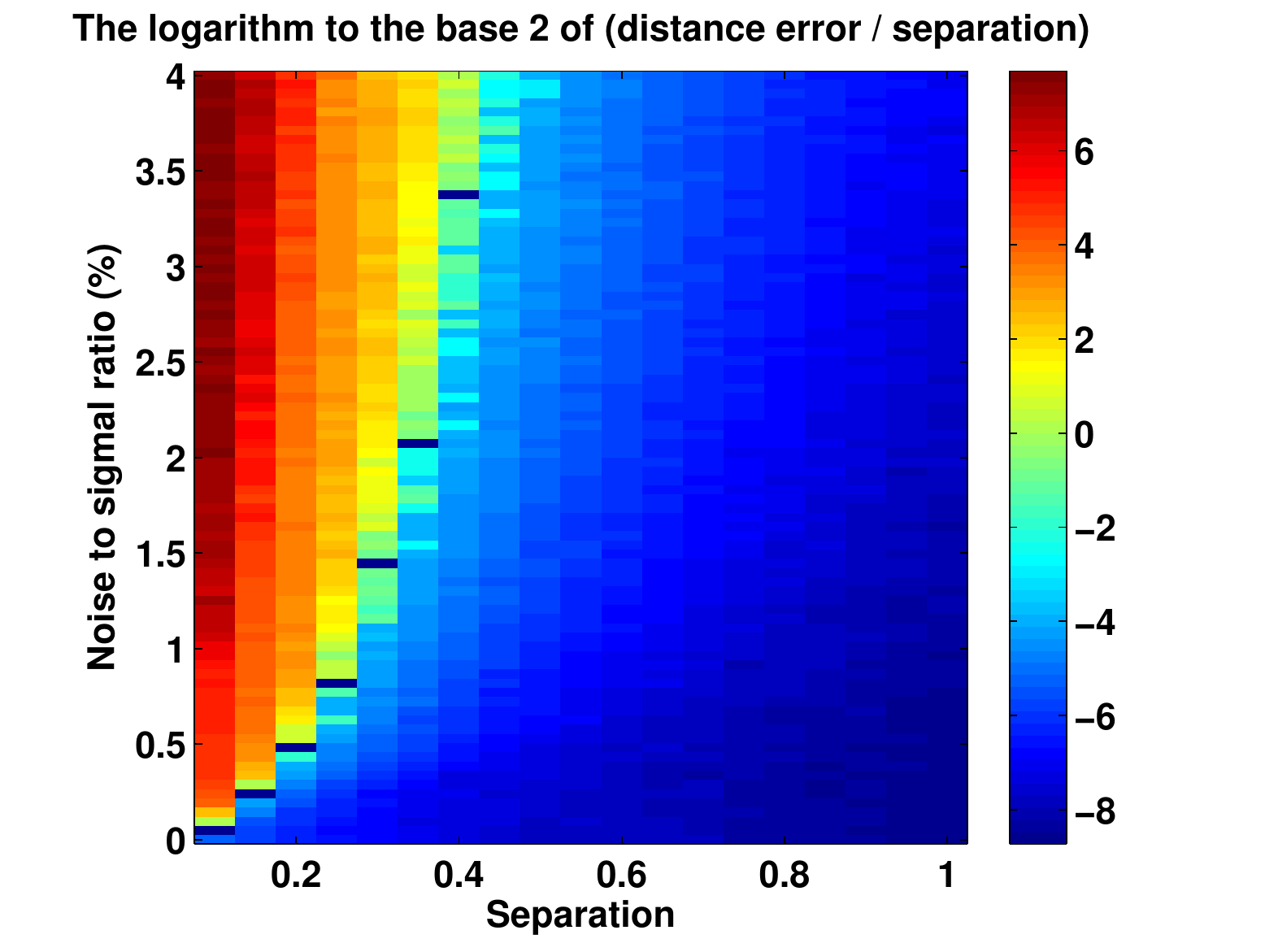}}
          \subfigure[Three-point resolution of MUSIC, $R_*=3$]{\includegraphics[width=7cm]{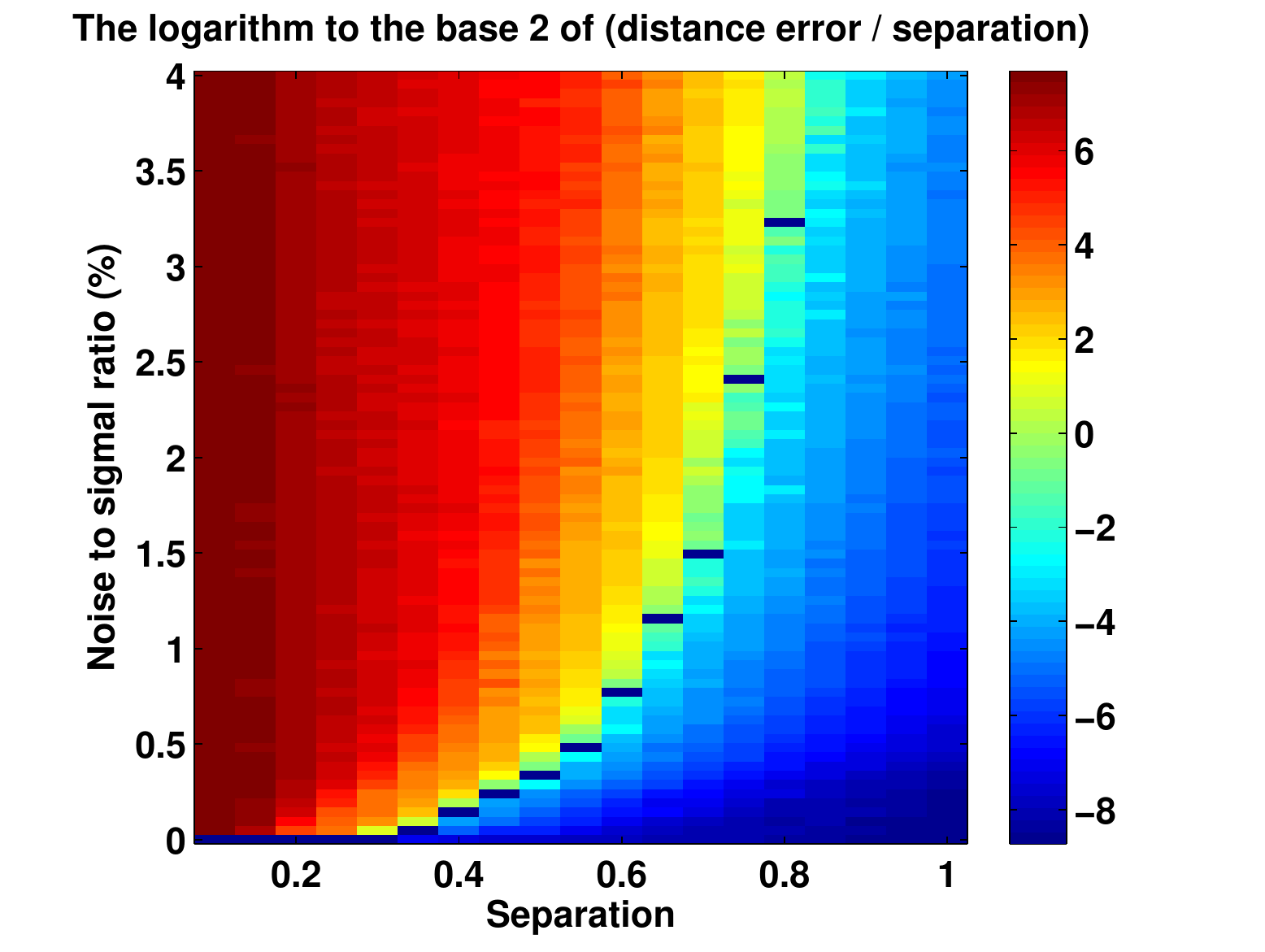}}
          \subfigure[Four-point resolution of MUSIC, $R_*=4$]{\includegraphics[width=7cm]{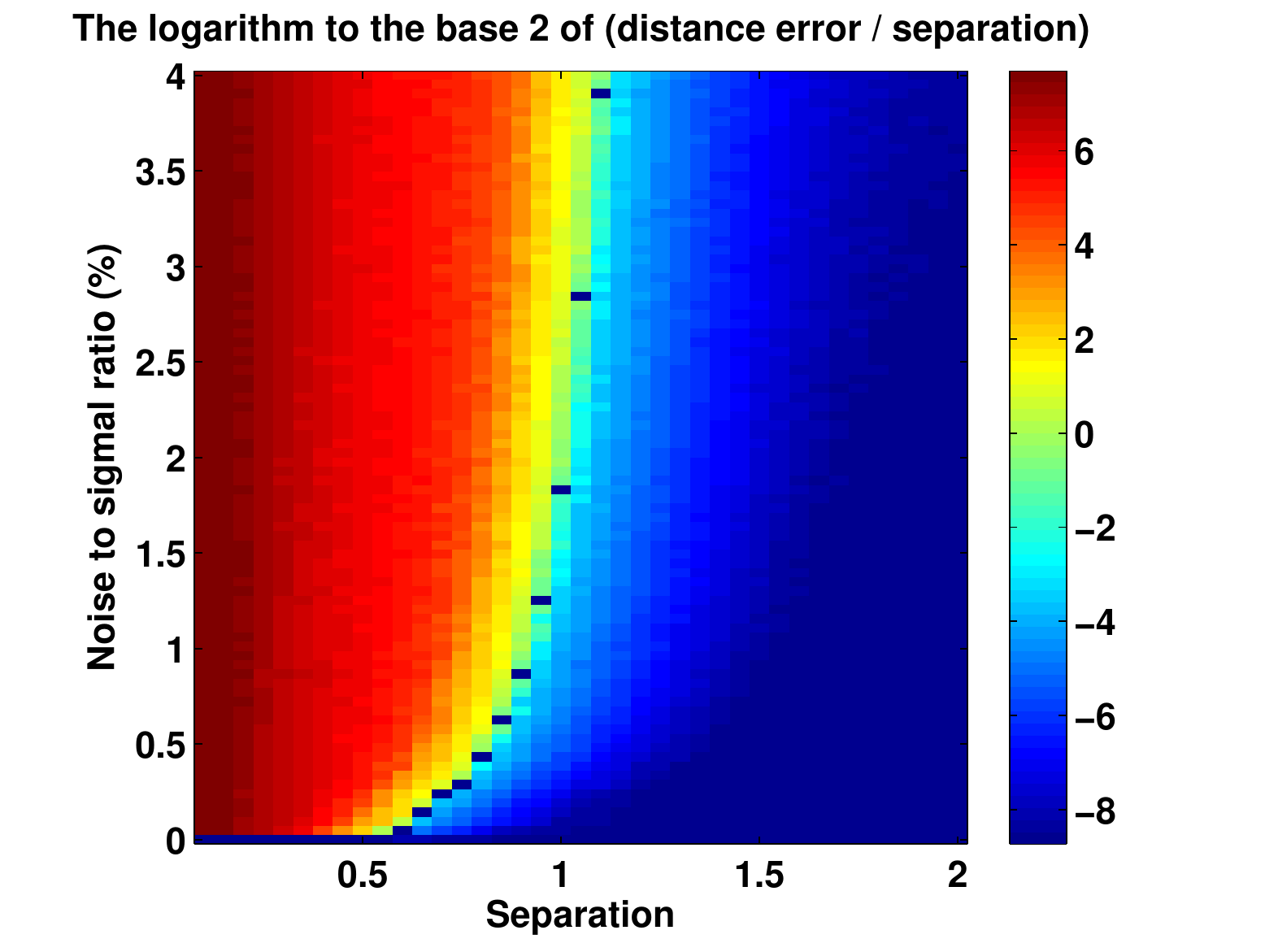}}
                    \subfigure[Five-point resolution of MUSIC, $R_*=5$]{\includegraphics[width=7cm]{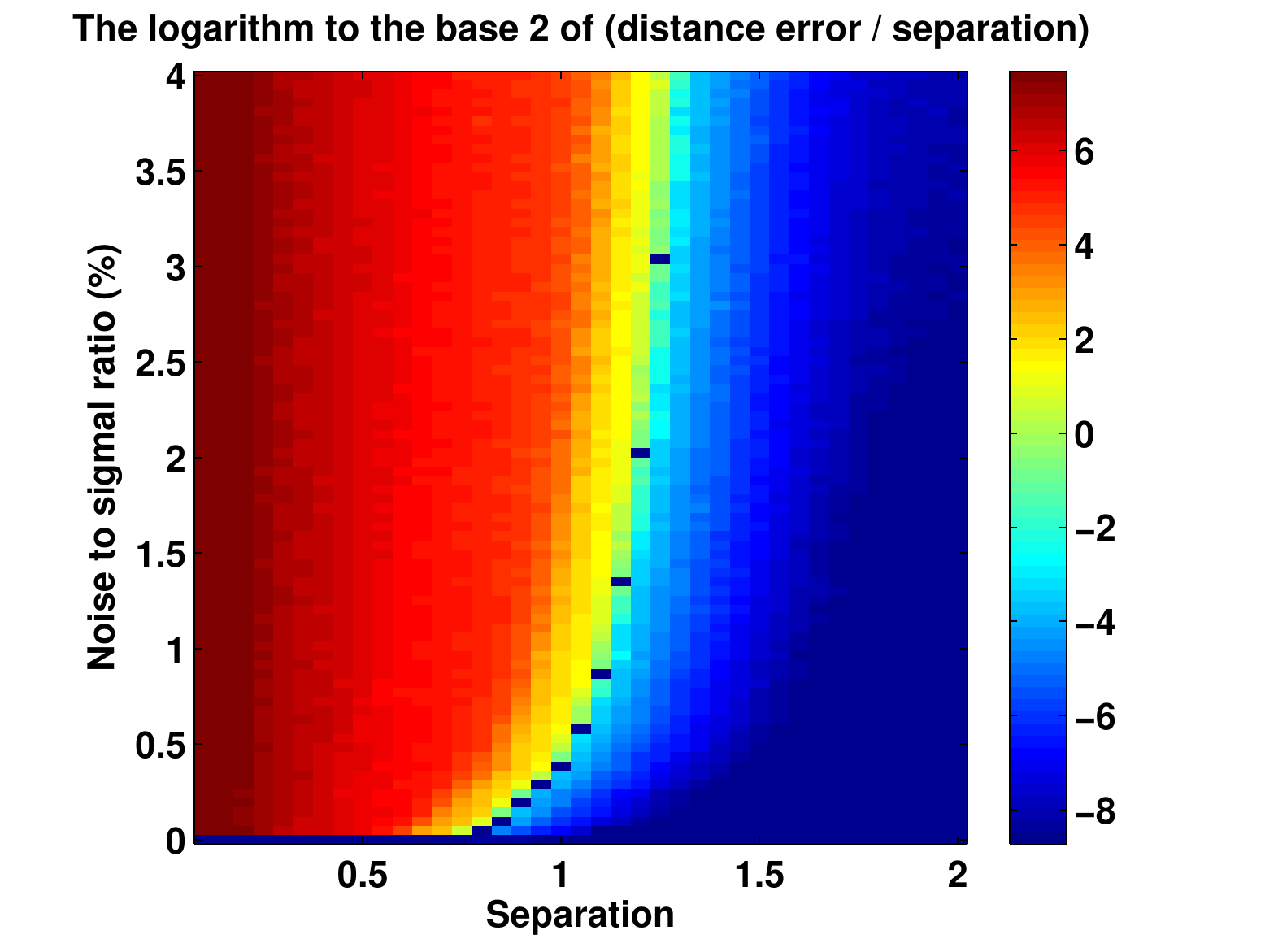}}
                     \subfigure[Phase transition curves]{\includegraphics[width=7cm]{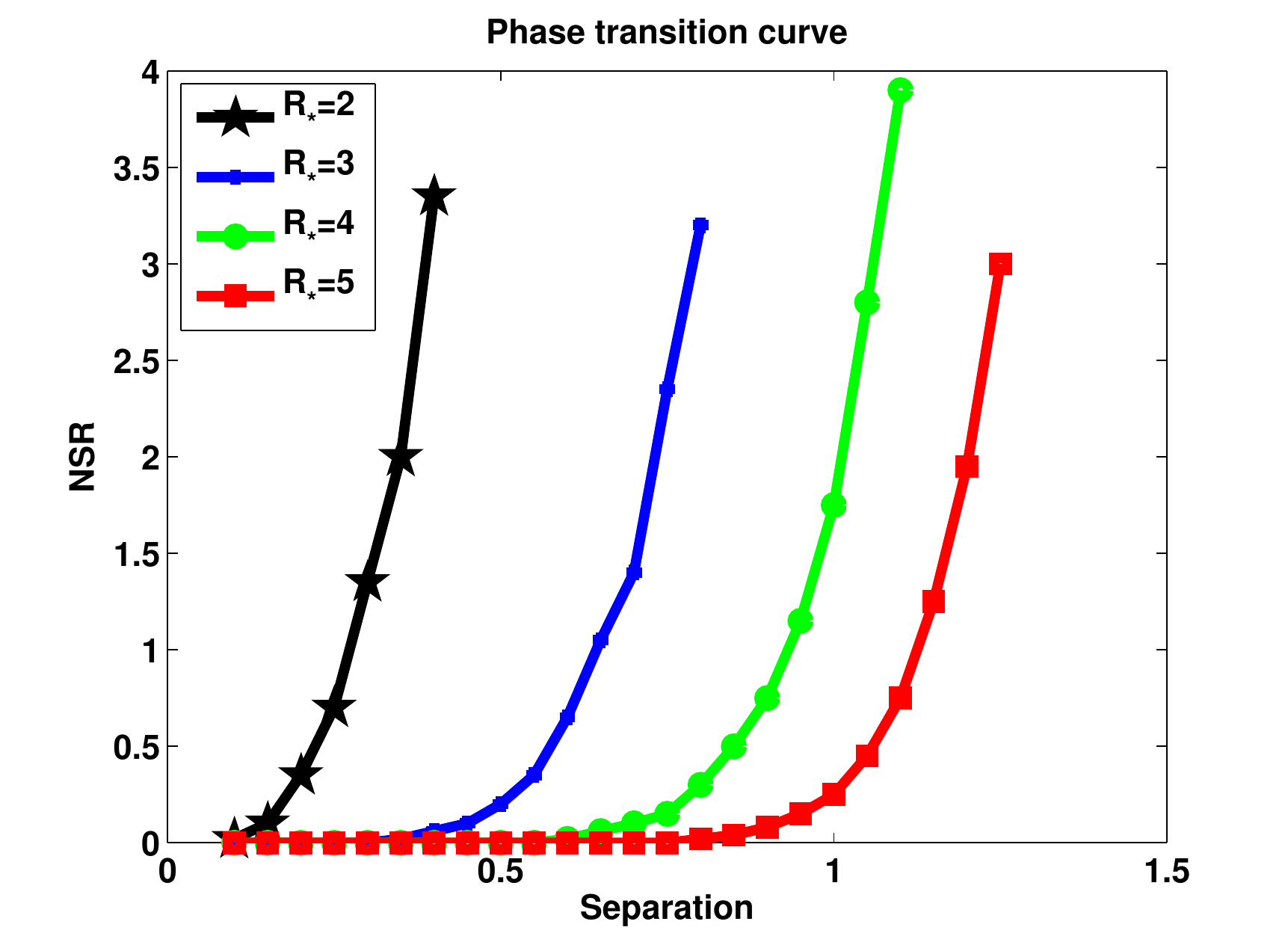}}
                           \subfigure[Log-log plot of phase transition curves]{\includegraphics[width=7cm]{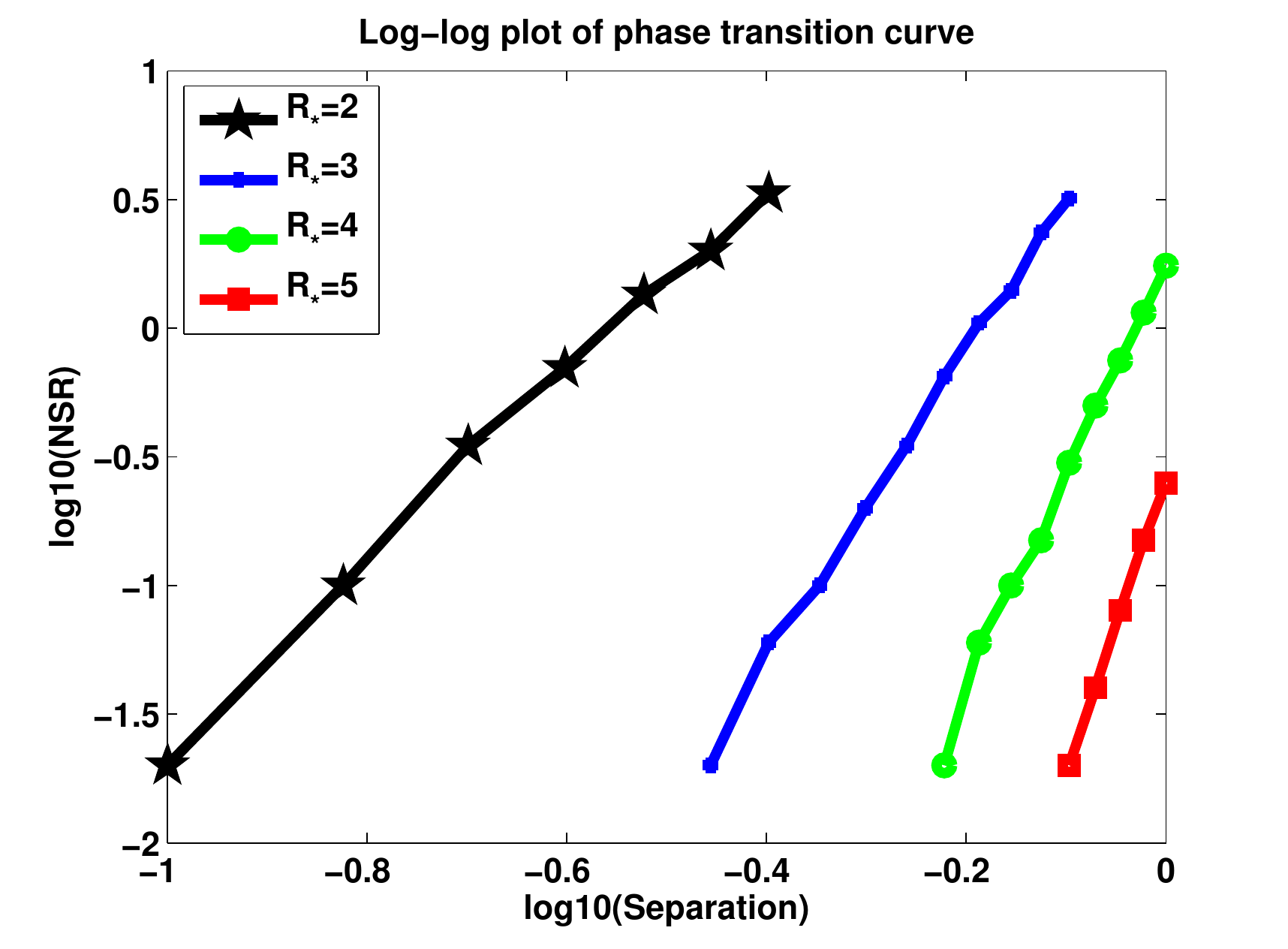}}
     \caption{Color plots in (a)-(d) shows the logarithm to the base 2 of average $d(\supp,\hat\supp)/q$ with respect to NSR (y-axis) and $q$ (x-axis) in the unit of RL. 
Reconstruction is considered successful if $d(\supp,\hat\supp)/q < 1/2$ (from green to black). A clear phase transition is observed. Transition points from which $d(\supp,\hat \supp)/q < 1/2$ are marked out by black bars in (a)-(d). Phase-transition curves connecting the black bars for $R_*=2,3,4,5$ are shown in (e) and in (f) in log-log scale. Least squares fitting for the slope of the log-log plot yields $e(2) = 3.6691$, $e(3) = 6.0565$, $e(4) = 8.3861$ and $e(5) = 11.2392$. 
%In each cases we plot two black curves $y = c x^k$ corresponding to $k = 2R-1$ and $k=2R+1$ and with a proper choice of $c$ such that the curve fits all transition points under least squares.
%In (a) the top is $y = 46.7828x^3$ and the bottom curve is $y = 255.677x^5$.
%In (b) the top is $y = 9.2532 x^5$ and the bottom curve is $y = 14.0662 x^7$.
%In (c) the top is $y = 1.7191 x^7$ and the bottom curve is $y = 1.3906 x^9$.
%In (d) the top is $y = 0.3790 x^9$ and the bottom curve is $y = 0.2384 x^11$. Interestingly the phase transition curve fits very well with the top curve $y = c x^{2R-1}$ in all cases. 
  }
     \label{figPT}
\end{figure}

\commentout{     
\begin{figure}[hthp]
        \centering
        %% MUSIC
     \subfigure[$\smin^2(\Phi^L)$ while $R=2$]{\includegraphics[width=8cm]{SingularValue1DR2.eps}}
           \subfigure[$\smin^2(\Phi^L)$ while $R=3$]{\includegraphics[width=8cm]{SingularValue1DR3.eps}}
      \subfigure[$\smin^2(\Phi^L)$ while $R=4$]{\includegraphics[width=8cm]{SingularValue1DR4.eps}}
       \subfigure[$\smin^2(\Phi^L)$ while $R=5$]{\includegraphics[width=8cm]{SingularValue1DR5.eps}}
     \caption{$L = 10$}
\end{figure}     
}

%In Figure \ref{figc4} we test MUSIC, BLOOMP and SDP on the  reconstruction of two objects separated by $0.3$ RL. MUSIC is capable of accurately locating two objects regardless of their sign. BLOOMP does not have super-resolution effect. SDP works for objects of the same sign but fails for objects of opposite sign.

\commentout{
\begin{figure}[h]
        \centering
        %% MUSIC
       \subfigure[MUSIC. Red: exact; Blue: imaging function $J^\ep(\om)$ normalized to ${[0,1]}$; Green: candidates picked from local maxima of $J^\ep(\om)$. $d(\hat\supp ,\supp)\approx 0.04$ RL.]{\includegraphics[width=4.8cm]{FigCompare/Sep03Range1Noise1/MUSIC1.pdf}}
       \quad
       %% BLOOMP
        \subfigure[BLOOMP. Red: exact; Blue: recovered. $d(\hat\supp ,\supp)\approx 0.71$ RL.]{\includegraphics[width=4.8cm]{FigCompare/Sep03Range1Noise1/BLOOMP1.pdf}}
        \quad
        %% SDP
         \subfigure[SDP. Red: exact; Blue: recovered.  $d(\hat\supp ,\supp)\approx 0.02$ RL.]{\includegraphics[width=4.8cm]{FigCompare/Sep03Range1Noise1/SDP1.pdf}}
            %% MUSIC
       \subfigure[MUSIC. $d(\hat\supp ,\supp)\approx 0.01$ RL.]{\includegraphics[width=4.8cm]{FigCompare/Sep03Range1Noise1/MUSIC2.pdf}}
       \quad
       %% BLOOMP
        \subfigure[BLOOMP. $d(\hat\supp ,\supp)\approx 0.38$ RL.]{\includegraphics[width=4.8cm]{FigCompare/Sep03Range1Noise1/BLOOMP2.pdf}}
        \quad
        %% SDP
         \subfigure[SDP. Red: exact; Blue: recovered.  $d(\hat\supp ,\supp)\approx 0.35$ RL.]{\includegraphics[width=4.8cm]{FigCompare/Sep03Range1Noise1/SDP2.pdf}}
          \caption{Reconstruction of 2 real-valued frequencies of the same sign (a)(b)(c) and of opposite sign (d)(e)(f). Separation = $0.3$ RL. Dynamic range $= 1$ and NSR $ = 1\%$. BLO-based DPSS MF is not capable of resolving objects separated below 1 RL since the half width of DPSS is above 1 RL.
          }
              \label{figc4}
\end{figure}
}

\section{Conclusion and extension}
\label{secconc}

We have provided a stability analysis of the MUSIC algorithm for single-snapshot spectral estimation off the grid. We have proved that perturbation of the noise-space correlation by external noise is roughly proportional to the spectral norm of the noise Hankel matrix with a magnification factor given in terms of maximum and minimum nonzero singular values of the Hankel matrix constructed from the noiseless measurements. Under  the assumption of frequency 
 separation roughly $\geq$  2 RL, the magnification factor is explicitly estimated by means of a new version of discrete Ingham inequalities.

A systematic numerical study has shown that the MUSIC algorithm enjoys strong stability and low computation complexity for the reconstruction of well-separated frequencies. MUSIC is the only algorithm that can recover arbitrarily closely spaced frequencies as long as the noise is sufficiently small. And we have numerically documented the super-resolution effect of MUSIC in terms of the relationship among  the minimum separation, the Rayleigh index (the size of largest cluster) and the noise. The results
conform to the optimal bound conjectured (and partially proved) by Donoho \cite{Donoho92}.

Finally  we discuss a possible extension of the present work.
We became aware of the reference \cite{YY} after completing the first 
draft of this work ({\tt  arXiv:1404.1484}). In \cite{YY}, Chen and Chi used the matrix completion technique to obtain a stable approximation of $\{ y(k), k = 0,\ldots,M\}$ from its partial noisy samples. This can be used as the preprocessing  denoising step before invoking the single-snapshot MUSIC. 
Together \cite{YY} and the present work constitute a framework for single-snapshot spectral estimation with  compressive noisy measurements. 

Let $y$ and $y^\ep$, respectively, be the full set of noiseless and noisy data as before. Let 
the sampling set $\Lambda$ be a random subset of size $m$  from $\{0,\ldots,M\}$. Let $\calP_\Lambda(v)$ be the orthogonal projection of $v\in \CC^{M+1}$ onto the space of vectors supported on $\Lambda$. 

For the noisy compressive data $\calP_\Lambda y^\ep$ satisfying  $\|\calP_\Lambda (y^\ep - y)\|_2 \le \delta$,
\cite{YY} proposes the  following denoising strategy of  Hankel matrix completion
\[
{\hat y} =  {\rm arg}\min_{z} \|{\rm Hankel}(z)\|_\star, \  
 \text{s.t.}\quad  \|\calP_\Lambda (z-y^\ep) \|_2 \le\delta
 \]
where $\|\cdot\|_*$ denotes the nuclear norm.

The total procedure of compressive spectral estimation is given in the following table.
\begin{center}
   \begin{tabular}{|l|}\hline
    %\centerline
    { \centerline{\bf Spectral estimation with compressive measurements}} \\ \hline
    {\bf Input:} $\calP_\Lambda y^\ep \in \CC^{M+1},\delta, s, L$. \\
    1) Matrix completion:
          ${\hat y} =  {\rm arg}\min_{z \in \CC^{M \times 1}} \|{\rm Hankel}(z)\|_\star, \  
 \text{subject to } \|\calP_\Lambda (z-y^\ep) \|_2 \le\delta$
    \\
     2) Form Hankel matrix $\hat H = {\rm Hankel}(\hat y) \in \CC^{(L+1)\times(M-L+1)}$.
     \\
     3) SVD: $\hat H = [\hat{U}_1\  \hat{U}_2] {\rm diag}(\hat{\si}_1 , \ldots , \hat{\si}_s ,\ldots) [\hat{V}_1\ \hat{V}_2]^\star $, where $\hat{U}_1 \in \CC^{(L+1)\times s}$.\\
     4) Compute imaging function $\hat{J}(\om) = \|\phi^{L}(\om)\|_2 /\|{\hat{U}_2}^\star \phi^L(\om)\|_2$. \\
   {\bf Output:} $\hat \supp =\{ \om \text{ corresponding to } s \text{ largest local maxima of } \hat{J}(\om) \} $.\\
    \hline
   \end{tabular}
\end{center}

The following  estimate on the difference  $\hat R(\om)-R(\om)$ is obtained by 
combining \cite[Theorem 2]{YY} and Theorem \ref{thmp1}.

\begin{theorem}
\label{thmYY}

Let $ R(\om) $ and  $\hat R(\om)$, respectively, be the noise-space correlation functions for the noiseless $y$ and  denoised data $\hat y$. 

Let $\Lambda$ of size $m$ be uniformly sampled at random from $\{0,\ldots,M\}$. Suppose
$\|\calP_\Lambda (y^\ep - y)\|_2 \le \delta$. Then there exists a universal constant $C>0$ such that 
\beq
\label{eqYY2}
\|{\rm Hankel}(\hat y) - {\rm Hankel}(y)\|_F \le 
\left(
2\sqrt{M+1} + 8(M+1) + \frac{8\sqrt 2 (M+1)^2}{m}
\right)
\delta
\eeq
and 
\beq
\label{eqYY3}
|\hat R(\om) - R(\om)| \le  \frac{4\si_1 + \left(
2\sqrt{M+1} + 8(M+1) + \frac{8\sqrt 2( M+1)^2}{m}
\right)
\delta
}{\left[\si_s-\left(
2\sqrt{M+1} + 8(M+1) + \frac{8\sqrt 2 (M+1)^2}{m}
\right)
\delta
\right]^2} 
 \left(
2\sqrt{M+1} + 8(M+1) + \frac{8\sqrt 2 (M+1)^2}{m}
\right)
\delta.
\eeq
with probability exceeding $1-(M+1)^{-2}$ provided that
\beq
\label{eqYY1}
m > C \mu \gamma s \log^3(M+1)
\eeq
where
$$\mu = \max\left(\frac{L+1}{\smin^2(\Phi^L)}, \frac{M-L+1}{\smin^2(\Phi^{M-L})}\right), \quad \gamma = \max\left(\frac{M+1}{L+1},\frac{M+1}{M-L+1} \right).$$
\end{theorem}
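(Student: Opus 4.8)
The plan is to view the denoised signal $\hat y$ as generating a \emph{perturbed} Hankel matrix ${\rm Hankel}(\hat y) = H + E$ with noise matrix $E := {\rm Hankel}(\hat y) - {\rm Hankel}(y)$, and then to chain two results: the Hankel matrix completion guarantee of \cite{YY}, which controls $E$, and the perturbation estimate of Theorem \ref{thmp1}, which converts control of $E$ into control of the noise-space correlation. Since $\hat R(\om)$ is by construction exactly the noise-space correlation function associated with ${\rm Hankel}(\hat y)$, it plays precisely the role of $R^\ep$ in Theorem \ref{thmp1}, with $H^\ep$ replaced by ${\rm Hankel}(\hat y)$. Thus the whole argument reduces to (i) bounding $E$ and (ii) substituting that bound into the already-proven deterministic estimate.

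First I would invoke \cite[Theorem 2]{YY}. Under the sampling condition \eqref{eqYY1}, expressed through the coherence parameter $\mu$ and the aspect ratio $\gamma$ of the Hankel structure, nuclear-norm minimization recovers $\hat y$ so that, on an event of probability at least $1-(M+1)^{-2}$, the Frobenius-norm error of the completed Hankel matrix is bounded by $B := \bigl(2\sqrt{M+1}+8(M+1)+8\sqrt2(M+1)^2/m\bigr)\delta$, which is exactly \eqref{eqYY2}. (If their stability bound is stated for the data vector rather than the matrix, one upgrades it using that each sample $y_k$ appears at most $\min(L+1,M-L+1)$ times along the anti-diagonals, so $\|{\rm Hankel}(v)\|_F \le \sqrt{\min(L+1,M-L+1)}\,\|v\|_2$.) This step carries all of the probabilistic content and all of the explicit dimensional constants; everything after it is deterministic.

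Next, on this high-probability event, I would apply Theorem \ref{thmp1} with the matrix $E$ just controlled. The hypotheses $L\ge s$ and $M-L+1\ge s$ are assumed throughout, so the only remaining condition to verify is the spectral gap $\|E\|_2 < \si_s$; since $\|E\|_2 \le \|E\|_F \le B$, this holds whenever $B < \si_s$, which is also precisely what renders the denominator $(\si_s-B)^2$ in \eqref{eqYY3} positive and the bound meaningful. Theorem \ref{thmp1} then gives $|\hat R(\om)-R(\om)| \le \frac{4\si_1 + 2\|E\|_2}{(\si_s-\|E\|_2)^2}\,\|E\|_2$. Observing that the map $t \mapsto \frac{(4\si_1 + 2t)\,t}{(\si_s - t)^2}$ is increasing on $[0,\si_s)$, I may replace $\|E\|_2$ by its upper bound $B$ on the right-hand side, producing a bound of the form \eqref{eqYY3}.

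The routine pieces are the norm comparison $\|\cdot\|_2 \le \|\cdot\|_F$ and the monotonicity substitution. The genuinely load-bearing step—and the main obstacle—is obtaining the clean Frobenius bound \eqref{eqYY2} with the stated explicit constants from \cite[Theorem 2]{YY}: this requires translating their oversampling and incoherence hypotheses into the parameters $\mu$ and $\gamma$ used here and tracking constants through the Hankel multiplicity factor. Once \eqref{eqYY2} is established, the passage to \eqref{eqYY3} is a deterministic, monotone plug-in into Theorem \ref{thmp1}, valid on the same high-probability event.
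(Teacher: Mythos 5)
Your proposal is correct and takes essentially the same route as the paper: the paper's entire proof consists of citing \cite[Theorem 2]{YY} for the high-probability Frobenius bound \eqref{eqYY2} and then substituting $\|E\|_2 \le \|E\|_F$ into the deterministic estimate of Theorem \ref{thmp1}, exactly as you do. One small remark: a literal monotone plug-in of Theorem \ref{thmp1} yields $4\si_1 + 2B$ (with $B$ the bracketed bound times $\delta$) in the numerator rather than the $4\si_1 + B$ stated in \eqref{eqYY3}, so your derived constant is the one that actually follows, and the paper's statement appears to have dropped that factor of $2$.
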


Theorem \ref{thmYY} implies that compressive spectral estimation is stable with matrix completion and MUSIC whenever $\Phi^L,\Phi^{M-L}$ are well-conditioned and the sample size is sufficiently large.

Furthermore with the discrete Ingham inequalities we can give
explicit estimate for the right hand side of (\ref{eqYY3}) and  (\ref{eqYY1}). 
\commentout{
Let $L$ and $M-L$ be even integers and $L = M/2$ so that $\gamma \approx 2$. Suppose $\supp$ satisfies the gap condition \eqref{eq9'}, then
\beq
\label{eqmu}
\mu < \max\left(
\frac{L+1}{L} 
\left(\frac 2 \pi - \frac{2}{\pi L^2 q^2} - \frac 4 L \right)^{-1} ,
\frac{M-L+1}{M-L}
\left(   \frac 2 \pi - \frac{2}{\pi (M-L)^2 q^2} -\frac{4}{M-L}
\right)^{-1}
\right).
\eeq
}
In particular,  with $L \approx M/2$ and well-separated  ($>$ 2 RL) frequencies,  $\mu$ and $\gamma$  scale like a constant and $m=\mathcal{O}( s \log^3 M)$ suffices for any sufficiently small $\delta$. 
In this case, the right hand side of (\ref{eqYY3}) is 
\[
\mathcal{O}\left( {M\over m}{\xmax \over \xmin}{\delta\over \xmin} \right) 
\]
showing  enhanced stability as $M/m \to 1$ where $M/m$ is the compression ratio, $\xmax/\xmin$ the object's peak-to-trough ratio and $\delta/\xmin$ the noise-to-object ratio. 

\commentout{Under the same gap condition \eqref{eq9'}, discrete Ingham inequalities yields an explicit perturbation bound of the noise-space correlation function as
$$
|\hat R(\om) - R(\om)| \le  \frac{4\alpha_1\sqrt{L(M-L)} + \left(
2\sqrt{M} + 8M + \frac{8\sqrt 2 M^2}{m}
\right)
\delta
}{\left[\alpha_s \sqrt{L(M-L)}-\left(
2\sqrt{M} + 8M + \frac{8\sqrt 2 M^2}{m}
\right)
\delta
\right]^2} 
\cdot
 \left(
2\sqrt{M} + 8M + \frac{8\sqrt 2 M^2}{m}
\right)
\delta,
$$
where $\alpha_1$ and $\alpha_2$ are defined in \eqref{eqalpha1} and \eqref{eqalpha2}. In this case $|\hat{R}(\om)-R(\om)| \rightarrow 0$ as $\delta\rightarrow 0$.
}

\section*{Acknowledgement}
Wenjing Liao would like to thank Armin Eftekhari for providing their codes and helpful discussions at SAMSI.

%%%%%%%%%%%%%%%%%%%%%%%%%%%%%%%%%%%%
\appendixtitleon
\appendixtitletocon
\begin{appendices}

\section{Proof of Theorem \ref{thm1}}
\label{secmusic}

\commentout{
The MUSIC algorithm was introduced by Schmidt \cite{Sch,SchD} and many of its extensions, such as S-MUSIC\cite{SMUSIC}, IES-MUSIC\cite{IESMUSIC}, R-MUSIC\cite{RMUSIC} and RAP-MUSIC\cite{RAPMUSIC} were explored. These algorithms have been widely applied to various problems in imaging and sensing. According to Wikipedia, in a detailed evaluation based on thousands of simulations, M.I.T.'s Lincoln Laboratory concluded that, among currently accepted high-resolution algorithms, MUSIC was the most promising and a leading candidate for further study and actual hardware implementation. 

With the image vector defined in \eqref{imagingvector}, we cast the spectral estimation problem in the form 
\beq
y^\ep = \Phi^{M}  x +\ep
\label{linearsystem}
\eeq
where
$$\Phi^{M} := \Phi^{0\rightarrow M}= [\phi^{M}(\om_1)\  \phi^{M}(\om_2) \ \ldots \ \phi^{M}(\om_s)].$$

Fixing a positive integer $1< L < M$, we form the Hankel matrix 
\beq
\label{hankel}
H = {\rm Hankel}(y) =
\begin{bmatrix}
y_0 & y_1 & \ldots & y_{M-L}\\
y_1 & y_2 & \ldots & y_{M-L+1}\\
\vdots & \vdots & \vdots & \vdots\\
y_{L} & y_{L+1} & \ldots & y_{M}\\
\end{bmatrix}.
\eeq
With a single-snapshot measurement the MUSIC algorithm hinges a special property of Fourier measurements: a time translation corresponds to a frequency phase modulation. It is straightforward to verify that  ${\rm Hankel}(y)$ admits the decomposition  
\beqn
H = \Phi^L X (\Phi^{M-L})^T, \quad X = {\rm diag}(x_1,\ldots,x_s).
\eeqn
Let $H^{\ep} = {\rm Hankel}(\yep)$ and $E = {\rm Hankel}(\ep)$.
The noisy Hankel matrix $H^{\ep} $ can be written as 
\beqn
H^{\ep} = H + E = \Phi^L X (\Phi^{M-L})^T + E.
\eeqn
The core of the MUSIC algorithm lies in the identification of noise space from matrix $H^{\ep}$. 
}

In the noise-free case $\range(H)$ and $\range(\Phi^L)$ coincide if the matrix $X(\Phi^{M-L})^T$ has full row rank, i.e., $\rank(\Phi^{M-L}) = s$, which is guaranteed on the condition that $M-L +1\ge s$ and the frequencies in $\supp$ are pairwise distinct.

\begin{lemma}
\label{lemmarange}
If $\rank(\Phi^{M-L}) = s,$ then $\range(H) = \range(\Phi^L)$.
\end{lemma}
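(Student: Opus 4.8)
The plan is to exploit the Vandermonde factorization $H = \Phi^L X (\Phi^{M-L})^T$ from \eqref{van} and to reduce the range identity to a single rank statement about the right-hand factor. Writing $M := X(\Phi^{M-L})^T \in \CC^{s\times(M-L+1)}$, I would first record the elementary inclusion $\range(H) = \range(\Phi^L M) \subseteq \range(\Phi^L)$, valid for any product of matrices. The substance of the lemma is therefore the reverse inclusion.

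For the reverse inclusion the key observation is that $\range(\Phi^L M) = \Phi^L(\range M)$, so it suffices to show that $M$ maps $\CC^{M-L+1}$ onto $\CC^s$, i.e. that $M$ has full row rank $s$. Granting this, $\range M = \CC^s$ and hence $\range(H) = \Phi^L(\CC^s) = \range(\Phi^L)$, which is exactly the desired equality.

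To verify that $M$ has full row rank, I would use that $X = \text{diag}(x_1,\dots,x_s)$ is invertible — here one uses that the amplitudes are nonzero, which is implicit in the signal model \eqref{model} — so that left-multiplication by $X$ preserves rank. Consequently $\rank(M) = \rank\bigl((\Phi^{M-L})^T\bigr) = \rank(\Phi^{M-L})$, and the hypothesis $\rank(\Phi^{M-L}) = s$ closes the argument. (That the hypothesis holds whenever $M-L+1 \ge s$ and the frequencies are pairwise distinct follows because any $s$ rows of $\Phi^{M-L}$ form a square Vandermonde matrix in the distinct nodes $e^{-2\pi i \om_j}$, hence nonsingular.)

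I do not expect a genuine obstacle here; the only point requiring care is the role of the amplitudes. If some $x_j$ vanished then $X$ would be singular and $\rank(M)$ could drop below $s$, breaking the equality of ranges. The argument thus rests on the standing assumption that $x_j \neq 0$ for all $j$, which is built into the model of a signal genuinely composed of $s$ harmonics.
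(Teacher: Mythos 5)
Your proof is correct and takes essentially the same route as the paper, which states this lemma without a separate written proof, remarking only that the ranges coincide when the right factor $X(\Phi^{M-L})^T$ of the Vandermonde decomposition $H = \Phi^L X(\Phi^{M-L})^T$ has full row rank — exactly the surjectivity argument you spell out. Your explicit observation that the nonvanishing of the amplitudes $x_j$ is needed (so that $X$ is invertible and $\rank\bigl(X(\Phi^{M-L})^T\bigr) = \rank(\Phi^{M-L}) = s$) fills in a detail the paper leaves implicit in its signal model.
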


\begin{lemma}
$\rank(\Phi^L) = s$ if $L+1 \ge s$ and $\om_k \neq \om_l, \ \forall k \neq l$.
\label{lemmarank}
\end{lemma}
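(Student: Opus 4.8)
The plan is to exploit the Vandermonde structure of $\Phi^L$ directly. Writing $z_j = e^{-2\pi i \om_j}$ for $j = 1,\dots,s$, the $(L+1)\times s$ matrix $\Phi^L$ has $(k,j)$ entry $z_j^k$ for $k = 0,\dots,L$, so each column is the vector $(1, z_j, z_j^2, \dots, z_j^L)^T$ of consecutive powers of a single node on the unit circle. The strategy is to locate an invertible $s\times s$ submatrix, which forces the $s$ columns to be linearly independent.

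First I would establish that the nodes are distinct. Since $\om_k \neq \om_l$ as points of $\TT = [0,1)$ and the map $\om \mapsto e^{-2\pi i\om}$ is a bijection from $[0,1)$ onto the unit circle, we have $z_k \neq z_l$ whenever $k \neq l$. This is the step that converts the hypothesis ``distinct frequencies'' into the usable statement ``distinct Vandermonde nodes.''

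Next, because $L+1 \ge s$, I would select the square submatrix formed by the first $s$ rows (indices $k = 0,\dots,s-1$). This is exactly the classical $s\times s$ Vandermonde matrix in the nodes $z_1,\dots,z_s$, whose determinant equals $\prod_{1\le k<l\le s}(z_l - z_k)$. As the $z_j$ are pairwise distinct, every factor is nonzero, so the determinant is nonzero and the submatrix is invertible. An invertible $s\times s$ submatrix guarantees that the $s$ columns of $\Phi^L$ are linearly independent, hence $\rank(\Phi^L) \ge s$; combined with the trivial bound $\rank(\Phi^L) \le \min(L+1, s) = s$, this yields $\rank(\Phi^L) = s$.

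There is no substantive obstacle in this argument: the computation reduces to the standard Vandermonde determinant formula. The only point demanding care is the injectivity of $\om \mapsto e^{-2\pi i\om}$ on the fundamental domain $[0,1)$, since it is precisely this injectivity that legitimizes applying the determinant formula to the exponential nodes, and it is where the assumption $\om_k\neq\om_l$ (as elements of the torus, under the metric $d$) enters.
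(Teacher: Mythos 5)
Your proof is correct and follows essentially the same route as the paper's: both select an $s\times s$ square submatrix of $\Phi^L$, identify it as a classical Vandermonde matrix, and conclude its determinant $\prod_{k<l}\bigl(e^{-2\pi i\om_l}-e^{-2\pi i\om_k}\bigr)$ is nonzero because the nodes are distinct. Your added remark on the injectivity of $\om\mapsto e^{-2\pi i\om}$ on $[0,1)$ makes explicit a point the paper treats as obvious, but it is the same argument.
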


\begin{proof}
If $L+1\geq s$, then  $s\times s$ square submatrix $\Psi$ of $\Phi^L$ is a square Vandermonde matrix whose
determinant is given by
\[
\det{(\Psi)}=\prod_{1\leq i<j\leq s}(e^{-i2\pi \om_j} -e^{-i2\pi \om_i}).
\]
Clearly, $ \det{(\Psi)}\neq 0 $ if and only if $\om_i\neq \om_j, i\neq j$. 
Hence $\rank (\Psi)=s$ which implies $\rank(\Phi^L) = s$. 
\end{proof}

Similarly,  if $L +1\ge s+1$, the extended matrix $\Phi^L_\om = [\Phi^L \  \phi^L(\om)]$ has full column rank for any $\om \notin \mathcal{S}$. As a consequence, $\om \in \mathcal{S}$ if and only if $\phi^L(\om)$ belongs to $\range(\Phi^L)$. 

\section{Proof of Theorem \ref{thm3}}
\label{app2}

The proof of Theorem \ref{thm3} combines techniques used in \cite{Ingham} and \cite{KP}. We take 
$$g(t) = \cos \pi (t-0.5)$$
and let 
\beq
G(\om) = \sum_{k = 0}^{L} g(\frac k L)e^{2\pi i k \om}.
\label{eqG}
\eeq

\begin{figure}[h]
        \centering
       \subfigure[$g(t)$]{\includegraphics[width=5cm]{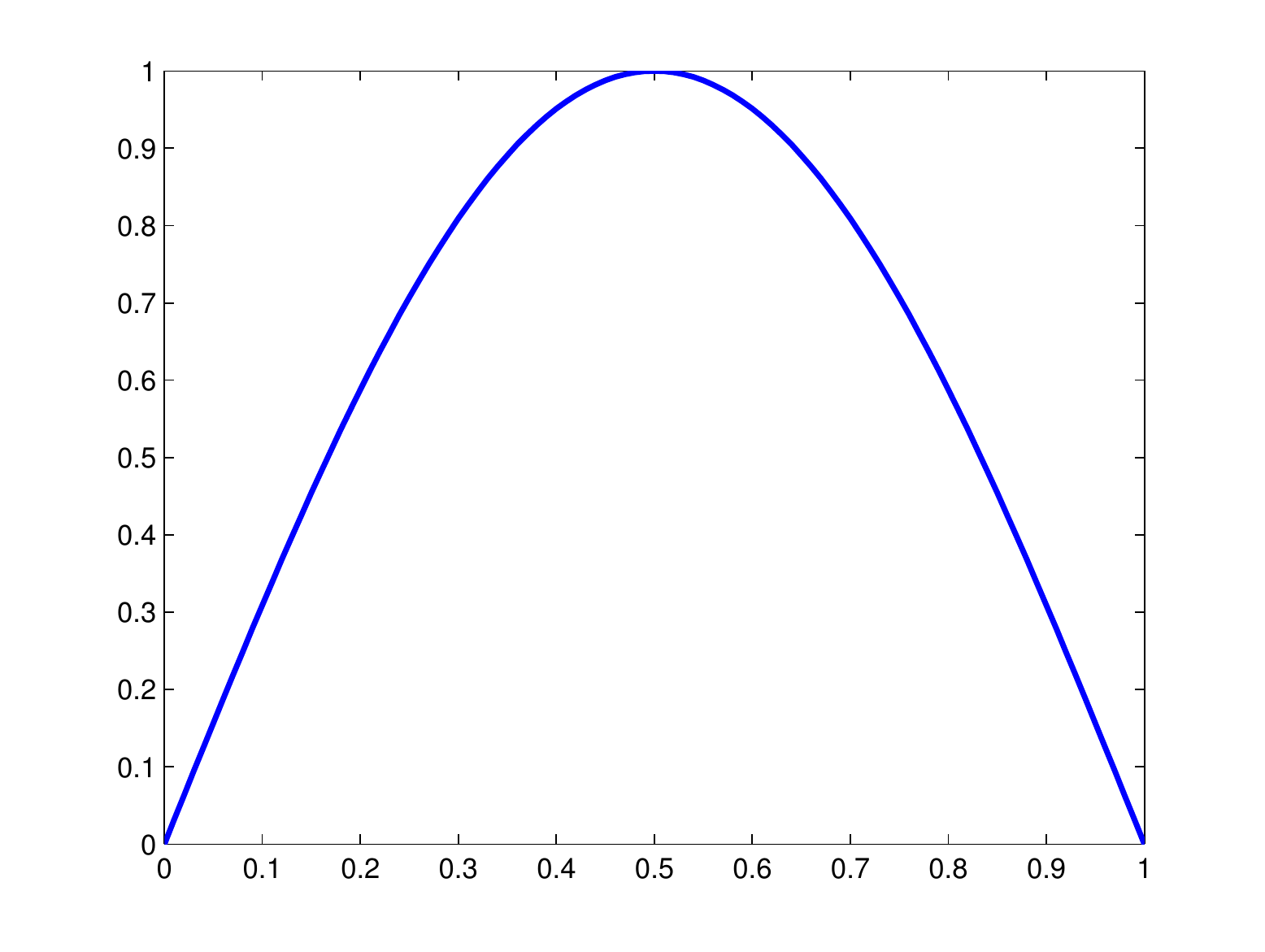}}
        \subfigure[$|G(\om)|/L$]{\includegraphics[width=5cm]{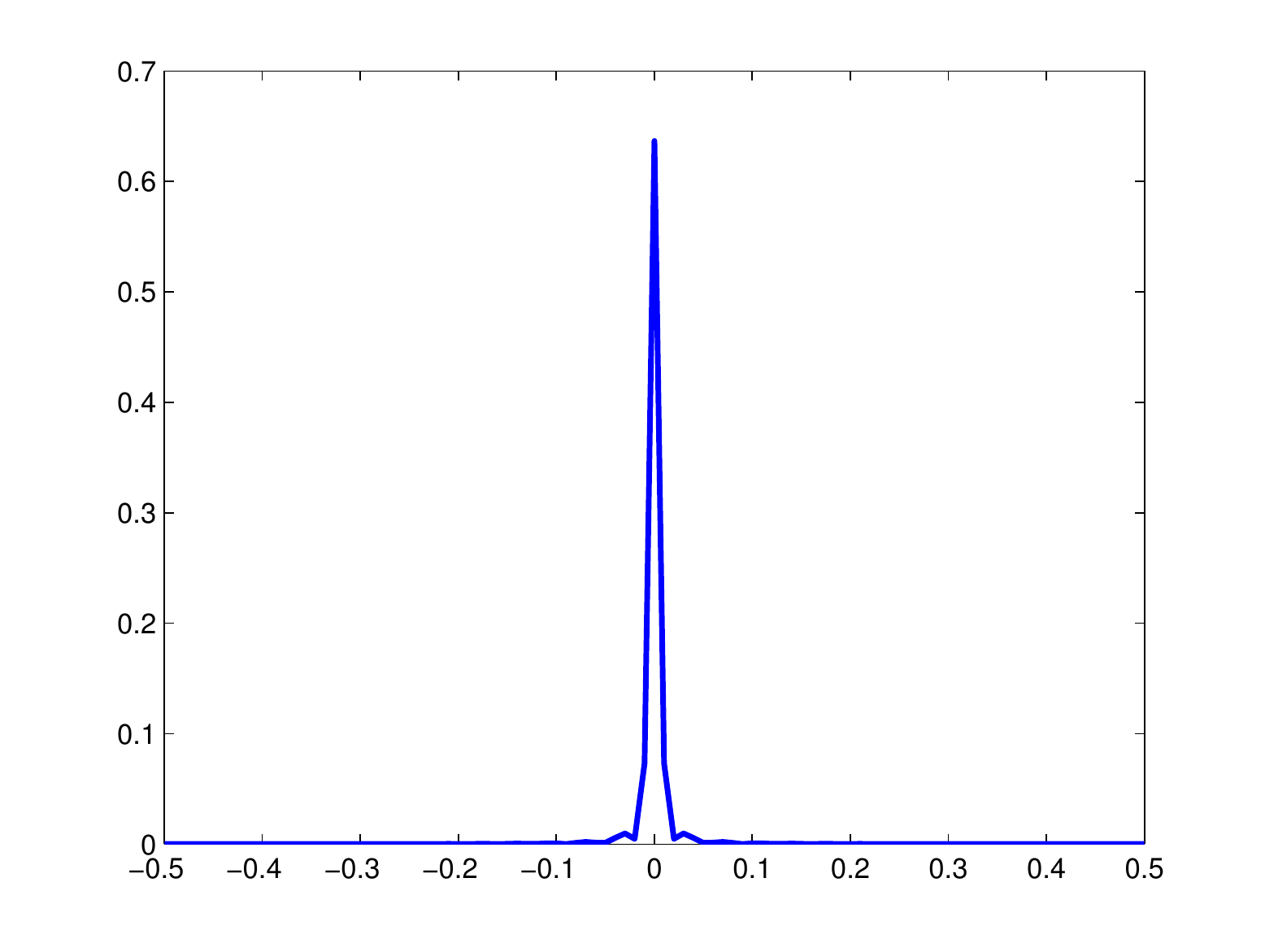}}
                \subfigure[Real part of $G(\om)/L$]{\includegraphics[width=5cm]{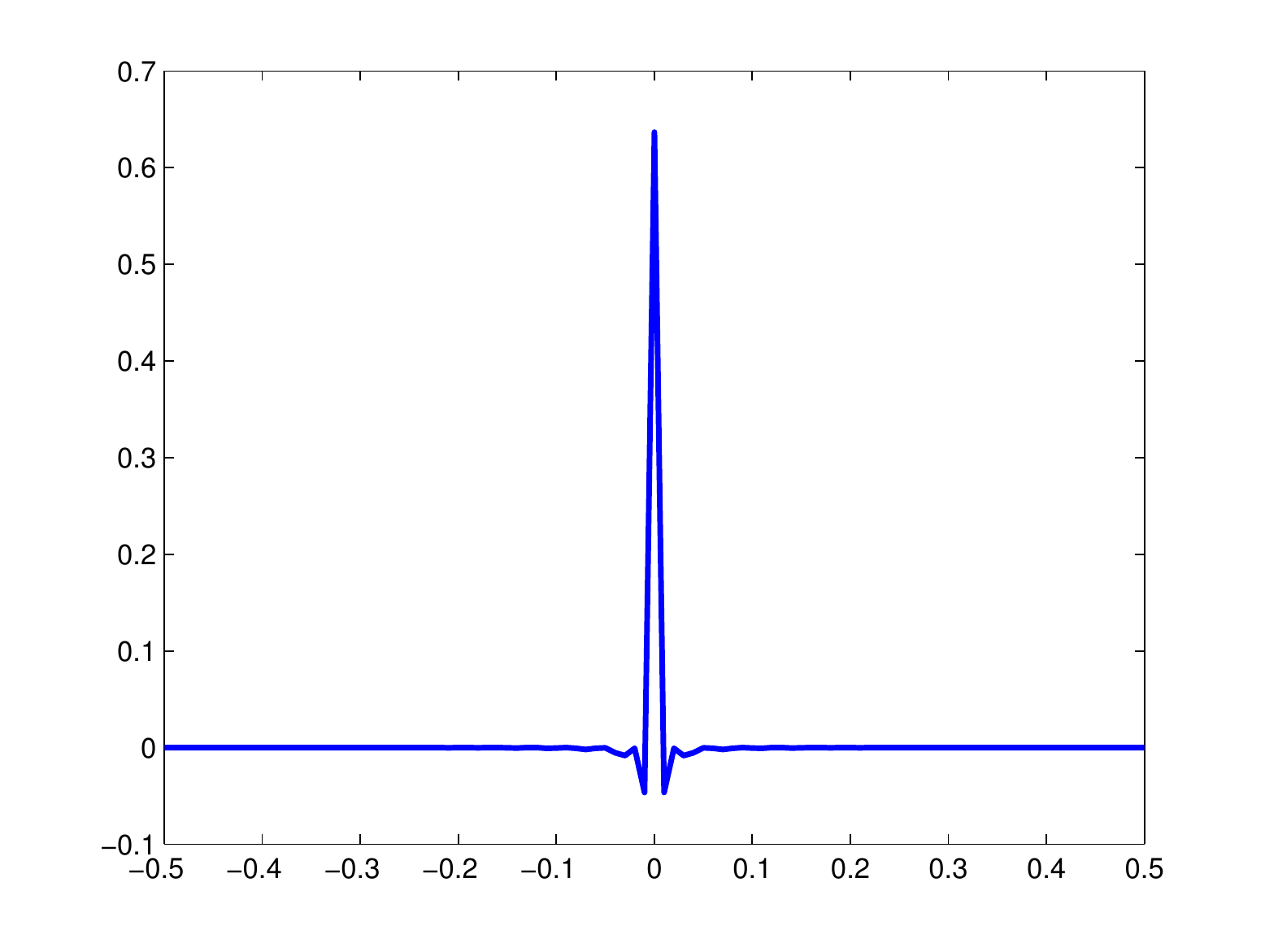}}
              \caption{Function $g(t)$ and $G(\om)/L$ when $L = 128$.}
              \label{fig1}
\end{figure}

Graphs of $g(t)$, $|G(\om)|/L$ and the real part of $G(\om)/L$ are shown in Figure \ref{fig1}. Function $G$ has the following properties.

\begin{lemma}
\label{lemma4}
\begin{enumerate}
   \setlength{\itemsep}{0.2cm}
\item $G(\om+n) = G(\om )$ for $n \in \ZZ$.

\item $G(-\om) = e^{-2\pi i L\om}G(\om)$ and $|G(-\om)| = |G(\om)|$.

\item $L(\frac 2 \pi - \frac{1}{L}) \le G(0) \le L(\frac 2 \pi+ \frac{1}{L}).$

\item $
|G(\om)| \le\frac{2}{\pi}\frac{L}{|1-4L^2\om^2|} + \frac{8}{\pi L}$ for $\om \in [0,1/2]$.
\end{enumerate}
\end{lemma}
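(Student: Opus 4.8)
The plan is to exploit the simplification $g(t)=\cos\pi(t-\tfrac12)=\sin\pi t$, so that $G(\om)=\sum_{k=0}^L \sin(\pi k/L)\,e^{2\pi i k\om}$, and to reduce each property either to an elementary identity or to one sharp trigonometric estimate. Property 1 is immediate, since $e^{2\pi i k(\om+n)}=e^{2\pi i k\om}$ for $n\in\ZZ$. For Property 2 I would use the reflection symmetry of the samples: because $g((L-k)/L)=\sin(\pi-\pi k/L)=\sin(\pi k/L)=g(k/L)$, the substitution $k\mapsto L-k$ in $G(-\om)=\sum_k g(k/L)e^{-2\pi i k\om}$ yields $G(-\om)=e^{-2\pi i L\om}\sum_j g(j/L)e^{2\pi i j\om}=e^{-2\pi i L\om}G(\om)$, and hence $|G(-\om)|=|G(\om)|$.

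For the closed form underlying both Properties 3 and 4, I would write $\sin(\pi k/L)=\tfrac{1}{2i}(e^{i\pi k/L}-e^{-i\pi k/L})$, splitting $G(\om)$ into two geometric sums with ratios $z_\pm=e^{i\theta_\pm}$, $\theta_\pm=\pi(2\om\pm 1/L)$. Summing each via the Dirichlet-kernel identity $\sum_{k=0}^{L}e^{ik\theta}=e^{iL\theta/2}\tfrac{\sin((L+1)\theta/2)}{\sin(\theta/2)}$ and collapsing the result with product-to-sum formulas, the two terms combine into the exact formula
\[ |G(\om)|=\frac{\sin(\pi/L)\,|\cos(L\pi\om)|}{2\,|\sin^2(\pi\om)-\sin^2(\pi/(2L))|}. \]
Property 3 then follows by setting $\om=0$: since every summand $\sin(\pi k/L)\ge 0$, $G(0)$ is real and positive, and using $\sin(\pi/L)=2\sin(\pi/(2L))\cos(\pi/(2L))$ the formula collapses to $G(0)=\cot(\pi/(2L))$. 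The two-sided bound $\tfrac1x-\tfrac{2}{\pi}\le\cot x<\tfrac1x$ on $(0,\pi/2]$, applied at $x=\pi/(2L)$, gives $\tfrac{2L}{\pi}-1<\tfrac{2L}{\pi}-\tfrac{2}{\pi}\le G(0)<\tfrac{2L}{\pi}$, which lies inside the asserted interval $[L(2/\pi-1/L),\,L(2/\pi+1/L)]$. The lower estimate $\cot x\ge \tfrac1x-\tfrac2\pi$ itself follows since $\cot x-\tfrac1x$ is decreasing on $(0,\pi/2]$ (its derivative is $1/x^2-1/\sin^2 x\le 0$ as $\sin x\le x$) with endpoint value $-2/\pi$, while $\cot x<\tfrac1x$ is just $\tan x>x$.

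Property 4 is the crux, and I expect the delicate bounding to be the main obstacle. Starting from the exact formula, using $\sin(\pi/L)\le\pi/L$, $|\cos(L\pi\om)|\le1$, and comparing the sines in the denominator with their arguments should reproduce the leading term $\tfrac{2L}{\pi|1-4L^2\om^2|}$; this is exactly the modulus of the continuous Fourier transform $\tfrac{2\cos(\pi L\om)}{\pi(1-4L^2\om^2)}$ that $G(\om)/L$ discretizes, which explains its shape. The whole difficulty is concentrated at $\om=1/(2L)$, where both $\sin^2(\pi\om)-\sin^2(\pi/(2L))$ and $1-4L^2\om^2$ vanish. There one must \emph{not} discard the factor $|\cos(L\pi\om)|$: since $\cos(L\pi\om)$ also vanishes at $\om=1/(2L)$, this cancellation keeps $|G|$ finite, whereas the crude estimate (dropping $|\cos|$) exceeds the claimed bound near the singularity.

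I would therefore split $[0,1/2]$ into a neighborhood of $1/(2L)$ and its complement. Writing $u=\pi\om$, $v=\pi/(2L)$, the target away from $1/(2L)$ reduces (after dropping $|\cos|\le1$) to the scalar inequality $\tfrac{1}{|\sin^2 u-\sin^2 v|}\le\tfrac{1}{|u^2-v^2|}+\tfrac{16}{\pi^2}$, which I would establish through the monotonicity of $F(t)=t^2-\sin^2 t$ (note $\tfrac{1}{|\sin^2u-\sin^2v|}-\tfrac{1}{|u^2-v^2|}=\tfrac{F(u)-F(v)}{|\sin^2u-\sin^2v|\,|u^2-v^2|}$); this converts back to $\tfrac{2L}{\pi|1-4L^2\om^2|}+\tfrac{8}{\pi L}$ since $\tfrac{16v}{\pi^2}=\tfrac{8}{\pi L}$. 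Near $1/(2L)$ I would instead use the vanishing of $\cos(L\pi\om)$ (bounding it by its distance to $1/(2L)$) to absorb the singular denominator into the additive remainder. The fiddly part — the genuine obstacle — is choosing the crossover radius so that both regimes close with precisely the stated constant $8/(\pi L)$ rather than merely some $O(1/L)$ error.
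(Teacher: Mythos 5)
Your treatment of Properties 1 and 2 coincides with the paper's (same reindexing $k\mapsto L-k$ for the reflection identity). For Property 3 you take a genuinely different and in fact sharper route: the closed form $G(0)=\sum_{k=0}^{L}\sin(\pi k/L)=\cot(\pi/(2L))$ is correct, and combined with $1/x-2/\pi\le\cot x<1/x$ on $(0,\pi/2]$ it places $G(0)$ in $[\frac{2L}{\pi}-\frac{2}{\pi},\frac{2L}{\pi}]$, strictly inside the asserted interval; the paper instead compares the Riemann sum $\frac{1}{L}G(0)$ with $\int_0^1 g=\frac{2}{\pi}$ and settles for the error $1/L$. Your exact formula for $|G(\om)|$ (two geometric sums collapsed by product-to-sum identities) is also correct.

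The gap is in Property 4, precisely at the step you flag as open, and your sketch of the two regimes would not close as written. First, the scalar inequality $\frac{1}{|\sin^2u-\sin^2v|}\le\frac{1}{|u^2-v^2|}+\frac{16}{\pi^2}$ fails on a whole interval around $u=v$, not just at the point: writing the excess as $\frac{F(u)-F(v)}{|\sin^2u-\sin^2v|\,|u^2-v^2|}$ with $F(t)=t^2-\sin^2t$, one finds for $u\to v$ the blow-up $\frac{2v-\sin 2v}{2v\sin 2v}\cdot\frac{1}{|u-v|}\approx\frac{v}{3|u-v|}$, which exceeds $\frac{16}{\pi^2}$ whenever $|u-v|\lesssim \pi^2 v/48$, i.e. whenever $|\om-\frac{1}{2L}|\lesssim\frac{\pi^2}{96L}$ --- a window comparable in size to $\frac{1}{2L}$ itself. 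Second, in the near regime the cosine factor does not let you ``absorb the singular denominator into the additive remainder'': keeping $|\cos(L\pi\om)|$ makes $|G|$ finite but of size $O(L)$ (indeed $|G|\to L/2$ at the removable singularity $\om=\frac{1}{2L}$), which is enormous compared with $\frac{8}{\pi L}$; what must dominate it is the blow-up of the main term $\frac{2L}{\pi|1-4L^2\om^2|}$, and that happens only for $|\om-\frac{1}{2L}|\lesssim\frac{2}{\pi^2 L}$. Numerically $\frac{\pi^2}{96}<\frac{2}{\pi^2}$, so the two windows do overlap and the scheme can in principle be completed, but establishing this overlap with honest, non-asymptotic constants, uniformly in $L$, is exactly the content you leave unproven --- it is not a cosmetic ``fiddly part'' but the entire estimate.

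A misdiagnosis worth correcting: the trouble near $\frac{1}{2L}$ comes not from discarding $|\cos(L\pi\om)|$ but from your lossy step $\sin(\pi/L)\le\pi/L$. If you keep the prefactor exact, dropping only the cosine leads to the single inequality $\frac{\sin 2v}{2v}\cdot\frac{1}{|\sin^2u-\sin^2v|}\le\frac{1}{|u^2-v^2|}+\frac{16}{\pi^2}$, whose excess near $u=v$ stays bounded (about $v/3$ against the budget $\frac{16}{\pi^2}$), so no regime splitting should be needed at all --- though a uniform verification still requires work. By contrast, the paper's proof sidesteps every one of these issues: Poisson summation gives $\frac{1}{L}G(\om)=\frac{2}{\pi}\sum_{r\in\ZZ}\frac{\cos(\pi L(\om-r))}{1-4L^2(\om-r)^2}e^{i\pi L(\om-r)}$, the $r=0$ term with the cosine dropped is literally the main term of the claimed bound, and the $r\neq 0$ tail is uniformly at most $\frac{8}{\pi L^2}$; the singularity at $\om=\frac{1}{2L}$ never has to be examined because both sides of the inequality blow up through the identical term.
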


\begin{proof}
\begin{enumerate}
\item 
$$G(\om+n) =  \sum_{k=0}^L g(\frac k L) e^{2\pi i k(\om+n)}= \sum_{k=0}^L g(\frac k L) e^{2\pi i k\om} = G(\om).$$

\item 
\begin{align*}
G(-\om) & = \sum_{k=0}^L g(\frac k L) e^{-2\pi i k \om} 
 = \sum_{l=0}^L g\left(\frac{L-l}{L}\right) e^{-2\pi i(L-l)\om} \text{ by letting } l = L-k \\
 & = \sum_{l=0}^L \left[\cos\pi (1-\frac l L - 0.5)\right] e^{-2\pi i(L-l)\om} 
 = \sum_{l=0}^L \left[\cos\pi (\frac l L - 0.5)\right] e^{-2\pi i(L-l)\om}
 \\
 & = \sum_{l=0}^L g(\frac l L) e^{-2\pi i(L-l)\om} = e^{-2\pi i L\om} \sum_{l=0}^L g(\frac l L) e^{2\pi i l \om} 
 = e^{-2\pi i L\om} G(\om).
\end{align*}

\item 
On the one hand, $$\|g\|_1 = \int_{0}^{1}\cos \pi (t-0.5)dt = \frac 2 \pi.$$ 
On the other hand,
$$G(0) = \displaystyle \sum_{0}^{L} g(\frac k L),$$ and $$\frac{1}{L}(G(0) - 1)\le \|g\|_1 \le  \frac{1}{L}(G(0) + 1).$$

\item 
According to the Poisson summation formula,
\begin{align}
& \frac 1 L G(\om) = \frac{1}{L} \sum_{k= 0}^{L} g(\frac{k}{L})e^{2\pi i k \om}
 = \sum_{r = -\infty}^{\infty} \int_{0}^{1} g(z)e^{2\pi i L(\om-r)z}dz
 = \frac{2}{\pi}\sum_{r=-\infty}^{\infty} \frac{\cos{\pi L(\om-r)}}{1-4L^2(\om-r)^2} e^{i\pi L(\om-r)}
\label{eqpoisson}.
\end{align}
Hence
\begin{align*}
& \Big|\frac 1 L G(\om)\Big| \le \frac{2}{\pi}\sum_{r=-\infty}^{\infty}\Big| \frac{\cos{\pi L(\om-r)}}{1-4L^2(\om-r)^2}\Big| \\
& \le \frac{2}{\pi}\frac{1}{|1-4L^2\om^2|} + \frac{2}{\pi} \sum_{r \neq 0} \frac{1}{4L^2(r-\om)^2-1} \\
& \le \frac{2}{\pi}\frac{1}{|1-4L^2\om^2|} + \frac{2}{\pi} \sum_{r \neq 0} \frac{2}{4L^2(r-\om)^2} \\
 & \le \frac{2}{\pi}\frac{1}{|1-4L^2\om^2|} + \frac{4}{\pi} \Big[\frac{1}{4L^2(\frac 1 2)^2}+\frac{1}{4L^2(1)^2} + \frac{1}{4L^2(\frac 3 2)^2} + \frac{1}{4L^2(2)^2} +\ldots\Big] \\
 & \hspace{5cm}\text{ as } \om \in [0,\frac 1 2],
  \\
  & \le  \frac{2}{\pi}\frac{1}{|1-4L^2\om^2|} + \frac{4}{\pi}  \frac{1}{L^2}\Big[\frac{1}{1^2}+\frac{1}{2^2} + \frac{1}{3^2} + \frac{1}{4^2} +\ldots\Big]  \\
  & \le  \frac{2}{\pi}\frac{1}{|1-4L^2\om^2|} + \frac{4}{\pi} \frac{1}{L^2} 2 =  \frac{2}{\pi}\frac{1}{|1-4L^2\om^2|} + \frac{8}{\pi L^2}.\\
\end{align*}

In \eqref{eqpoisson} the difference between the discrete and the continuous case lies in $$\frac{2}{\pi}\sum_{r \neq 0} \frac{\cos{\pi L(\om-r)}}{1-4L^2(\om-r)^2}$$
which is bounded above by $8/(\pi L^2)$, and is therefore negligible when $L$ is sufficiently large.
 
\end{enumerate}
\end{proof}

We start with the following lemma, which paves the way for the proof of Theorem \ref{thm3}.

\begin{lemma}
\label{lemma6}
Suppose objects in $\supp$ satisfy the gap condition 
\beq
\label{lemma60}
d(\om_j,\om_l) \ge q > \frac 1 L \sqrt{\frac 2 \pi}\left( \frac 2 \pi - \frac 1 L - \frac{8s}{\pi L^2}\right)^{-\frac 1 2}.
\eeq
Then
\beq
\Big(\frac{2}{\pi}-\frac 1 L - \frac{2}{\pi L^2 q^2} - \frac{8s}{\pi L^2}\Big)\|{{\bc}}\|_2^2
\le
\frac{1}{L}\sum_{k= 0}^{L} g(\frac{k}{L}) \Big|(\Phi^{0\rightarrow L}  {\bc})_k\Big|^2 
\le 
\Big(\frac{2}{\pi}+\frac 1 L + \frac{2}{\pi L^2 q^2} + \frac{8s}{\pi L^2}\Big)\|{\bc}\|_2^2\label{lemma61}
\eeq
for all ${\bc} \in \CC^s$.
\end{lemma}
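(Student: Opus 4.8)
The plan is to expand the weighted quadratic form into a diagonal part that carries the main contribution and an off-diagonal part that the gap condition forces to be small. Writing $(\Phi^{0\rightarrow L}\bc)_k = \sum_{j} \bc_j e^{-2\pi i k\om_j}$ and substituting the definition \eqref{eqG} of $G$, I would first establish
$$\frac1L\sum_{k=0}^L g\Big(\frac kL\Big)\big|(\Phi^{0\rightarrow L}\bc)_k\big|^2 = \frac1L\sum_{j,l}\bc_j\bar\bc_l\, G(\om_l-\om_j).$$
The diagonal terms $j=l$ contribute $\frac{G(0)}{L}\|\bc\|_2^2$, which by Lemma \ref{lemma4}(3) lies between $(\frac2\pi-\frac1L)\|\bc\|_2^2$ and $(\frac2\pi+\frac1L)\|\bc\|_2^2$; this supplies the leading $\frac2\pi$ together with the $\pm\frac1L$ appearing in \eqref{lemma61}. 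Everything then reduces to estimating the off-diagonal remainder $\frac1L\sum_{j\ne l}\bc_j\bar\bc_l\,G(\om_l-\om_j)$.

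For that remainder I would apply $|\bc_j\bar\bc_l|\le\frac12(|\bc_j|^2+|\bc_l|^2)$ together with the symmetry $|G(-\om)|=|G(\om)|$ from Lemma \ref{lemma4}(2) to bound its modulus by $\|\bc\|_2^2\,\max_j \frac1L\sum_{l\ne j}|G(\om_l-\om_j)|$. Invoking the pointwise estimate Lemma \ref{lemma4}(4), namely $\frac1L|G(\om)|\le \frac2\pi\frac{1}{|1-4L^2\om^2|}+\frac{8}{\pi L^2}$, splits the inner sum into two pieces. The constant $\frac{8}{\pi L^2}$ summed over the at most $s-1$ indices $l\ne j$ gives the term $\frac{8s}{\pi L^2}$, while the remaining piece is $\frac2\pi\sum_{l\ne j}(4L^2 d_{jl}^2-1)^{-1}$ with $d_{jl}=d(\om_j,\om_l)$; the absolute value opens with a plus sign because the gap condition \eqref{lemma60} forces $4L^2q^2>4>1$, so every denominator is positive.

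The crux is then to show $\sum_{l\ne j}(4L^2 d_{jl}^2-1)^{-1}\le \frac{1}{L^2q^2}$. Here I would use that $d\mapsto (4L^2d^2-1)^{-1}$ is decreasing on $[q,1/2]$ and order the frequencies on each side of $\om_j$ around the torus: since all frequencies are $q$-separated, the $m$-th nearest neighbour of $\om_j$ on a fixed side satisfies $d_{jl}\ge mq$, so summing over both sides gives $\sum_{l\ne j}(4L^2d_{jl}^2-1)^{-1}\le 2\sum_{m\ge1}(4L^2m^2q^2-1)^{-1}$. Factoring out $4L^2q^2$ and using $4L^2q^2>1$ to dominate $m^2-\frac{1}{4L^2q^2}\ge m^2-\frac14$, the tail collapses through the telescoping identity $\sum_{m\ge1}\frac{1}{m^2-1/4}=\sum_{m\ge1}2\big(\frac{1}{2m-1}-\frac{1}{2m+1}\big)=2$, yielding precisely $\frac{1}{L^2q^2}$.

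Combining the three contributions finishes the argument: the off-diagonal modulus is at most $(\frac{2}{\pi L^2 q^2}+\frac{8s}{\pi L^2})\|\bc\|_2^2$, and adding or subtracting this from the diagonal bound $\frac{G(0)}{L}\|\bc\|_2^2$ produces exactly the lower and upper inequalities of \eqref{lemma61}. The main obstacle is this last tail estimate: getting the geometric ordering argument right on the torus, so that the distances genuinely dominate $mq$ on each side, and matching it with the sharp telescoping sum, which is what pins down the clean constant $\frac{2}{\pi L^2 q^2}$ rather than a lossy one. The role of the gap condition \eqref{lemma60} is precisely to guarantee $L^2q^2>1$ so that all denominators are positive and the comparison with $m^2-\frac14$ is valid.
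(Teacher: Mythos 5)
Your proposal is correct and follows essentially the same route as the paper's proof: the diagonal/off-diagonal split of $\sum_{j,l}\bc_j\bar\bc_l G(\om_l-\om_j)$, the bound $|\bc_j\bar\bc_l|\le\tfrac12(|\bc_j|^2+|\bc_l|^2)$, the pointwise estimate of Lemma \ref{lemma4}(4), the nearest-neighbour ordering on the torus giving the factor $2\sum_{m\ge1}(4L^2m^2q^2-1)^{-1}$, and the telescoping evaluation that pins down $\tfrac{2}{\pi L^2q^2}$ are all exactly the paper's steps (the paper compares $\tfrac{L}{4L^2n^2q^2-1}\le\tfrac{1}{L^2q^2}\tfrac{L}{4n^2-1}$ where you factor out $4L^2q^2$ and use $m^2-\tfrac{1}{4L^2q^2}\ge m^2-\tfrac14$, an immaterial algebraic difference resting on the same hypothesis $Lq>1$). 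Your writeup is in fact slightly more explicit than the paper about where the gap condition enters to keep the denominators positive and to justify that comparison.
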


\begin{proof}
\begin{align*}
&\sum_{k=0}^{L} g(\frac{k}{L}) \Big|(\Phi^{0\rightarrow L}  {\bc})_k\Big|^2  
= \sum_{k= 0}^{L} g(\frac{k}{L}) 
\overline{\sum_{j=1}^s {\bc}_j e^{-2\pi i k \om_j}}
\sum_{l=1}^s {\bc}_l e^{-2\pi i k \om_l}\\
& = \sum_{j=1}^s \sum_{l=1}^s \overline{{\bc}_j} {\bc}_l \sum_{k=0}^{L} g(\frac k L)e^{2\pi i k(\om_j -\om_l)} = \sum_{j=1}^s\sum_{l=1}^s G(\om_j-\om_l) \overline{{\bc}_j} {\bc}_l \\
& = G(0) \|{\bc}\|_2^2 + \sum_{j=1}^s \sum_{l\neq j} G(\om_j-\om_l)\overline{{\bc}_j}{\bc}_l.
\end{align*}
It follows from the triangle inequality that
\begin{align*}
 G(0) \|{\bc}\|_2^2 - \sum_{j=1}^s \sum_{l\neq j} |G(\om_j-\om_l)\overline{{\bc}_j}{\bc}_l|
\le
\sum_{k= 0}^{L} g(\frac{k}{L}) \Big|(\Phi^{0\rightarrow L }  {\bc})_k\Big|^2 \le G(0) \|{\bc}\|_2^2 + \sum_{j=1}^s \sum_{l\neq j} |G(\om_j-\om_l)\overline{{\bc}_j}{\bc}_l|, \\
\end{align*}
where $\displaystyle \sum_{j=1}^s \sum_{l\neq j} |G(\om_j-\om_l)\overline{{\bc}_j}{\bc}_l|$ can be estimated through Property 4 in Lemma \ref{lemma4}.
\begin{align}
%%%%
&\sum_{j=1}^s \sum_{l\neq j} |G(\om_j-\om_l)\overline{{\bc}_j}{\bc}_l| 
 \le \sum_{j=1}^s \sum_{l\neq j} |G(\om_j-\om_l)|\frac{|{\bc}_j|^2+|{\bc}_l|^2}{2}  = \sum_{j=1}^s |{\bc}_j|^2 \sum_{l \neq j} |G(\om_j-\om_l)|
 \nonumber
\\
%%%%%%
& = \sum_{j=1}^s |{\bc}_j|^2 \sum_{l \neq j} |G(d(\om_j,\om_l))|
\le \sum_{j=1}^s |{\bc}_j|^2 \sum_{l \neq j} \Big[\frac{2}{\pi} \frac{L}{|1-4L^2d^2(\om_j,\om_l)|} +\frac{8}{\pi L}\Big] 
\nonumber 
\\
%%%%%
& \le \sum_{j=1}^s |{\bc}_j|^2 \frac{4}{\pi} \sum_{n = 1}^{\lfloor \frac{s}{2}\rfloor} \Big[ \frac{L}{4 L^2 n^2 q^2 -1} +\frac{4}{ L} \Big]  \text{ as frequencies in } \supp \text{ are pairwise separated by } q >\frac{1}{L} 
\nonumber \\
%%%%%%
& \le \sum_{j=1}^s |{\bc}_j|^2  \frac{4}{\pi} \Big[\frac{2 s}{ L} + \sum_{n=1}^\infty \frac{L}{4L^2 n^2 q^2 -1}\Big] 
\le \sum_{j=1}^s |{\bc}_j|^2  \frac{4}{\pi} \Big[\frac {2s}{ L} +  \frac{1}{L^2 q^2}\sum_{n=1}^\infty\frac{L}{4 n^2  -1}\Big]
\label{lemma62} \\
%%%%%%
& \le \sum_{j=1}^s |{\bc}_j|^2  \frac{4}{\pi} \Big[\frac {2s}{ L} +  \frac{1}{L q^2} \frac{1}{2}\sum_{n=1}^\infty\Big(\frac{1}{2n-1}-\frac{1}{2n+1}\Big)\Big] 
=\sum_{j=1}^s |{\bc}_j|^2  \frac{4}{\pi} \Big[\frac {2s}{ L} +  \frac{1}{L q^2} \frac{1}{2}\Big] 
\nonumber\\
& = \|{\bc}\|_2^2  \frac{2}{\pi} \Big(\frac{1}{L q^2} + \frac{4s}{L}\Big),
\nonumber
\end{align}
where  $\lfloor {s\over 2} \rfloor$  denotes the nearest integer smaller than or equal to $s/2$. Kernel $g$ in \eqref{lemma61} is crucial for the convergence of the series in \eqref{lemma62}.

Therefore 
$$G(0) \|{\bc}\|_2^2 - \frac{2}{\pi} \Big(\frac{1}{L q^2} + \frac{4s}{L}\Big)\|{\bc}\|_2^2 
\le
\sum_{k= 0}^{L} g(\frac{k}{L}) \Big|(\Phi^{0\rightarrow L}  {\bc})_k\Big|^2 \le G(0) \|{\bc}\|_2^2 + \frac{2}{\pi} \Big(\frac{1}{L q^2} + \frac{4s}{L}\Big)\|{\bc}\|_2^2 . $$

The equation above along with Property 3 in Lemma \ref{lemma4} yields
$$
L\Big(\frac{2}{\pi}-\frac 1 L - \frac{2}{\pi L^2 q^2} - \frac{8s}{\pi L^2}\Big)\|{\bc}\|_2^2
\le
\sum_{k= 0}^{L} g(\frac{k}{L}) \Big|(\Phi^{0\rightarrow  L}  {\bc})_k\Big|^2 
\le 
L\Big(\frac{2}{\pi}+\frac 1 L + \frac{2}{\pi L^2 q^2} + \frac{8s}{\pi L^2}\Big)\|{\bc}\|_2^2. $$

The gap condition \eqref{lemma60} is derived from the positivity condition of the lower bound, i.e.,
$$
\frac{2}{\pi}-\frac 1 L - \frac{2}{\pi L^2 q^2} - \frac{8s}{\pi L^2} > 0.$$
\end{proof}

Proof of Theorem \ref{thm3} is given below.

\begin{proof}
Given that $\supp = \{\om_1,\ldots,\om_s\} \subset [0,1)$ and frequencies are separated above $1/L$, there are no more than $L$ frequencies in $\supp$, i.e., $s<L$.

The lower bound in Theorem \ref{thmp2} follows from Lemma \ref{lemma6} as  
\begin{align*} 
&\|\Phi^{0\rightarrow L}{\bc}\|_2^2 
=
\sum_{k = 0}^{L} \Big|(\Phi^{0\rightarrow L}  {\bc})_k\Big|^2 
=
\sum_{k= 0}^{L}  \Big|(\Phi^{0\rightarrow L}  {\bc})_k\Big|^2
\ge \sum_{k= 0}^{L} g(\frac{k}{L}) \Big|(\Phi^{0\rightarrow L}  {\bc})_k\Big|^2 \\
&
\ge L\Big(\frac{2}{\pi}-\frac 1 L - \frac{2}{\pi L^2 q^2} - \frac{8s}{\pi L^2}\Big) >
L\Big(\frac{2}{\pi}- \frac{2}{\pi L^2 q^2} -\frac 4 L\Big).
\end{align*}
The gap condition \eqref{sep} in Theorem \ref{thmp2} is derived from the positivity condition of the lower bound, i.e.,
$$\Big(\frac{2}{\pi}- \frac{2}{\pi L^2 q^2} -\frac 4 L\Big)>0.$$

We prove the upper bound in Theorem \ref{thmp2} in two cases: $L$ is even or $L$ is odd.

\begin{description}
\item [Case 1: $L$ is even.]
First we substitute $L$ with $2L$ in \eqref{lemma61} and obtain
\beq
\label{lemma63}
\sum_{k= 0}^{2L} g(\frac{k}{2L}) \Big|(\Phi^{0\rightarrow 2L}  {\bc})_k\Big|^2 
\le 
2L\Big(\frac{2}{\pi} + \frac{1}{2L} + \frac{2}{4\pi L^2 q^2} + \frac{8s}{4\pi L^2}\Big)\|{\bc}\|_2^2.
\eeq

Let $D^{\frac L 2} = {\rm diag}(e^{-2\pi i \om_1 \frac L 2}, e^{-2\pi i \om_2 \frac L 2},\ldots,e^{-2\pi i \om_s \frac L 2})$ and $D^{-\frac L 2} = (D^{\frac L 2})^{-1}$.
On the one hand,
\begin{align*}
& \sum_{k= 0}^{2L} g(\frac{k}{2L}) \Big|(\Phi^{0\rightarrow 2L}  D^{-\frac L 2}{\bc})_k\Big|^2 
 \ge 
\sum_{k= L/2}^{3L/2} g(\frac{k}{2L}) \Big|(\Phi^{0\rightarrow 2L}  D^{-\frac L 2}{\bc})_k\Big|^2 
\ge
\sum_{k= L/2}^{3L/2} g(\frac 1 4) \Big|(\Phi^{0\rightarrow 2L}  D^{-\frac L 2}{\bc})_k\Big|^2 \\
& =
\frac{1}{\sqrt 2}\sum_{k= L/2}^{3L/2}  \Big|(\Phi^{0\rightarrow 2L}  D^{-\frac L 2}{\bc})_k\Big|^2 
=
\frac{1}{\sqrt 2} \sum_{k= 0}^{L}  \Big|(\Phi^{\frac L 2 \rightarrow \frac{3L}{2}} D^{-\frac L 2} {\bc})_k\Big|^2 
=
\frac{1}{\sqrt 2} \sum_{k= 0}^{L}  \Big|(\Phi^{ 0 \rightarrow L} D^{\frac L 2} D^{-\frac L 2} {\bc})_k\Big|^2 
\\
& =
\frac{1}{\sqrt 2} \sum_{k= 0}^{L}  \Big|(\Phi^{ 0 \rightarrow L}  {\bc})_k\Big|^2. 
\end{align*}

On the other hand, \eqref{lemma63} implies 
\begin{align*}
\sum_{k= 0}^{2L} g(\frac{k}{2L}) \Big|(\Phi^{0\rightarrow 2L}  D^{-\frac L 2}{\bc})_k\Big|^2 
& \le 
2L\Big(\frac{2}{\pi} + \frac{1}{2L} + \frac{2}{4\pi L^2 q^2} + \frac{8s}{4\pi L^2}\Big)\|D^{-\frac L 2}{\bc}\|_2^2
\\
& =L\Big(\frac{4}{\pi} + \frac{1}{L} + \frac{1}{\pi L^2 q^2} + \frac{4s}{\pi L^2}\Big)\|{\bc}\|_2^2.
\end{align*}

As a result
$$\|\Phi^L {\bc}\|_2^2
= \sum_{k= 0}^{L}  \Big|(\Phi^{ 0 \rightarrow L}  {\bc})_k\Big|^2
\le
L\Big(\frac{4\sqrt 2}{\pi} + \frac{\sqrt 2}{L} + \frac{\sqrt 2}{\pi L^2 q^2} + \frac{4\sqrt 2 s}{\pi L^2}\Big)\|{\bc}\|_2^2 
<
L\Big(\frac{4\sqrt 2}{\pi}  + \frac{\sqrt 2}{\pi L^2 q^2}+ \frac{3\sqrt 2}{L} \Big)\|{\bc}\|_2^2. 
$$
When $L$ is an even integer,
$$
\Big(\frac{2}{\pi}- \frac{2}{\pi L^2 q^2} -\frac 4 L\Big) \|\bc\|_2^2
\le
\frac{1}{L}\|\Phi^L {\bc}\|_2^2
\le
\Big(\frac{4\sqrt 2}{\pi}  + \frac{\sqrt 2}{\pi L^2 q^2}+ \frac{3\sqrt 2}{L} \Big)\|{\bc}\|_2^2. 
$$

\item [Case 2: $L$ is odd.]
\begin{align}
& \|\Phi^L {\bc}\|_2^2
= \sum_{k= 0}^{L}  \Big|(\Phi^{ 0 \rightarrow L}  {\bc})_k\Big|^2
\le
\sum_{k= 0}^{L+1}  \Big|(\Phi^{ 0 \rightarrow L+1}  {\bc})_k\Big|^2
\nonumber \\
%& \le
%(L+1)\Big(\frac{4\sqrt 2}{\pi} + \frac{\sqrt 2}{L+1} + \frac{\sqrt 2}{\pi (L+1)^2 q^2} + \frac{4\sqrt 2 s}{\pi (L+1)^2}\Big)\|{\bc}\|_2^2 
& <
(L+1)\Big(\frac{4\sqrt 2}{\pi}  + \frac{\sqrt 2}{\pi (L+1)^2 q^2}+ \frac{3\sqrt 2}{L+1} \Big)\|{\bc}\|_2^2.
\label{lemma65} 
\end{align}

Eq. \eqref{lemma65} above follows from Case 1 as $L+1$ is an even integer. In summary, when $L$ is an odd integer,
$$
\left(\frac{2}{\pi}- \frac{2}{\pi L^2 q^2} -\frac 4 L\right) \|\bc\|_2^2
\le
\frac{1}{L}\|\Phi^L {\bc}\|_2^2
\le
\left(1+\frac 1 L\right)\left(\frac{4\sqrt 2}{\pi}  + \frac{\sqrt 2}{\pi (L+1)^2 q^2}+ \frac{3\sqrt 2}{L+1} \right)\|{\bc}\|_2^2. 
$$

\commentout{
Let $D^{\frac {L-1}{2}} = {\rm diag}(e^{-2\pi i \om_1 \frac {L-1}{2}}, e^{-2\pi i \om_2 \frac{L-1}{2}},\ldots,e^{-2\pi i \om_s \frac{L-1}{2}})$ and $D^{-\frac{L-1}{2}} = (D^{\frac{L-1}{2}})^{-1}$.

First we derive a lower bound of $g(\frac 1 4 - \frac{1}{4L})$.
\begin{align}
& g(\frac 1 4 - \frac{1}{4L}) = \cos(\frac \pi 4 +\frac{\pi}{4L}) = \cos(\frac \pi 4)\cos(\frac{\pi}{4L}) - \sin(\frac \pi 4)\sin(\frac{\pi}{4L}) \nonumber
\\
&\ge \frac{1}{\sqrt 2} \left( \cos(\frac{\pi}{4L}) - \sin(\frac{\pi}{4L}) \right) > \frac{1}{\sqrt 2} (1-\frac{1}{2L}-\frac{\pi}{4L}) >  
\frac{1}{\sqrt 2} (1-\frac{3}{2L}).
\label{lemma64}
\end{align}
The estimation is \eqref{lemma64} is based on $\cos(\frac{\pi}{4L}) > 1-\frac{1}{2L}$ and $\sin(\frac{\pi}{4L}) < \frac{\pi}{4L}$.

On the one hand,
\begin{align*}
& \sum_{k= 0}^{2L} g(\frac{k}{2L}) \Big|(\Phi^{0\rightarrow 2L}  D^{-\frac{L-1}{2}}{\bc})_k\Big|^2 
 \ge 
\sum_{k= (L-1)/2}^{(3L-1)/2} g(\frac{k}{2L}) \Big|(\Phi^{0\rightarrow 2L}  D^{-\frac{L-1}{2}}{\bc})_k\Big|^2 
\\
& \ge
\sum_{k= (L-1)/2}^{(3L-1)/2} g(\frac 1 4 - \frac{1}{4L}) \Big|(\Phi^{0\rightarrow 2L}  D^{-\frac{L-1}{2}}{\bc})_k\Big|^2 
>
\frac{1}{\sqrt 2}(1-\frac{3}{2L})\sum_{k= (L-1)/2}^{(3L-1)/2}  \Big|(\Phi^{0\rightarrow 2L}  D^{-\frac{L-1}{2}}{\bc})_k\Big|^2
\\ 
&
=
\frac{1}{\sqrt 2}(1-\frac{3}{2L})\sum_{k= 0}^{L}  \Big|(\Phi^{\frac{L-1}{2} \rightarrow \frac{3L-1}{2}} D^{-\frac{L-1}{2}} {\bc})_k\Big|^2 
=
\frac{1}{\sqrt 2}(1-\frac{3}{2L}) \sum_{k= 0}^{L}  \Big|(\Phi^{ 0 \rightarrow L} D^{\frac{L-1}{2}} D^{-\frac{L-1}{2}} {\bc})_k\Big|^2 
\\
& =
\frac{1}{\sqrt 2}(1-\frac{3}{2L}) \sum_{k= 0}^{L}  \Big|(\Phi^{ 0 \rightarrow L}  {\bc})_k\Big|^2. 
\end{align*}
}

\end{description}
\end{proof}

\section{Proof of Theorems in Section \ref{secper}}

\subsection{Proof of Theorem \ref{thmp1}}
\label{app1}
\begin{proof}

Let $\HH = H H^\star$, $\HHE = H^\ep {H^\ep}^\star$ and $\EE = H E^\star + E H^\star + E E^\star$. Then
\beq
\label{p1e1}
\HHE = \HH + \EE,
\eeq
and
\beq
\label{p1e2}
\HH = \begin{bmatrix} U_1 & U_2 \end{bmatrix}
 \begin{bmatrix} \SI_1\SI_1^\star & 0 \\ 0 & 0 \end{bmatrix}
 \begin{bmatrix} U^\star_1 \\ U^\star_2 \end{bmatrix},
 \eeq

\beq
\label{p1e3}
\HHE = \begin{bmatrix} \UE_1 & U^\ep_2 \end{bmatrix}
 \begin{bmatrix} \SE_1{\SE_1}^\star & 0 \\ 0 & \SE_2{\SE_2}^\star\end{bmatrix}
 \begin{bmatrix} {\UE}^\star_1 \\ {\UE}^\star_2 \end{bmatrix},
\eeq
where $\SI_1 = {\rm diag}(\si_1,\ldots,\si_s), \SE_1 = {\rm diag}(\se_1,\ldots,\se_s),$ and $\SE_2 = {\rm diag}(\se_{s+1},\se_{s+2},\ldots).$

Combining \eqref{p1e1} and \eqref{p1e3} yields
\beq
\label{p1e4}
\begin{bmatrix} {U}^\star_1 \\  {U}^\star_2 \end{bmatrix}
(\HH + \EE)
 \begin{bmatrix} \UE_1 & U^\ep_2 \end{bmatrix}
 = \begin{bmatrix}
 U_1^\star \UE_1 \SE_1 {\SE_1}^\star & U_1^\star \UE_2 \SE_2 {\SE_2}^\star \\
 U_2^\star \UE_1 \SE_1{\SE_1}^\star & U_2^\star \UE_2 \SE_2 {\SE_2}^\star
 \end{bmatrix}.
\eeq

On the one hand, the (2,1) entries on both sides of \eqref{p1e4} are equal such that
 $$U_2^\star \HH U_1^\ep + U_2^\star \EE U_1^\ep = U_2^\star \UE_1 \SE_1 {\SE_1}^\star.$$
 Based on \eqref{p1e2}, we obtain $U_2^\star \HH  = \mathbf 0$ and therefore
 $$ U_2^\star \EE U_1^\ep ( \SE_1 {\SE_1}^\star)^{-1} = U_2^\star \UE_1$$
 which implies 
 \beq
 \label{p1e5}
 \| U_2^\star \UE_1\|_2 \le \frac{\|\EE\|_2}{(\se_s)^2}
 \eeq
 
 On the other hand, the (1,2) entries on both sides of \eqref{p1e4} are equal such that
 $$U_1^\star \HH U_2^\ep + U_1^\star \EE U_2^\ep = U_1^\star \UE_2 \SE_2 {\SE_2}^\star.$$
 Based on \eqref{p1e2}, we obtain $U_1^\star \HH  = \SI_1 \SI_1^\star U_1^\star$ and therefore
 $$\SI_1 \SI_1^\star U_1^\star \UE_2 +  U_1^\star \EE \UE_2  = U_1^\star \UE_2 \SE_2 {\SE_2}^\star.$$
For any $\phi \in \CC^{L+1-s}$,
\begin{align*}
 \si_s^2 \|U_1^\star \UE_2\phi\|_2
& \le
\|\SI_1 \SI_1^\star U_1^\star \UE_2\phi \|_2 
\le
\|U_1^\star \EE \UE_2\|_2\|\phi\|_2 + \|U_1^\star \UE_2 \SE_2 {\SE_2}^\star\phi\|_2 \\
& 
\le 
\|\EE\|_2 \|\phi\|_2 + \|U_1^\star \UE_2\|_2 (\se_{s+1})^2 \|\phi\|_2,
\end{align*}
so
$$\frac{\|U_1^\star \UE_2\phi\|_2}{\|\phi\|_2}
\le
\frac{\|\EE\|_2 +  (\se_{s+1})^2\|U_1^\star \UE_2\|_2 }{\si_s^2}.$$
By taking the supremum over $\phi\in\CC^{L+1-s}$ on the left, we obtain 
$$\|U_1^\star \UE_2 \|_2
\le
\frac{\|\EE\|_2 +  (\se_{s+1})^2\|U_1^\star \UE_2\|_2 }{\si_s^2},$$
and then
\beq
\label{p1e6}
\|U_1^\star \UE_2 \|_2 \le  \frac{\|\EE\|_2}{\si_s^2- (\se_{s+1})^2}.
\eeq

Let $\PO$ and $\PEO$ be orthogonal projects onto the subspace spanned by columns of $U_1$ and $\UE_1$ respectively. For any $\phi \in \CC^{L+1}$
\begin{align}
 & \frac{\|\PET\phi -\PT\phi\|_2}{\|
\phi\|_2} 
\nonumber
 = \frac{\|\PO\PET\phi +\PT\PET\phi -\PT\phi\|_2}{\|\phi\|_2} 
= \frac{\|\PO\PET\phi -\PT\PEO\phi\|_2}{\|\phi\|_2} 
\nonumber
\\
& = \frac{\|U_1 U_1^\star \UE_2 {\UE_2}^\star \phi 
- U_2 U_2^\star \UE_1 {\UE_1}^\star\phi
\|_2}{\|\phi\|_2}
 \le \|U_1^\star \UE_2\|_2 + \| U_2^\star \UE_1\|_2.
\label{p1e7}
\end{align}

Eq. \eqref{p1e7} together with  \eqref{p1e5} and  \eqref{p1e6} imply  
$$|\RE(\om) -R(\om)| 
\le \|\PET-\PT\|_2 = \sup_{\phi\in \CC^{L+1}} \frac{\|\PET\phi-\PT\phi\|_2}{\|\phi\|_2}
\le \left[\frac{1}{(\se_s)^2} + \frac{1}{\si_s^2-(\se_{s+1})^2}\right]\|\EE\|_2.$$
Meanwhile 
$$\|\EE\|_2 \le 2\|H\|_2\|E\|_2 + \|E\|_2^2 \le (2\si_1+\|E\|_2)\|E\|_2,$$
and therefore 
$$\|\PET-\PT\|_2 \le (2\si_1 + \|E\|_2)\left[\frac{1}{(\se_s)^2} + \frac{1}{\si_s^2-(\se_{s+1})^2}\right]\|E\|_2.$$

According to Proposition \ref{propweyl}, $\se_s \ge \si_s -\|E\|_2$ and $\se_{s+1}\le \|E\|_2$. Then 
$$\frac{1}{(\se_s)^2} + \frac{1}{\si_s^2-(\se_{s+1})^2} 
\le
\frac{2}{(\si_s -\|E\|_2)^2},
$$
which implies that 
$$\|\PET-\PT\|_2 \le \frac{2(2\si_1 + \|E\|_2)}{(\si_s-\|E\|_2)^2}
\|E\|_2.$$

In particular, while $\om $ is restricted on $\supp$, a sharper upper bound in \eqref{eqp2} is derived as follows:

Since $M-L+1\ge s$ and true frequencies are pairwise distinct, $\XP$ has full row rank. Denote $Y^\ep = \UE_1 \SE_1{\VE_1}^\star +\UE_2 \SE_2{\VE_2}^\star$ where $\Sigma_1^\ep = \text{diag}(\se_1,\ldots,\se_s)$ and $\Sigma_2^\ep = \text{diag}(\se_{s+1},\se_{s+2},\ldots)$.  Multiplying ${\UE_2}^\star$ on the left and the pseudo-inverse of $\XP$ on the right of
$$\UE_1 \SE_1{\VE_1}^\star +\UE_2 \SE_2{\VE_2}^\star = \Phi^L X (\Phi^{M-L})^T + E$$
yields
$$
\SE_2{\VE_2}^\star[\XP]^\dagger ={\UE_2}^\star\Phi^L  +{\UE_2}^\star E[\XP]^\dagger.
$$
Then
$$
{\UE_2}^\star \phi^L(\om_j) ={\UE_2}^\star\Phi^L e_j ={\UE_2}^\star E[\XP]^\dagger e_j -\SE_2{\VE_2}^\star[\XP]^\dagger e_j,
$$
and
$$
\|\PET \phi^L(\om_j)\|_2 =\|{\UE_2}^\star \phi^L(\om_j)\|_2 \le \frac{\|E\|_2 + \si^\ep_{s+1}}{\smin(X(\Phi^{M-L})^T)} 
\le \frac{2\|E\|_2 }{\smin(X(\Phi^{M-L})^T)}
\le  \frac{2\|E\|_2 }{\xmin\smin((\Phi^{M-L})^T)}.
$$
Therefore
$$
\RE(\om_j) 
= \frac{\|\PET \phi^L(\om_j)\|_2}{\|\phi^L(\om_j)\|_2} 
\le  \frac{2\|E\|_2 }{\xmin\smin((\Phi^{M-L})^T)\|\phi^L(\om_j)\|_2}.
$$

\end{proof}

\commentout{
\subsection{Proof of Corollary \ref{cor2}}

\begin{proof}
\eqref{ce1} follows from Theorem \ref{thm3} and Theorem \ref{thmp1}. If external noise is bounded such that $\ep_j \le \sigma, \ j =0,\ldots, M,$

$$\|E\|_2 \le \|E\|_F \le \sigma\sqrt{(L+1)(M-L+1)}. $$

\end{proof}

}

\subsection{Proof of Theorem \ref{thmp2}}
\label{app4}

\begin{proof}
Let 
$$Q(\om) = R^2(\om)= \frac{{\plo}^\star U_2 U_2^\star U_2 U_2^\star \plo}{\|\plo\|_2^2} = \frac{{\plo}^\star U_2 U_2^\star U_2 U_2^\star \plo}{L+1}$$
and
$$Q^\ep(\om) = [{\RE}(\om)]^2 = \frac{{\plo}^\star \UE_2 {\UE}_2^\star \UE_2 {\UE}_2^\star \plo}{L+1}.$$

Both $Q(\om)$ and $Q^\ep(\om)$ are smooth functions and
\begin{align*}
Q'(\om) 
& = \frac{{\plo}^\star U_2 U_2^\star U_2 U_2^\star [\plo]' +
[\plo]'^\star U_2 U_2^\star U_2 U_2^\star {\phi^L}(\om)}{L+1}  \\
& = \frac{\langle \PT\plo, \PT [\plo]' \rangle + \langle \PT [\plo]', \PT \plo\rangle}{L+1},
\end{align*}

\begin{align}
Q''(\om) 
& = \frac{{\plo}^\star U_2 U_2^\star U_2 U_2^\star [\plo]'' 
+ 2{[\plo]' }^\star U_2 U_2^\star U_2 U_2^\star [\plo]' 
+ [\plo]''^\star U_2 U_2^\star U_2 U_2^\star \plo
}{L+1} \nonumber \\
& = \frac{\langle \PT \plo, \PT[\plo]'' \rangle + 2\|\PT [\plo]'\|_2^2 + \langle \PT[\plo]'', \PT\plo\rangle }{L+1}
\label{Qpp}
.
\end{align}

Let $D(\om) = \QE(\om) -Q(\om)$ and then 
$$[\QE(\om)]' = Q'(\om) + D'(\om),$$
$$[\QE(\om)]'' = Q''(\om) + D''(\om).$$

First, we derive an upper bound of $|D'(\om)|$ and $|D''(\om)|$ in terms of $\alpha,L$ and $\|E\|_2$.
\begin{align*}
(L+1)|D'(\om)|
&
= |\lan \PET\plo,\PET[\plo]'\ran -\lan \PT\plo,\PT[\plo]'\ran  
\\
& \quad 
+ \lan \PET[\plo]',\PET\plo\ran  - \lan \PT[\plo]',\PT\plo\ran |
\\
& 
= |\lan \PET\plo,\PET[\plo]'\ran   - \lan \PET\plo,\PT[\plo]'\ran  
\\
& \quad
 + \lan \PET\plo,\PT[\plo]'\ran  
-\lan \PT\plo,\PT[\plo]'\ran  
\\
& \quad 
+ \lan \PET[\plo]',\PET\plo\ran  -\lan \PET[\plo]',\PT\plo\ran 
\\
& \quad 
+ \lan \PET[\plo]',\PT\plo\ran 
 - \lan \PT[\plo]',\PT\plo\ran |
 \\
 &
 \le
  \|\PET\plo\|_2 \|\PET-\PT\|_2 \|[\plo]'\|_2
  + \|\PT[\plo]'\|_2 \|\PET-\PT\|_2 \|\plo\|_2
  \\
  & \quad
  +  \|\PET[\plo]'\|_2 \|\PET-\PT\|_2 \|\plo\|_2
  +  \|\PT\plo\|_2 \|\PET-\PT\|_2 \|[\plo]'\|_2
  \\
  & \le 4\|\PET-\PT\|_2 \|\plo\|_2 \|[\plo]'\|_2. 
\end{align*} 
Since $[\plo]' = -2\pi i[0  \ e^{-2\pi i \om} \ 2e^{-2\pi i 2\om} \ 3 e^{-2\pi i3 \om} \ \ldots \ L e^{-2\pi i L\om}]^T$,  we have
$\|[\plo]'\|_2 = \eta(L)\sqrt{L+1} $ where $\eta(L) = 2\pi \sqrt{1^2+2^2+\ldots+L^2}/\sqrt{L+1}$ and then
\beq
\label{p4e1}
|D'(\om)| \le 4 \alpha\eta(L){\|E\|_2},
\eeq
by Theorem \ref{thmp1}. 

Applying the same technique to $D''(\om)$ yields
\beq
\label{p4e2}
|D''(\om)| \le  4\alpha \left[\eta^2(L)+\zeta(L)\right] {\|E\|_2},
\eeq
where $\zeta(L) = (2\pi)^2\sqrt{1^4 + 2^4 +\ldots + L^4}/\sqrt{L+1}$.

Next we prove that for each $\om_j \in \supp$, there exists a strict local minimizer $\hat\om_j$ of $\RE$
% such that $\hat\om_j \rightarrow \om_j$ as $\|E\|_2/\sqrt{L(M-L)} \rightarrow 0$.  
near $\om_j$ satisfying \eqref{eq23}.

Since $\om_j$ is a strict local minimizer of $Q(\om)$ and 
 \beq
 \label{p4e20}
 \left. \PT [\plo]' \right|_{\om = \om_j} \neq \mathbf{0}, \quad \forall \om_j \in \supp
 \eeq
 we have 
 $$Q'(\om_j) = 0 \ \text{ and }\ Q''(\om_j) > 0.$$  
 
 We break the following argument into several steps:
 
 \begin{description}
 
 \item[Step 1]
 Let $Q''(\om_j) = 2m_j$. $m_j >0$ due to assumption \eqref{p4e20}. Thanks to the smoothness of $Q''$, there exists $\delta_j > 0$ such that  
 $$Q''(\om) > m_j , \ \forall \om \in (\om_j-\delta_j,\om_j+\delta_j).$$
 For sufficiently small noise satisfying 
 \beq
 \label{p4e3}
 4\alpha[\eta^2(L)+\zeta(L)] {\|E\|_2} < m_j/2,
\eeq
we have
 $$[\QE(\om)]'' > m_j/2, \ \forall \om \in(\om_j-\delta_j,\om_j+\delta_j).$$
 
% Notice that \eqref{p4e3} always holds as $M \rightarrow \infty$ and $L = M/2$ while $q \ge 4$ RL as the left hand side scales like $\sqrt{M\log M}$ and the right hand side scales like $M^2$ according to Remark 
 
 \item[Step 2] 
 Consider $Q'(\om_j - \delta_j)$ and $Q'(\om_j+\delta_j)$. There exist $\kappa_1,\kappa_2\in (0,\delta_j)$ such that
 \begin{align*}
 Q'(\om_j-\delta_j) & = Q'(\om_j) + Q''(\om_j-\kappa_1)(-\delta_j) \\
 Q'(\om_j+\delta_j) &= Q'(\om_j) + Q''(\om_j+\kappa_1)\delta_j.
 \end{align*}
 From Step 1, $Q''(\om_j-\kappa_1)>m_j$ and $Q''(\om_j+\kappa_1)>m_j$, so $Q'(\om_j-\delta_j) < -m_j\delta_j$ and $Q'(\om_j+\delta_j) > m_j\delta_j$.
 According to \eqref{p4e1}, when noise is sufficiently small such that 
\beq
\label{p4e4}
4\alpha \eta(L) {\|E\|_2}< m_j\delta_j/2,
\eeq
we have
$$[\QE(\om_j-\delta_2)]'  < -m_j\delta_j/2 < 0 \text{ and } [\QE(\om_j+\delta_3)]' > m_j\delta_j/2 > 0.$$
$[{\QE}(\om)]'$ is a smooth function so there exists $\hat\om_j \in (\om_j-\delta_j,\om_j+\delta_j)$ such that $[{\QE}(\hat\om_j)]' = 0$ by intermediate value theorem.

 \commentout{
As $\om_j$ is a strict local minimizer of the smooth function $Q(\om)$, there exists $\delta_2,\delta_3 \in (0,\delta_j]$ such that 
$$Q'(\om_j-\delta_2) < 0 \text{ and } Q'(\om_j+\delta_3) > 0.$$
Let $Q'(\om_j-\delta_2) =-p_j$ and $Q'(\om_j+\delta_3) = q_j$. 
}

\item [Step 3] From Step 2 and Step 3 we have obtained an open interval containing $\om_j$: $(\om_j-\delta_j,\om_j+\delta_j)$ such that 
$$Q''(\om) > m_j \text{  and  } [\QE(\om)]''>m_j/2, \ \forall \om\in(\om_j-\delta_j,\om_j+\delta_j).$$
Also there exists $\hat\om_j \in (\om_j-\delta_j,\om_j+\delta_j)$ such that
$$[\QE(\hat\om_j)]' = 0 \text{  and  }[\QE(\hat\om_j)]'' > 0,$$ 
so $\hat\om_j$ is a strict local minimizer of $\QE(\om)$ and $\RE(\om)$.

\item[Step 4] Furthermore %$\hat\om_j \rightarrow \om_j$ as $\|E\|_2 \rightarrow 0$.
\begin{align*}
 0 
 &= [{\QE}(\hat\om_j)]' = Q'(\hat\om_j) +D'(\hat\om_j) 
\\
& = Q'(\om_j) + {Q''}(\xi_j)(\hat\om_j -\om_j) + D'(\hat\om_j), \text{ for some } \xi_j \in (\om_j,\hat\om_j),
\end{align*}
through Taylor expansion of $Q'(\om)$ at $\om = \om_j$.
Since $Q'(\om_j) = 0$,
\beq
\label{p4e5}
|\hat\om_j-\om_j|\displaystyle\min_{\xi \in (\om_j,\hat\om_j)} |Q''(\xi)|
\le
|\hat\om_j-\om_j|  {|Q''(\xi_j)|}\le 
|D'(\hat\om_j)|
\le
4\alpha  \eta(L) {\|E\|_2} 
\eeq
%As $\min_{\xi\in (\om_j,\hat\om_j)} |Q''(\xi)| > m_j>0$, we have $\hat\om_j \rightarrow \om_j$ as $\|E\|_2 \rightarrow 0$.

\end{description}

Proof of Theorem \ref{thmp2} ends here. Next we provide the argument for the validation of Remark \ref{rm41} and Remark \ref{rm43}.

\begin{description}
\item [Remark \ref{rm41}] Suppose noise vector $\ep$ contains i.i.d. random variables of variance $\si^2$. For fixed $M$, $\|E\|_2 \le \|E\|_F^2 = \mathcal{O}(\sigma)$. Since $\min_{\xi \in (\om_j,\hat\om_j)}|Q''(\xi)| > m_j$, we have $|\hat\om_j-\om_j|\rightarrow 0$ as $\sigma \rightarrow 0$ and 
$$|\hat\om_j - \om_j| = \mathcal{O}(\sigma).$$

\item [Remark \ref{rm43}] The asymptotic rate of $\hat\om_j \rightarrow \om_j$ as $M = 2L \rightarrow \infty$ in the case of $q \ge$ 4 RL is discussed in Remark \ref{rm43}. Here we show that condition \eqref{p4e3} and \eqref{p4e4} hold as $M=2L \rightarrow \infty$ under assumption \eqref{eqremark6}. 

The left hand side (l.h.s.) of \eqref{p4e3} scales like $M\sqrt{M\log M}$. On the right hand side (r.h.s.), 
$$m_j = \frac{2\|\PT[\phi^L(\om_j)]'\|_2^2}{L+1} \ge \frac{2C_1}{L+1}\|[\phi^L(\om_j)]'\|_2^2 \sim M^2, \text{ as } M=2L \rightarrow \infty.$$
Therefore the r.h.s. of \eqref{p4e3} grows faster than the l.h.s. and \eqref{p4e3} holds as $M = 2L \rightarrow \infty$.

Next we show that \eqref{p4e4} is also valid as $M \rightarrow \infty$. First 
\begin{align}
\label{p4e6}
Q'''(\om) =&\frac{\langle \PT \plo, \PT[\plo]''' \rangle +  \langle \PT[\plo]''', \PT\plo\rangle }{L+1} \nonumber\\
& +\frac{3\langle \PT [\plo]', \PT[\plo]'' \rangle + 3 \langle \PT[\plo]'', \PT[\plo]'\rangle }{L+1}\nonumber
\end{align}
and then 
\begin{align*}
|Q'''(\om)| & \le \frac{2\|\plo\|_2\|[\plo]'''\|_2 + 6\|[\plo]'\|_2 \|[\plo]''\|_2}{L+1} \\
& = \frac{2\sqrt{L+1}\sqrt{1^6+2^6+\ldots+L^6}+6\sqrt{1^2+2^2+\ldots+L^2}\sqrt{1^4+2^4+\ldots+L^4}}{L+1}
\\
& = \mathcal{O}(L^3) \text{ as } L \rightarrow \infty\\
& = \mathcal{O}(M^3) \text{ as } M=2L \rightarrow \infty.
\end{align*}
In Step 1, for all $\om \in (\om_j-\delta_j,\om_j+\delta_j)$, $|Q''(\om)-Q''(\om_j)|$ can be as large as $m_j = \mathcal{O}(M^2)$. Meanwhile $Q''(\om) = Q''(\om_j) + (\om-\om_j)Q'''(\kappa)$ for some $\kappa \in (\om_j,\om)$ and then $|Q''(\om)-Q''(\om_j)| = |\om-\om_j||Q'''(\kappa)|$. Since $|Q''(\om)-Q''(\om_j)|$ can be as large as $\mathcal{O}(M^2)$ and $Q'''(\kappa) \le \mathcal{O}(M^3)$, $|\om-\om_j|$ can be as large as $\mathcal{O}(1/M)$. In other words, $\delta_j = \mathcal{O}(1/M)$ in Step 1.

In \eqref{p4e4}, the l.h.s. scales like $\sqrt{M\log M}$ and the r.h.s. $= m_j\delta_j/2=\mathcal{O}(M^2/M)= \mathcal{O}(M)$ as $M = 2L \rightarrow \infty$. As a result, \eqref{p4e4} holds while $M = 2L \rightarrow \infty$ under assumption \eqref{eqremark6}.
\end{description}
\end{proof}

\commentout{
\subsection{Proof of Theorem \ref{lemma3}}
\label{app3}
\begin{proof}
Let $\phi^L(\om) = \Phi^L c + \PT\phi^L(\om)$. Due to the fact that $\lan \Phi^L c,\PT\phi^L(\om)\ran  = 0$,
\begin{align*}
\|\Phi^L c\|_2^2 & = \re\lan \Phi^L c, \phi^L(\om)\ran  \\
&= \frac{\|\Phi^L c + \phi^L(\om)\|_2^2-\|\Phi^L c - \phi^L(\om)\|_2^2}{4} \\
& \le \frac{\smax^2(\Phi^L_\om)-\smin^2(\Phi^L_\om)}{4}(1+\|c\|_2^2) \\
& \le  \frac{\smax^2(\Phi^L_\om)-\smin^2(\Phi^L_\om)}{4}\Big(1+\frac{\|\Phi^L c\|_2^2}{\smin^2(\Phi^L)}\Big).
\end{align*}
On the condition of \eqref{lemma31} we have
\begin{align*}
\Big(1-\frac{\smax^2(\Phi^L_\om)-\smin^2(\Phi^L_\om)}{4\smin^2(\Phi^L)}\Big) \|\Phi^L c\|_2^2 
\le
  \frac{\smax^2(\Phi^L_\om)-\smin^2(\Phi^L_\om)}{4}
\cdot 
\frac{\|\phi^L(\om)\|_2^2}{L+1}.
\end{align*}
Furthermore, if \eqref{lemma32} is satisfied,
\begin{align*}
\|\Phi^L c\|_2^2
\le
 \frac{\smax^2(\Phi^L_\om)-\smin^2(\Phi^L_\om)}{4(L+1)}
\Big(1-\frac{\smax^2(\Phi^L_\om)-\smin^2(\Phi^L_\om)}{4\smin^2(\Phi^L)}\Big)^{-1} 
\|\phi^L(\om)\|_2^2.
\end{align*}
Therefore
$$\frac{\|\PT\plo\|_2}{\|\plo\|_2} 
\ge
 \sqrt{1-\frac{\smax^2(\Phi^L_\om)-\smin^2(\Phi^L_\om)}{4(L+1)}
\Big(1-\frac{\smax^2(\Phi^L_\om)-\smin^2(\Phi^L_\om)}{4\smin^2(\Phi^L)}\Big)^{-1} 
}.$$
\end{proof}
}

\section{Proof of Theorem \ref{thmd1}}
\label{appd}

\begin{proof}
We partition  $\supp$ into the subsets  
$
\supp_m  = \{\om_{m+kR}: k\in \ZZ\}, m=1,\dots, R,$
each of which  satisfies the gap condition: 
$d(\om_j,\om_l) > R \rho \ \text{ for all } \om_j,\om_l \in \supp_m \text{ with } j\neq l.$ According to Theorem \ref{thm3},
$$\frac{1}{L} \sum_{k=0}^L \left| \sum_{\om_j\in\supp_m\cap [0,1)} \bc_j e^{-2\pi i k\om_j}\right|^2 
\le
B(R\rho,L) \sum_{\om_j\in\supp_m\cap [0,1)} |\bc_j|^2, \ m = 1,\ldots,R.$$
As $|z_1+\ldots+z_R|^2 \le R(|z_1|^2+\ldots+|z_R|^2)$,
\begin{align*}
& \frac{1}{L} \sum_{k=0}^L \left| \sum_{\om_j\in\supp\cap [0,1)} \bc_j e^{-2\pi i k\om_j}\right|^2 
 = \frac{1}{L} \sum_{k=0}^L \left| \sum_{m=1}^R \sum_{\om_j\in\supp_m\cap [0,1)} \bc_j e^{-2\pi i k\om_j}\right|^2 
 \\
 &
\leq {R \over L} \sum_{m=1}^R  \sum_{k=0}^L \left|\sum_{\om_j\in\supp_m\cap [0,1)} \bc_j e^{-2\pi i k\om_j}\right|^2 
\leq R \sum_{m=1}^R  \sum_{\om_j\in\supp_m\cap [0,1)} B(R\rho,L) |\bc_j|^2
\\
&
\leq  B(R\rho,L) R\sum_{m=1}^R  \sum_{\om_j\in\supp_m\cap [0,1)}  |\bc_j|^2
 =  B(R\rho,L) R\|\bc\|_2^2.
\end{align*}
\end{proof}

\end{appendices}

\bibliographystyle{plain}	% (uses file "plain.bst")
\bibliography{myref}		% expects file "myrefs.bib"
 
\end{document}